\def\blind{0}
\newcommand{\supp}{\text{support}}
\newcommand{\x}{\mathbf{x}}
\newcommand{\xp}{\mathbf{x_0}}
\newcommand{\Tip}{\frac{3n}{kC^3\sqrt{\log C}}}
\newcommand{\Test}{\frac{3n}{8k'C\log C}}
\newcommand{\test}{3n/(8k'C\log C)}
\newcommand{\cx}{c_{\ell}(\mathbf{x})}
\newcommand{\cxo}{c_{\ell_1}(\mathbf{x})}
\newcommand{\cxt}{c_{\ell_2}(\mathbf{x})}
\newcommand{\chat}{\hat{c}_{\ell}(\mathbf{x})}
\newcommand{\Zmax}{Z_{\text{max}}}
\newcommand{\Lap}{{\mathrm{TLap}}}
\newcommand{\Zl}{Z_{\ell}}
\newcommand{\Bl}{B_{\ell}}
\newcommand{\main}{{\tt interior-point-main}~}
\newcommand{\est}{{\tt estimate-first-moment}~}
\newcommand{\estarg}{{\tt estimate-first-moment}($\x$;$\eps$,$\delta$,$C$)~}
\newcommand{\ip}{{\tt find-interior-point}~}
\newcommand{\iparg}{{\tt find-interior-point}($\x$;$\eps$,$\delta$,$C$,$\hat{m}$) ~}
\newcommand{\up}{\left(2k'C\sqrt{\log C}\right)}
\newcommand{\q}{\mathbf{q}}
\newcommand{\csq}{c_{\ell^*}(\q)}
\newcommand{\cshatq}{\hat{c}_{\ell^*}(\q)}
\newcommand{\lb}{kC^3\sqrt{\log C}\left(\log(2/\beta) + \frac{16\ln(16/\delta)}{\eps}\right)}
\newcommand{\df}[1]{\textbf{\emph{#1}}}
\def\ham{\ensuremath{\mathrm{\mathbf{Ham}}}}
\title{Differentially Private Medians and \\ Interior Points for Non-Pathological Data}
    \author{Anonymized for Submission}
    \author{
        Maryam Aliakbarpour\thanks{Khoury College of Computer Sciences, Northeastern University and Department of Computer Science, Boston University.  Supported by NSF awards CNS-2120667, CNS-2120603, CCF-1934846, and BU’s Hariri Institute for Computing. \url{m.aliakbarpour@northeastern.edu}} \and 
        Rose Silver\thanks{Khoury College of Computer Sciences, Northeastern University. Supported by NSF awards CCF-1750640 and CNS-2120603.  \url{silver.r@northeastern.edu}} \and 
        Thomas Steinke\thanks{Google. \url{steinke@google.com}} \and 
        Jonathan Ullman\thanks{Khoury College of Computer Sciences, Northeastern University.  Supported by NSF awards CCF-1750640 and CNS-2120603.  \url{jullman@ccs.neu.edu}}
    }
\date{}
\begin{document}

\maketitle

\begin{abstract}
    We construct differentially private estimators with low sample complexity that estimate the median of an arbitrary distribution over $\R$ satisfying very mild moment conditions.  Our result stands in contrast to the surprising negative result of Bun et al. (FOCS 2015) that showed there is no differentially private estimator with any finite sample complexity that returns any non-trivial approximation to the median of an arbitrary distribution.
\end{abstract}




\section{Introduction}

A statistical estimator is an algorithm that takes data drawn from an unknown distribution as input and tries to learn something about that distribution.  While the input data is only a conduit for learning about the distribution, many statistical estimators also reveal a lot of information that is specific to the input data, which raises concerns about the \emph{privacy} of people who contributed their data.  In response, we can try to design estimators that are \emph{differentially private (DP)}~\cite{DworkMNS06}, which ensure that no attacker can infer much more about any person in the input data than they could have inferred in a hypothetical world where that person's data had never been collected.

Differential privacy is a strong constraint that imposes significant costs even for very simple statistical estimation tasks.  In this paper we focus on two such tasks: \emph{interior point estimation} and \emph{median estimation}.  In the interior point problem, we have a distribution $P$ over $\R$, and our goal is simply to output some point $y$ with
\begin{equation}
    \inf \supp(P) \leq y \leq \sup \supp(P).
    \label{eq:interior_point}
\end{equation}
There is a trivial estimator for solving the interior point problem---draw a single sample from $P$ and output it---but this estimator is clearly not private.  More generally, we can try to find an $\alpha$-approximate median of the distribution, which is a point $y$ such that
\begin{equation}
    \tfrac{1}{2} - \alpha \leq \Pr_{x \gets P}[x \leq y] \leq \tfrac{1}{2} + \alpha.
\end{equation}
There is also a simple estimator for computing an approximate median---draw $O(1/\alpha^2)$ samples and return the median of the samples---but this estimator also fails to be private.  While these problems are nearly trivial to solve without a privacy constraint, a remarkable result of Bun, Nissim, Stemmer, and Vadhan~\cite{BunNSV15} showed that there is no differentially private estimator that takes any finite number of samples and outputs even just an interior point of an arbitrary distribution! Since the interior point problem is a special case of finding an approximate median, learning threshold functions, learning halfspaces, and more, this negative result has far reaching implications.

In light of this negative result, there have been two main approaches to privately solving the interior point problem and its generalizations.  The first is to assume the data comes from a finite domain, such as the integers $[T] := \{1,2,\dots,T\}$, in which case the optimal sample complexity is now known to be $\smash{n=\tilde{O}(\log^* T)}$~\cite{BeimelNS13b,BunNSV15,alon2019private,bun2018composable,kaplan2020privately,cohen2022optimal}.  The primary drawback of this approach is that a lot of data does not live in a discrete domain, so it has to be rounded to a discrete domain. This rounding requires both care and additional knowledge by the user of the algorithm so as to not destroy too much information. However, rounding can be cumbersome, and additional knowledge may be difficult to obtain when the data is private.

The second approach, which is the approach we adopt in this paper, is to assume that the distribution satisfies some additional properties that allow us to bypass the Bun et al.\ lower bound.  Results along this line have considered a range of assumptions such as Gaussian distributions~\cite{KarwaV18}, distributions with positive density in a neighborhood around the median~\cite{DworkL09,tzamos2020optimal,brunel2020propose}, and distributions with bounded density~\cite{haghtalab2020smoothed}, which hold for common parametric families of continuous distributions.  While these assumptions are natural, they are quite restrictive, and there is a huge gap between distributions satisfying these assumptions and the highly contrived distributions constructed in the Bun et al.\ lower bound.

In this work we show that a very weak, and qualitatively different assumption---\emph{bounded normalized variance}---is sufficient to privately find an interior point.  Specifically, a distribution $P$ satisfies $C$-bounded normalized variance if 
\begin{equation}
\frac{\E_{X\leftarrow P}[|X-\mu|^2]}{\E_{X\leftarrow P}[|X-\mu|]^2} \le C.
\label{eq:bsv}
\end{equation}
for some constant $C \geq 1$.  This assumption is satisfied by essentially all natural parametric families of distributions, both discrete and continuous, and thus we consider it quite weak.  
\begin{thm}[Informal]
There is an $(\eps,\delta)$-differentially private algorithm that takes $n = O_{C,\eps,\delta}(1)$ samples from an arbitrary distribution $P$ over $\R$ satisfying $C$-bounded normalized variance \eqref{eq:bsv} and, with high probability, returns an interior point \eqref{eq:interior_point} of $P$.  Here $O_{C,\eps,\delta}(1)$ denotes that the number of samples depends only on the privacy parameters $\eps,\delta>0$ and the bounded normalized variance parameter $C$. 
\end{thm}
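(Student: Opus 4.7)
The plan is to exploit the bounded normalized variance assumption to reduce the problem to a differentially private heavy-hitters task on a discretized grid, once the natural scale of the distribution has been estimated privately. First, I would establish a structural lemma: letting $\sigma = \E_{X \sim P}[|X - \mu|]$ denote the mean absolute deviation, Markov's inequality applied to $(X-\mu)^2$ together with \eqref{eq:bsv} shows that at least $3/4$ of the mass of $P$ lies in the window $[\mu - 2\sqrt{C}\sigma,\, \mu + 2\sqrt{C}\sigma]$. Covering this window by $O(\sqrt{C})$ consecutive buckets of width $\Theta(\sigma)$, pigeonhole guarantees a bucket of mass $\Omega(1/\sqrt{C})$. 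Any point in such a bucket is automatically an interior point of $P$, so the problem reduces to privately identifying a bucket with large empirical count on the \emph{infinite} grid of width $\Theta(\sigma)$ over $\R$. This is a standard stable-histogram / private heavy-hitters task, solvable with $n = O_{C,\eps,\delta}(1)$ samples provided the bucket width is known.

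I expect the main obstacle to be choosing this bucket width, which requires a private estimate of the unknown scale $\sigma$. We cannot just release a private quantile of $|X-\mu|$ because $\mu$ is unknown and the data live in an unbounded domain, precisely the setting exploited by the Bun--Nissim--Stemmer--Vadhan lower bound. My plan is to pass to the translation-invariant differences $D_{ij} = |X_i - X_j|$ across disjoint pairs of samples. A short moment calculation using \eqref{eq:bsv}, together with Paley--Zygmund applied to $D_{ij}$, shows that $\E[D_{ij}] = \Theta(\sigma)$ and that an $\Omega_C(1)$ fraction of the $D_{ij}$ lie in an interval $[\sigma/c_1,\, c_2\sqrt{C}\,\sigma]$ spanning only $O(\log C)$ dyadic scales. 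I would then run a stable histogram on the empirical $D_{ij}$ values against the \emph{dyadic} buckets $[2^i, 2^{i+1})$ indexed by $i \in \mathbb{Z}$: since one of the $O(\log C)$ dyadic bins meeting this interval is heavy, the private heavy-hitters algorithm releases such a bin using $O_{C,\eps,\delta}(1)$ samples even though the index $i$ is a priori unbounded. The midpoint of that bin yields $\hat\sigma \in [\Omega(\sigma), O(\sigma)]$, realized by \estarg.

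Finally, I would compose the two phases in \mainarg: call \estarg at privacy parameters $(\eps/2, \delta/2)$ to obtain $\hat\sigma$, then call \iparg with bucket width $\Theta(\hat\sigma)$ on the infinite grid and output the midpoint of the released heavy bucket. Privacy follows by basic composition of the two differentially private subroutines, correctness follows by a union bound over the scale-estimation and heavy-bucket events, and the final sample complexity depends polynomially on $C$, $1/\eps$, and $\log(1/\delta)$, matching the claimed $O_{C,\eps,\delta}(1)$. The remaining technical care is quantitative: making the Paley--Zygmund and Markov estimates sharp enough that the dyadic histogram step produces a useful $\hat\sigma$, and tracking the explicit dependence on $C$ through both subroutines.
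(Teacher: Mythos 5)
There is a genuine gap in the final step. Your plan identifies a \emph{single} heavy bucket of width $\Theta(\hat\sigma)$ and outputs its midpoint, claiming that ``any point in such a bucket is automatically an interior point of $P$.'' That claim is false: a heavy bucket only tells you that $\supp(P)$ intersects the bucket, not that the point you output is sandwiched by the support. For example, take $P$ to place half its mass at $0.9w$ and half at $1.5w$, where $w$ is your bucket width; this distribution is $O(1)$-bounded (here $\E[|X-\mu|]\approx 0.3w$, consistent with $w=\Theta(\hat\sigma)$ up to the constants your scale estimate allows), the bucket $[0,w)$ is heavy, yet its midpoint $w/2$ lies strictly below $\inf\supp(P)=0.9w$ and is not an interior point. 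This is exactly the failure mode the assumption must be used to rule out: the paper's algorithm insists on finding \emph{two} distinct heavy bins and outputs a point lying between them, so that each bin certifiably contains at least one sample and any separating point is an interior point of the dataset (hence of $P$). Making that work requires two things your sketch does not supply: (i) the bin width must be taken a sufficiently small multiple of the estimated scale, roughly $\hat m/\mathrm{poly}(C)$ rather than $\Theta(\hat\sigma)$, and (ii) one must prove that a $C$-bounded distribution necessarily puts $\Omega(1/C)$ mass both above $\mu+\E[|X-\mu|]/\mathrm{poly}(C)$ and below $\mu-\E[|X-\mu|]/\mathrm{poly}(C)$ within a window of radius $O(C\,\E[|X-\mu|])$ (the paper's Lemma on two-sided mass), which is what forces two separate bins to exceed the noisy threshold. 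Pigeonhole over $O(\sqrt C)$ buckets gives only one heavy bucket and cannot distinguish the ``all mass in one bin'' case, which is precisely where a midpoint output fails.

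A secondary, smaller issue is in the scale-estimation phase. Your pairwise-difference idea matches the paper's \est (disjoint pairs, dyadic bins, truncated-Laplace histogram), but concluding $\hat\sigma=\Theta(\sigma)$ from ``the released heavy bin'' is not automatic: heavy dyadic bins can exist at scales far below $\sigma$ (e.g.\ when much of the mass sits in a few nearby atoms), and you must also rule out surviving bins at scales far above $\sigma$. The paper handles this by returning the \emph{largest} surviving bin and proving, via a Chebyshev-type upper tail bound on $Q=|X-X'|$ plus Chernoff, that no bin beyond $O(C\sqrt{\log C})\,\E[Q]$ survives, yielding a one-sided guarantee $\sigma\le\hat m\le O(C\sqrt{\log C})\,\sigma$ rather than a two-sided constant-factor estimate; the downstream interior-point analysis is then written to tolerate this asymmetric, $C$-dependent slack. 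You would need an analogous argument specifying which heavy bin is released and a corresponding tail bound before the second phase's bucket width is well defined.
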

Our result shows that the impossibility of privately finding an interior point \cite{BunNSV15} is extremely brittle, and can be circumvented for essentially all reasonable data distributions.

In the worst-case setting, finding an approximate median can actually be reduced to finding an interior point, however this reduction does not preserve the property of bounded normalized variance, so we cannot use it directly to obtain a private median algorithm. Moreover, we will see that we can turn any distribution on a bounded support into a distribution satisfying bounded normalized variance without changing the median, so our assumption is not sufficient to circumvent the lower bound. Nonetheless, we show that a slight (and necessary) strengthening of this assumption is enough to find an approximate median.  Intuitively, this assumption is \emph{bounded normalized variance around the median}, which means that bounded normalized variance holds even if we condition on the part of $P$ that lies between the $\frac12-\alpha$ and $\frac12+\alpha$ quantiles.
\begin{thm}[Informal] \label{thm:main-intro-median}
There is an $(\eps,\delta)$-differentially private algorithm that takes $n = O_{C,\eps,\delta,\alpha}(1)$ samples from an arbitrary distribution $P$ over $\R$ satisfying $C$-bounded normalized variance around the median, and, with high probability, returns an $\alpha$-approximate median of $P$. Here $O_{C,\eps,\delta,\alpha}(1)$ denotes that the number of samples is an absolute constant depending on $C,\eps,\delta,\alpha$.
\end{thm}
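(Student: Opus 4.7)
My approach is to reduce $\alpha$-approximate median estimation to the interior point problem on a carefully conditioned distribution, and then apply the preceding informal theorem. Let $I := [q_{1/2 - \alpha}, q_{1/2 + \alpha}]$, where $q_\tau$ denotes the $\tau$-quantile of $P$, and let $P^*$ be $P$ conditioned on $X \in I$. By the assumption of $C$-bounded normalized variance around the median, $P^*$ has $C$-bounded normalized variance. Any point in $\supp(P^*) \subseteq I$ is an $\alpha$-approximate median of $P$, so any interior point of $P^*$ is a valid output. It therefore suffices to simulate access to $P^*$ from samples of $P$, privately.

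To that end, I would draw $N = \Theta(n/\alpha)$ iid samples $X_1, \ldots, X_N$ from $P$, where $n = O_{C,\eps,\delta}(1)$ is the sample complexity of the interior-point algorithm from the previous theorem. Define $S$ to be the \emph{empirical central block}, i.e., the samples at sorted-list ranks in $\bigl((\tfrac12 - \tfrac{\alpha}{2})N,\ (\tfrac12 + \tfrac{\alpha}{2})N\bigr]$, a set of size $\Theta(\alpha N) \ge n$. By the Dvoretzky--Kiefer--Wolfowitz inequality, for $N$ a large enough constant (depending on $\alpha$), every sample in $S$ lies in $I$ with high probability. Run the interior-point algorithm on $S$ and output its result.

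For privacy, I would use a group-privacy argument: two neighboring input datasets (differing in one sample from $P$) yield empirical central blocks that differ in $O(1)$ samples, since inserting one point shifts sorted ranks by at most one and can swap a constant number of samples across the boundary of the block. Since the interior-point algorithm is $(\eps,\delta)$-DP on its input, applying group privacy with constant group size gives $(O(\eps), O(\delta))$-DP overall, which can be rescaled back to $(\eps,\delta)$-DP by tightening the internal parameters by a constant factor. The total sample complexity is $N = O_{C,\eps,\delta,\alpha}(1)$, as required.

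The most delicate step is correctness: the elements of $S$ are order statistics of samples from $P$, not iid samples from $P^*$, so the correctness guarantee of the interior-point algorithm does not apply verbatim. I would handle this via a coupling: conditioned on the index set $T = \{i : X_i \in I\}$, the family $\{X_i\}_{i \in T}$ is genuinely iid from $P^*$, and by DKW-style concentration $S \subseteq \{X_i\}_{i \in T}$ with high probability. Conditioned further on the ranks of $S$ within $T$, the elements of $S$ have the same joint distribution as a uniformly chosen subset of $|S|$ samples from this iid collection, so the distribution of $S$ is close in total variation to an iid draw from $P^*$. Making this coupling quantitative, and confirming that both the correctness and privacy guarantees of the interior-point algorithm transfer to this nearly-iid input (possibly requiring a mild robustness argument for that algorithm), is where I expect the bulk of the technical work to lie.
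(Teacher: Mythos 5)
Your reduction has the right shape (trim the data to an empirical central block, feed it to the interior-point algorithm), and your handling of privacy and of the ``order statistics are not i.i.d.'' issue are both essentially workable: conditioning on the values of the two empirical quantile cutoffs makes the interior samples genuinely i.i.d.\ from a conditional slice of $P$, and changing one input point changes at most one point of the block. But there is a genuine gap in the correctness step, and it is not a technicality. Your block consists of the samples at ranks in $((\tfrac12-\tfrac{\alpha}{2})N,(\tfrac12+\tfrac{\alpha}{2})N]$, so (after conditioning on the cutoffs) it is an i.i.d.\ sample from $P$ restricted to roughly the middle $\alpha$-slice, i.e.\ a \emph{proper sub-slice} occupying about half of the interval $I$. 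The hypothesis only says that $P^*$, the full middle $2\alpha$-slice, is $C$-bounded; a constant-fraction sub-slice of a $C$-bounded distribution need not be $O(C)$-bounded, or bounded at all. Concretely, take $P^*$ with mass $1/4$ at $-1$, mass $1/4$ at $+1$, and the remaining mass an arbitrary distribution on $[-\tfrac12,\tfrac12)$: this $P^*$ is $O(1)$-bounded, yet your central block is (essentially) drawn from the arbitrary inner distribution, for which no finite-sample private interior-point algorithm exists by the Bun et al.\ lower bound. So the step ``the distribution of $S$ inherits $C$-boundedness, hence the interior-point guarantee applies'' fails, and no coupling or robustness argument can rescue it.

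The fix, which is what the paper does, is to trim only a \emph{tiny} margin rather than half the slice: take the middle $2\alpha-\Theta(\alpha/C)$ fraction of the empirical data (the paper uses $k=1024C\alpha^{-1}$ and ranks between $\tfrac12-\alpha+\tfrac1{2k}$ and $\tfrac12+\alpha-\tfrac1{2k}$), chosen just large enough that, with high probability, the cutoffs land strictly inside $I$ but within a $1/k$-quantile of its endpoints. One then needs the key structural lemma you are missing: if a $C$-bounded distribution is conditioned on lying above its $1/k_1$ quantile and below its $1-1/k_2$ quantile with $k_1,k_2\ge \Omega(C)$, the result is still $O(C)$-bounded (the paper's Lemmas~\ref{lem:bignice} and~\ref{lem:bignice2}). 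With that lemma, the conditional slice fed to the interior-point routine is $O(C)$-bounded and the rest of your plan goes through; without it, the argument collapses at exactly the point where the distributional assumption must be transferred to the trimmed data.
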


We note that there are many other richer tasks, such as privately learning halfspaces and privately finding a point in the convex hull in $\R^d$, where the best private algorithms for worst-case distributions are based on reductions to interior point or closely related problems that can be reduced to interior point~\cite{beimel2019private}.  Theorem~\ref{thm:main-intro-median} suggests that identifying approximate mild distributional assumptions to make these problems tractable is a fruitful direction.

\subsection{Technical Overview}

At a high level, our interior point algorithm follows the approach taken by Karwa and Vadhan~\cite{KarwaV18} for finding an approximate median of a Gaussian distribution, but with a much more general analysis that allows us to rely on only weak assumptions about the distribution.  First, suppose that we know the first central absolute moment of the distribution, and have rescaled the distribution so that
$$
\E_{X\leftarrow P}[|X-\mu|] = 1 \textrm{ and } \E_{X\leftarrow P}[|X-\mu|^2] \le C,
$$
where $\mu = \E[X]$. In this case, by Chebyshev's inequality, we know that most of the probability mass for $P$ is not too far from the mean $\mu$. Moreover, it cannot be that the mass is almost entirely contained in a single sub-interval of size $\le 1/2$, as this (combined with the fact that outliers in $P$ are relatively rare) would imply that $\E_{X \leftarrow P}[|X - \mu|] < 1$. Thus, if we divide the real line into an infinite set of intervals
$$
\dots, [-1, -\tfrac{1}{2}), [-\tfrac{1}{2}, 0), [0,\tfrac{1}{2}), [\tfrac{1}{2}, 1), \dots,
$$
then there will be at least two distinct intervals that contain a significant amount of mass (more than $1/\poly(C)$ mass).  Using standard techniques for computing differentially private histograms, we can identify two of these intervals privately, and any boundary between them must be an interior point of the distribution.

The next step is to resolve the fact that we do not know the first central absolute moment $\E_{X\leftarrow P}[|X-\mu|]$, and we need to privately estimate this quantity up to a small multiplicative factor.  To do this, we take a set of $2n$ samples $x_1,\dots,x_{2n}$ and create a new set of $n$ samples $y_i = |x_{2i-1} - x_{2i}|$.  Note that each of these samples $y_i$ is sampled as $|X - X'|$ where $X,X'$ are drawn independently from $P$.  We will use the $y_i$s to approximate $\E[|X-X'|]$, which is in turn a constant-factor approximation of $\E[|X-\mu|]$.  Specifically, we divide $[0,\infty)$ into the infinite set of intervals
$$
\dots, [\tfrac{1}{8},\tfrac{1}{4}), [\tfrac{1}{4},\tfrac{1}{2}), [\tfrac{1}{2},1), [1,2), [2,4), [4,8), \dots
$$
Using the bounded normalized variance condition, we argue that the largest interval that contains a significant amount of mass is a good approximation to the first central absolute moment (up to a $\poly(C)$ factor). As before, this largest interval can be identified privately via techniques for computing differentially private histograms. Chaining this algorithm with the previous algorithm gives us our complete algorithm for finding an interior point of a distribution with bounded normalized variance.

We might hope that bounded normalized variance is sufficient to also find a good $\alpha$-approximate median for small $\alpha$, but that is actually false. To see why, first note that the Bun et al.~\cite{BunNSV15} lower bound says that there is no differentially private algorithm that can find an interior point of an arbitrary distribution, even if the distribution is supported on some bounded interval, such as $[-\frac12,\frac12)$.  Given an arbitrary distribution $P$ on this interval, we can create a new distribution $P'$ by adding mass at $-1$ with probability $1/4$ and mass at $+1$ with probability $1/4$.  A simple calculation shows that this new distribution will have $O(1)$-bounded normalized variance.  Moreover, any $1/5$-approximate median for $P'$ will be an interior point of $P$.  Since we can easily simulate access to $P'$ using access to $P$, any private algorithm for computing an approximate median of a distribution with bounded normalized variance can be used to privately compute an interior point of an arbitrary distribution, which is ruled out by Bun et al.~\cite{BunNSV15}.

Thus, our algorithm for finding an $\alpha$-approximate median requires a stronger assumption on the distribution $P$.  First, observe that an $\alpha$-approximate median of $P$ is just any interior point of the distribution $P_{\alpha}$ that consists only of the middle $2\alpha$ \emph{slice} of the distribution $P$.  That is, $P_{\alpha}$ is the distribution $P$ restricted to the space between the $\frac12 - \alpha$ and $\frac12 + \alpha$ quantiles of $P$.  We now assume that the distribution $P_{\alpha}$ has bounded normalized variance.  Intuitively, our algorithm works by finding an interior point of the distribution $P_{\alpha}$ by using a subset of our samples from $P$ to simulate samples from $P_{\alpha}$, but we cannot exactly generate samples from $P_{\alpha}$ without knowing the quantiles of the distribution $P$ itself, which is exactly what we are trying to estimate.  To get around this issue, we instead take a collection of $n$ samples, then sort them to obtain $x_{(1)} \leq x_{(2)} \leq \cdots \leq x_{(n)}$, and then use the middle $(1 - 1/k) 2\alpha n$ samples $\xp$ as an approximation of samples from $P_{\alpha}$ (where $k$ is some well-chosen quantity). The $1 - 1/k$ term is to ensure that the samples in $\xp$ come from \emph{within} $P_{\alpha}$, instead of being in a situation where a small number of them may come from outside of it. While this collection $\xp$ of samples does not quite have the same distribution as i.i.d.~samples from $P_{\alpha}$, we are nonetheless able to argue that they come from a distribution with $O(C)$-bounded normalized variance. Thus our interior point algorithm will succeed in identifying an interior point of $P_{\alpha}$, and thus in identifying an $\alpha$-approximate median of $P$.

\section{Preliminaries}

Let $P$ be a data distribution. We indicate that a data point $x$ is drawn from $P$ by writing $x \leftarrow P$. We indicate that $\x = (x_1,\ldots,x_n)$ is a set of $n$ i.i.d. data points drawn from distribution $P$ by writing $\x \gets P^n$. We refer to $\x$ as a \df{dataset}. We use $\mathbb{I}_{\varphi}$ to denote the indicator random variable for the property $\varphi$.

\subsection{Background}
We say that two datasets $\x$ and $\x'$ are \df{neighboring datasets} if they differ in at most one data point, i.e. $\ham(\x,\x')=1$ where $\ham$ denotes the Hamming distance.
\begin{defn}
Let $\eps > 0$, $0 < \delta < 1$. An algorithm $\mathcal{A}:\mathcal{X}^n \rightarrow \mathcal{Y}$ is $(\eps,\delta)$-differentially private if, for every $E \subseteq \mathcal{Y}$ and neighboring datasets $\mathbf{x},\mathbf{x'} \in \mathcal{X}^n$, $\mathcal{A}$ satisfies
$\Pr[\cA(\mathbf{x}) \in E] \le e^{\eps} \Pr[\cA(\mathbf{x'}) \in E] + \delta.$
\label{def:dp}
\end{defn}

\begin{defn}
Let $f: \mathcal{X}^n \rightarrow \mathbb{R}^d$ be a function. The \df{global sensitivity} $\Delta$ of $f$ is defined as $$\Delta \coloneqq \sup_{\mathbf{x},\mathbf{x'} \in \mathcal{X}^n \atop \ham(\x,\x') = 1} ||f(\mathbf{x}) - f(\mathbf{x'})||_{1}.$$
\end{defn}

\paragraph{The truncated Laplace mechanism.} We make use of the standard approach of adding noise proportional to the global sensitivity~\cite{DworkMNS06} to ensure differential privacy.  Since we are interested in adding noise to a histogram with infinitely many bins, we need to make use of the \emph{truncated Laplace distribution} rather than the standard Laplace distribution.  Given parameters $\lambda, \Zmax$, we define the truncated Laplace distribution $\Lap(\lambda,\Zmax)$ over the support $[-\Zmax,\Zmax]$ with density $f(z) \propto e^{-|z|/\lambda}$.


\begin{restatable}{lem}{lemTLap}
    [Truncated Laplace Mechanism] 
    \label{lem:lap}
    Let $f:\mathcal{X}^n\rightarrow \R$ be a function with global sensitivity $\Delta$. For every $\eps, \delta \in (0,1)$, if $\Zmax$ is at least $\Delta \ln(4/\delta)/\eps$, then the truncated Laplace mechanism  $M(\x) \coloneqq f(\x) + Z,$ where $Z \leftarrow \Lap(\frac{\Delta}{\eps},\Zmax)$ is $(\eps,\delta)$-differentially private.
\end{restatable}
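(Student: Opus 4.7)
The plan is to follow the standard proof template for noise-addition privacy mechanisms. Fix neighboring datasets $\x, \x'$, write $a = f(\x)$ and $b = f(\x')$ so that $|a - b| \leq \Delta$, and without loss of generality assume $a \geq b$. Fix a measurable target set $E \subseteq \R$, and split it as $E = E_1 \cup E_2$ with $E_1 = E \cap [b - \Zmax, b + \Zmax]$ (the part where $b + Z$ can land) and $E_2 = E \setminus [b - \Zmax, b + \Zmax]$ (the tail piece). On $E_1$ I would prove a pointwise multiplicative $e^\eps$ bound between the densities, while on $E_2$, where the $b$-distribution has no mass, I would bound the $a$-distribution's mass directly by $\delta$.

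For the ratio on $E_1$, write $g(z) = \frac{1}{C_\lambda} e^{-|z|/\lambda}$ for the density of $Z$, with $\lambda = \Delta/\eps$ and normalizing constant $C_\lambda = 2\lambda(1 - e^{-\Zmax/\lambda})$. At $t \in E_1$, the density of $b + Z$ equals $g(t - b)$, while the density of $a + Z$ is either zero (when $|t - a| > \Zmax$) or equals $g(t - a)$. Whenever the latter is nonzero, the ratio is $e^{(|t - b| - |t - a|)/\lambda} \leq e^{|a - b|/\lambda} \leq e^\eps$ by the reverse triangle inequality. Integrating gives $\Pr[a + Z \in E_1] \leq e^\eps \Pr[b + Z \in E_1]$.

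For the tail piece $E_2$, since $a \geq b$ and $Z \geq -\Zmax$, we have $a + Z \geq b - \Zmax$, so $a + Z \in E_2$ forces $a + Z > b + \Zmax$, i.e.\ $Z > \Zmax - (a - b) \geq \Zmax - \Delta$. A direct integration of the truncated Laplace density gives
\[
\Pr[Z \in (\Zmax - \Delta, \Zmax]] \;=\; \frac{e^{-\Zmax/\lambda}\bigl(e^\eps - 1\bigr)}{2\bigl(1 - e^{-\Zmax/\lambda}\bigr)},
\]
and the hypothesis $\Zmax \geq \Delta \ln(4/\delta)/\eps = \lambda \ln(4/\delta)$ yields $e^{-\Zmax/\lambda} \leq \delta/4$. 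Combined with the crude estimates $e^\eps - 1 \leq 2$ and $1 - \delta/4 \geq 3/4$, valid for $\eps, \delta \in (0,1)$, this bounds the tail by $\delta$; since $\Pr[b + Z \in E_2] = 0$, summing the two pieces gives the $(\eps,\delta)$-guarantee.

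I do not expect a genuine obstacle here; the only care point is verifying that the constant $4$ in $\ln(4/\delta)$ is large enough to absorb both the $e^\eps - 1$ factor in the tail numerator and the normalization correction $1/(1 - e^{-\Zmax/\lambda})$ in the denominator. If these constants came out tight, the sharper bound $e^\eps - 1 \leq e\eps$ (for $\eps \leq 1$) gives comfortable slack.
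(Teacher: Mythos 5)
Your proposal is correct and follows essentially the same route as the paper's proof: the same split of the target set into the part inside the interval $[f(\x')-\Zmax,\,f(\x')+\Zmax]$ (handled by the pointwise $e^{\eps}$ density-ratio bound via the triangle inequality) and the tail part (handled, after the same symmetry-based reduction to $f(\x)\ge f(\x')$, by computing $\Pr[Z\in(\Zmax-\Delta,\Zmax]]=\frac{e^{-\Zmax/\lambda}(e^{\eps}-1)}{2(1-e^{-\Zmax/\lambda})}\le\delta$ under $\Zmax\ge\lambda\ln(4/\delta)$). The constant checks you flag go through with room to spare, exactly as in the paper.
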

While the above lemma is  considered folklore in the field, we include a proof in Section~\ref{sec:TLap} for the sake of completeness. A standard application of this mechanism is computing histograms, possibly in infinite dimension (see e.g.~\cite{Vadhan16}).
\begin{lem}[Differentially Private Histograms]
    Let $\cX$ be a domain and let $\cX_1,\dots,\cX_m$ be a partition of the domain into (a possibly infinite number of) bins.  Define the function $f \from \cX^n \rightarrow \R^m$ as $f(\x)_j = \sum_{i=1}^{n} \mathbb{I}_{x_i \in \cX_j}$.  Then the mechanism
    $$M(\x) \coloneqq f(\x) + (Z_1,\ldots,Z_m),$$ where each $Z_j\leftarrow \Lap\left(\frac{4}{\eps},\frac{8\ln(8/\delta)}{\eps}\right)$ is $(\eps,\delta)$-differentially private.
    \label{lem:private-histogram}
\end{lem}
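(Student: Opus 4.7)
The plan is to exploit the fact that changing one data point between neighbors $\x, \x'$ alters the histogram vector $f(\x)$ in at most two bins, by exactly $1$ in each. Let $S \subseteq \{1, \ldots, m\}$ denote the at most two affected coordinates (this set may depend on the pair $\x,\x'$). For every bin $i \notin S$, we have $f(\x)_i = f(\x')_i$, and since the noise $Z_i$ is drawn independently from the same distribution in both mechanisms, the outputs on $S^c$ have identical joint laws under $\x$ and $\x'$.

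First I would use independence of the noise across bins to reduce to a finite-dimensional analysis. Write $M(\x) = (U(\x), V(\x))$ with $U(\x) = (M(\x)_i)_{i \in S}$ and $V(\x) = (M(\x)_i)_{i \notin S}$ independent, and similarly for $\x'$, with $V(\x')$ having the same law as $V(\x)$. Then for any measurable event $E$,
$$\Pr[M(\x) \in E] = \E_{v \sim V(\x)} \Pr[U(\x) \in E_v], \qquad \Pr[M(\x') \in E] = \E_{v \sim V(\x)} \Pr[U(\x') \in E_v],$$
where $E_v = \{u : (u,v) \in E\}$. Thus it suffices to verify $(\eps, \delta)$-DP for the at-most-two-dimensional projection $U$.

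Second, I would apply Lemma~\ref{lem:lap} to each affected coordinate $i \in S$ separately, viewing $f(\cdot)_i$ as a scalar function of sensitivity $\Delta = 1$. Plugging $\Delta = 1$ and the given noise parameters $\lambda = 4/\eps$, $\Zmax = 8 \ln(8/\delta)/\eps$ into Lemma~\ref{lem:lap} shows that each single-coordinate mechanism is $(\eps', \delta')$-DP for some $\eps', \delta'$ satisfying $2\eps' \leq \eps$ and $2\delta' \leq \delta$ (the stated parameters carry substantial slack). Since the $Z_i$'s are independent across bins, basic composition across the at most two coordinates in $S$ then delivers $(\eps, \delta)$-DP for $U$, and combined with the first step this gives the claim.

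The main obstacle is really just the bookkeeping around the possibly infinite partition $\cX_1, \cX_2, \ldots$: one must ensure that the joint distribution of the infinite independent noise vector is a well-defined random element of $\R^m$ (via the standard product-measure construction) and that the conditional reduction above is measure-theoretically legitimate on the product $\sigma$-algebra. Since any event in the DP definition lives in the cylinder $\sigma$-algebra generated by finite-dimensional projections, the reduction to the two-dimensional pair together with the per-coordinate invocation of Lemma~\ref{lem:lap} suffices.
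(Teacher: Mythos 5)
Your proposal is correct. Note that the paper does not actually prove Lemma~\ref{lem:private-histogram}; it is stated as a standard application of the truncated Laplace mechanism with a citation, so the route you take --- factor out the (identically distributed, independent) coordinates outside the at most two affected bins, then apply Lemma~\ref{lem:lap} per affected coordinate with sensitivity $\Delta=1$ and compose --- is precisely the standard argument being referenced, and your constants do check out. Concretely, noise scale $4/\eps$ with $\Delta=1$ corresponds to per-coordinate privacy $\eps'=\eps/4$, and Lemma~\ref{lem:lap} then requires $\Zmax \ge \ln(4/\delta')/\eps' = 4\ln(4/\delta')/\eps$, which with $\delta'=\delta/2$ is $4\ln(8/\delta)/\eps \le 8\ln(8/\delta)/\eps$; basic composition over the two affected bins yields $(\eps/2,\delta)$-DP, comfortably within the claimed $(\eps,\delta)$. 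The conditioning step is legitimate as you describe: for each fixed pair of neighbors the set $S$ of affected bins is fixed, the unaffected block has the same law under $\x$ and $\x'$ and is independent of the affected block, so writing $\Pr[M(\x)\in E]=\E_{v}\bigl[\Pr[U(\x)\in E_v]\bigr]$ and applying the two-coordinate guarantee to each slice $E_v$ gives the bound after taking expectations; the only care needed is the product-measure/cylinder-$\sigma$-algebra bookkeeping you already flag.
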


\subsection{Problem Definitions}

\paragraph{The interior point problem.} Let $P$ be an unknown distribution over $\mathbb{R}$, and let $\mathbf{x} \leftarrow P^n$ be an $n$-dimensional data set whose entries are sampled i.i.d. from $P$. As noted earlier, we define an \df{interior point of $P$} to be any point $y$ satisfying $\inf \supp(P) \leq y \leq \sup \supp(P)$.
Similarly, we define an \df{interior point of $\mathbf{x}$} to be any point $y$ satisfying $\underset{x_i \in \mathbf{x}}{\min}\,x_i \le y \le \underset{x_i \in \mathbf{x}}{\max}\,x_i.$
Observe that any interior point for $\x$ is guaranteed to also be an interior point of $P$.

\paragraph{Approximate medians.} For a distribution $P$, let $F_P(x) \coloneqq \Pr_{X\leftarrow P}[X \le x]$ denote the CDF of distribution $P$. We use $Q_P(p) \coloneqq F_P^{-1}(p)$ to denote the $p$-th quantile of $P$ for any $p \in [0,1]$. That is, $Q_P(p) = \inf \{x \mid F_P(x) \ge p\}$ for $p \in [0,1]$.
For a dataset $\x$, we can similarly define $Q_{\x}(p) \coloneqq x_{(\lfloor pn\rfloor)}$ for $p \in [1/n,1]$. We say that $\hat{m}$ is an $\alpha$-approximation to the median $Q_P(0.5)$ if $|F_P(\hat{m}) - 0.5| \le \alpha$. 

\smallskip

For the interior point problem, we focus on data distributions with the following property:
\begin{defn} 
Let $P$ be a distribution with mean $\mu = \E_{X \leftarrow P}[X]$. The distribution $P$ has \df{$C$-bounded normalized variance} for some value $C$ if
\[
\frac{\E_{X\leftarrow P}[|X-\mu|^2]}{\E_{X\leftarrow P}[|X-\mu|]^2} \le C.
\]
As shorthand, we will sometimes say simply that such a distribution is \df{$C$-bounded}.
\end{defn}

For the approximate median problem, we will be interested in the case where the \emph{middle $2\alpha$-percentile} of the distribution is $C$-bounded, rather than the entire distribution itself.

\section{Interior Point Algorithm}
In this section, we introduce an algorithm (Algorithm~\ref{alg:ip-alg}) that privately solves the interior point problem when the data points are coming from a $C$-bounded distribution. Recall that, given a dataset $\x$, the goal of such an algorithm is to privately output a point $y$ that falls within the minimum and maximum values in $\x$. Formally, we show that Algorithm~\ref{alg:ip-alg} is an $(\eps,\delta)$-differentially private algorithm that, if $P$ is $C$-bounded, returns an interior point of $\x$ with probability at least $1 - \beta$ if the size of $\x$ is sufficiently large (depending on the values of $\beta$, $\eps$, $\delta$, and $C$).

The basic idea is to apply bounded normalized variance to a private histogram. In particular, the domain of $P$ is partitioned into contiguous bins $B$ of a fixed width, and each bin counts the number of samples from $\x$ that reside in the corresponding subset of the domain. Then, random truncated Laplace noise is added to the count of every bin to ensure privacy. In essence, every bin keeps a noisy count of the number of points $x \in \x$ which land in the bin.

To demonstrate the benefit of this private histogram, suppose that there are two bins $B_1$ and $B_2$ in the domain of $P$, each of which has a sufficiently high noisy count. In particular, if the counts are high enough, then each of bins $B_1$ and $B_2$ must contain at least one $x \in \x$ (i.e., the large counts cannot be created entirely by the truncated Laplace noise). Moreover, any point in the domain between $B_1$ and $B_2$ \emph{must} be an interior point of $\x$. The convenience of this observation is that, even though the exact locations of $\x$ in each of $B_1$ and $B_2$ are unknown (and even the exact number of points in $B_1$ and $B_2$ is only known up to truncated Laplace noise), it is still possible to return an interior point.

There are two possible failure modes that the algorithm might incur. The first is that the points in $\x$ are so spread out that no bin contains very many samples. The second is that the samples in $\x$ are so tightly concentrated, that \emph{only one} bin contains a large number of samples. In order for the algorithm to succeed, we need to ensure that \emph{at least two bins} contain a significant number of samples. 

The key insight is that, if $P$ is $C$-bounded, and if the bin width is chosen in the right way, then the algorithm is guaranteed to succeed with probability at least $1 - \beta$. It turns out that, in order to choose the appropriate bin width, one must first compute an estimate for the first central absolute moment of $P$---this is performed by a subroutine \est which, as we shall discuss in Section~\ref{sec:est}, \emph{also} exploits the $C$-boundedness of $P$. By using a bin width that is slightly smaller than the first-moment estimate produced by \est, we are able to argue that at least two bins will have high noisy counts, and hence that the algorithm will succeed. 

We present the guarantees of Algorithm~\ref{alg:ip-alg} in the following Theorem:

\begin{thm} Suppose we are given four parameters 
$\epsilon > 0$, $\delta \in (0, 1)$, $\beta \in (0,1)$, and $C > 0$. Algorithm~\ref{alg:ip-alg} is $(\epsilon, \delta)$-differentially private. Furthermore, if $P$ has $C$-bounded normalized variance for some $C > 2$, and  $\x \leftarrow P^n$ contains $n$ data points where 
\begin{equation}
n > k_0 C^3\sqrt{\log C} \cdot \left(\epsilon^{-1}\ln \delta^{-1} + \ln \beta^{-1} \right)
\label{eq:nhuge}
\end{equation}
for some sufficiently large positive constant $k_0$, then Algorithm~\ref{alg:ip-alg} returns an interior point of $\x$ with probability at least $1 - \beta$.
\label{thm:ip}
\end{thm}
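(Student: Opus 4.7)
My plan splits the proof into a privacy analysis and a utility analysis. For privacy, I would view \main as the composition of \est followed by one noisy histogram whose bin width depends only on the output $\hat m$ of \est. Allocating privacy budget $(\eps/2,\delta/2)$ to each step, Lemma~\ref{lem:private-histogram} makes the histogram $(\eps/2,\delta/2)$-DP, and \est will be shown $(\eps/2,\delta/2)$-DP in Section~\ref{sec:est}. Basic composition together with post-processing (since the bin-width choice depends on $\x$ only through the private $\hat m$) then yields the claimed $(\eps,\delta)$-DP overall.

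For utility, by scale invariance I would assume without loss of generality that $\E_{X\leftarrow P}[|X-\mu|]=1$, so that $\E_{X\leftarrow P}[(X-\mu)^2]\leq C$. I would condition on two good events: (i) \est returns an $\hat m$ accurate to a factor polynomial in $C$, which occurs with probability at least $1-\beta/2$ by the guarantees to be proven in Section~\ref{sec:est}; and (ii) every truncated Laplace sample has magnitude at most $\Zmax=O(\log(1/\delta)/\eps)$, which holds deterministically by construction. The bin width $w$ would then be chosen proportional to $\hat m/(C\sqrt{\log C})$---small enough that no single bin can capture almost all of the first absolute moment, yet large enough that central bins accrue $\Omega(n/(C^3\sqrt{\log C}))$ expected samples.

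The crux is to show that \emph{two distinct} bins each receive at least $\Tip$ real samples from $\x$. First, by Chebyshev applied to $(X-\mu)^2$, all but a $1/(C\log C)$ fraction of $P$'s mass lies in an interval $I=[\mu\pm\Theta(\sqrt{C\log C})]$ that is covered by $O(C\sqrt{\log C})$ bins of width $w$. Next, suppose for contradiction that a single bin $B^*$ of width $w$ contained a $1-p$ fraction of the mass, and let $c_{B^*}$ denote its center. Decomposing the first absolute moment and applying Cauchy--Schwarz on the complement of $B^*$ gives
\[
1=\E[|X-\mu|] \;\leq\; \bigl(w+|c_{B^*}-\mu|\bigr) + \sqrt{p}\cdot\sqrt{\E[(X-\mu)^2]} \;\leq\; w+|c_{B^*}-\mu|+\sqrt{pC}.
\]
A symmetric bound on $|c_{B^*}-\mu|$ via the same inequality forces $p=\Omega(1/(C^2\sqrt{\log C}))$, so the complement of $B^*$ must hold a significant fraction of mass; combined with the Chebyshev localization, some bin inside $I$ distinct from $B^*$ then carries an $\Omega(1/(C^2\sqrt{\log C}))$ fraction.

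Finally, a Chernoff bound together with the hypothesis~\eqref{eq:nhuge} shows that each such heavy bin contains at least $\Tip$ samples with probability $\geq 1-\beta/4$, while any empty bin has noisy count at most $\Zmax$, which falls below the flagging threshold. Hence with probability $\geq 1-\beta$ the algorithm flags at least two bins and returns a point strictly between two of them, which is necessarily an interior point of $\x$. The hard part will be the Chebyshev--Cauchy--Schwarz balancing in the two-heavy-bins step: loose versions of this argument produce $n=\Omega(C^5)$, and matching the claimed $C^3\sqrt{\log C}$ dependence requires simultaneously tuning the Chebyshev cutoff, the bin width, and the mass threshold. A secondary subtlety is that the deferred guarantees of \est must feed the bin-width choice with the correct $C$-dependence, which I would treat as a black box at this stage and fold in when \est is analyzed.
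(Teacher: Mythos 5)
Your privacy argument and the outer skeleton of the utility argument (condition on \est returning a $\mathrm{poly}(C)$-accurate $\hat m$, then show two bins each exceed the threshold, via Chernoff plus the deterministic bound $|Z_\ell|\le\Zmax$) coincide with the paper, which simply composes Proposition~\ref{prop:estimate} with Proposition~\ref{prop:findip}. The gap is in your proposed replacement for Proposition~\ref{prop:findip}, i.e.\ the ``two heavy bins'' step, and it is quantitative but real, not just loose constants. First, your localization is internally inconsistent: with $\E[(X-\mu)^2]\le C$ and $\E[|X-\mu|]=1$, Chebyshev on an interval $I$ of radius $\Theta(\sqrt{C\log C})$ leaves tail mass $\Theta(1/\log C)$, not $1/(C\log C)$. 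Second, and more importantly, the balance you need cannot hold at these scales: the mass you can force outside the heaviest bin is at best $\Theta(1/C)$ (your Cauchy--Schwarz inequality, even after bounding $|c_{B^*}-\mu|$ correctly via the mean constraint $\E[X-\mu]=0$ rather than ``the same inequality,'' gives $p\gtrsim(1-3w)^2/C$, and it degenerates entirely when $w$ is close to $\E[|X-\mu|]$ --- a regime the guarantee $\hat m\le 2k'C\sqrt{\log C}\,\E[|X-\mu|]$ explicitly allows). Since your claimed $p=\Omega(1/(C^2\sqrt{\log C}))$ is much smaller than the tail outside $I$, all of the mass outside $B^*$ could lie beyond $I$, and no second heavy bin inside $I$ follows. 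A concrete witness: $P$ with mass $1-\tfrac1C$ at $0$ and mass $\tfrac1{2C}$ at each of $\pm T$ has $\E[|X-\mu|]=T/C$, $\E[(X-\mu)^2]=T^2/C$, hence is exactly $C$-bounded, yet the only mass outside the single bin containing $0$ sits at distance $C\,\E[|X-\mu|]$ from $\mu$ --- far outside your $I$.

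This shows the localization radius is forced to be $\Theta(C)\,\E[|X-\mu|]$, so that the Chebyshev tail (about $1/(64C)$) drops below the guaranteed outside mass (about $1/(32C)$); the region is then covered by $O(C\cdot C\sqrt{\log C})$ bins, giving per-bin mass $\Omega(1/(C^3\sqrt{\log C}))$ and the stated sample complexity. That is exactly the tuning the paper carries out: Lemma~\ref{lem:middlebin} handles the wide-bin regime (any width-$\E[|X-\mu|]$ interval centered at $\mu$ misses a $1/(16C)$ fraction), and Lemmas~\ref{lem:cheb}--\ref{lem:hardcodeonesided} symmetrize this to put mass at least $1/(128C)$ on \emph{each} side of $\mu$, at distance between $\E[|X-\mu|]/4$ and $16C\,\E[|X-\mu|]$, which also guarantees the two heavy bins are distinct. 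Since you explicitly defer this tuning as the unresolved ``hard part,'' and the parameters you do state fail to produce a second heavy bin at all, the proposal does not yet prove the theorem; repaired along the lines above it becomes essentially the paper's argument rather than an alternative one.
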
 

\begin{algorithm}
  \caption{Interior Point Algorithm}
  \label{alg:ip-alg}
  \DontPrintSemicolon
  \SetKwFunction{FMain}{interior-point-main}
  \SetKwFunction{Fg}{estimate-first-moment}
  \SetKwFunction{Ff}{find-interior-point}
  \SetKwProg{Fn}{Function}{:}{}
  
  \Fn{\Fg{$\x$; $\eps$,$\delta$,$C$}}{
        Set $k' = 3000$ and $n = |\x|$.\;
        Set $\mathbf{q} = (q_1,\ldots,q_{n/2})$, where $q_{i} = |x_{2i} - x_{2i-1}|$.\;
        For all $\ell \in \mathbb{Z}$, set $c_{\ell}(\q) = \#\{q_i \mid q_i \in (2^{\ell},2^{\ell+1}]\}$ and $\hat{c}_{\ell}(\q) = c_{\ell}(\q) + Z_{\ell}$, where each $Z_{\ell} \leftarrow \Lap(\frac{8}{\eps},\frac{16\ln(16/\delta)}{\eps})$ independently.\;
        Set $S = \left\{\ell \mid \hat{c}_{\ell}(\q) \ge 3n/(8k'C\log C)\right\}$.\;
        \uIf{$|S| \ge 1$}{
            \KwRet $\underset{\ell \in S}{\max}\,2^{\ell+1}$.\;
          }
        \Else{
            \KwRet $\bot$.\;
          }
  }
  \Fn{\Ff{$\x$; $\eps$,$\delta$,$C$,$\hat{m}$}}{
        Set $k' = 3000$, $k = 4096k'$, and $n = |\x|$.\;
        For all $\ell \in \mathbb{Z}$, set $B_{\ell} = [\ell\hat{m}/\up,(\ell+1)\hat{m}/\up)$.\;
        For all $\ell \in \mathbb{Z}$, set $c_{\ell}(\x) = \#\{x_i \mid x_i \in B_{\ell} \}$ and $\hat{c}_{\ell}(\x) = c_{\ell}(\x) + Z_{\ell}$, where each $Z_{\ell} \leftarrow  \Lap(\frac{8}{\eps},\frac{16\ln(16/\delta)}{\eps})$ independently.\;
        Set $S = \left\{\ell \mid \hat{c}_{\ell}(\x) \ge \Tip\right\}$.\;
        \uIf{$|S| \ge 2$}{
            \KwRet $\frac{1}{2}\left(\underset{\ell \in S}{\min}\,\frac{\ell\hat{m}}{\up}+\underset{\ell \in S}{\max}\,\frac{(\ell+1)\hat{m}}{\up}\right)$.\;
          }
        \Else{
            \KwRet $\bot$.\;
          }
  }
  \;
  \SetKwProg{Pn}{Function}{:}{\KwRet}
  \Fn{\FMain{$\x$; $\eps$,$\delta$,$C$}}{
        $\hat{m} \leftarrow \textrm{estimate-first-moment}(\x;\eps,\delta,C)$.\;
        \uIf{$\hat{m} \ne \bot$}{
            \KwRet $\textrm{find-interior-point}(\x; \eps,\delta,C,\hat{m})$.\;
          }
        \Else{
            \KwRet $\bot$.\;
          }
        
  }
\end{algorithm}

\begin{proof}[Proof (Theorem~\ref{thm:ip})]
    We begin by establishing differential privacy. By Lemma~\ref{lem:private-histogram}, the functions \est and \ip each satisfy $(\eps/2, \delta/2)$-differential privacy. In the full algorithm, the output of \est is used as an input for \ip. It follows by the standard composition lemma (see, e.g., \cite{smithprivacy}) that \main satisfies $(\eps, \delta)$-differential privacy.

     Next we turn our attention to the probability of $\main$ returning an interior point. Let $P$ be a $C$-bounded distribution for some $C > 2$ and let $\mu = \E_{X\leftarrow P}[X]$. Finally, let $\x \leftarrow P^n$ where $n$ satisfies \eqref{eq:nhuge}. Critically, the fact that $n$ satisfies \eqref{eq:nhuge} will allow for us to apply Proposition~\ref{prop:estimate} and Proposition~\ref{prop:findip}.
    
    By Proposition~\ref{prop:estimate}, we have with probability at least $1 - \beta/2$ that \estarg returns a value $\hat{m}$ satisfying
    \begin{equation}
    \E_{X\leftarrow P}[|X-\mu|] \le \hat{m} \le \left(6000 C\sqrt{\log C}\right)\E_{X\leftarrow P}[|X-\mu|].
    \label{eq:cond1main}
    \end{equation}
    Conditioned on \eqref{eq:cond1main}, it follows by Proposition~\ref{prop:findip} that \ip returns an interior point of $\x$ with probability at least $1 - \beta /2$.

    Thus, with probability at least $1 - \beta$, $\text{\main}(\x; \epsilon, \delta, C)$ returns an interior point of $\x$.  This completes the proof.
\end{proof}

    


\subsection{Privately Estimating the First Central Absolute Moment}\label{sec:est}

In this section, we introduce \est, a private algorithm for estimating the first central absolute moment of the data distribution $P$, up to a multiplicative factor of $6000C\sqrt{\log C}$. The first central absolute moment approximately measures how much a random variable deviates from its mean on average. More formally, the first central absolute moment is defined as $\E_{X\leftarrow P}[|X-\mu|]$ where $\mu \coloneqq \E_{X\leftarrow P}[X]$ is the mean of the distribution. 

\paragraph{Privately estimating first central absolute moments \emph{without} estimating $\mu$.} In order to calculate the first central absolute moment of $P$, it would be helpful to have a good approximation of $\mu$. Unfortunately, it is hard to \emph{privately} calculate a good approximation to $\mu$ when the samples are unbounded; any function that averages samples together would have unbounded sensitivity, meaning that an enormous amount of noise would need to be added in order to maintain privacy. 

Instead, we consider another strategy for estimating the first central absolute moment of $P$. For independent $X, X'\leftarrow P$ let $Q$ be a random variable that indicates the difference of $X$ and $X'$: $Q \coloneqq |X - X'|$. The random variable $Q$ is advantageous for directly estimating the first central absolute moment of $P$. This is in part due to the expected value $\E[Q]$ being a good proxy to the first central absolute moment of $P$, as shown by Lemma~\ref{lem:b}. Moreover, we will see that the distribution of $Q$ enables us to privately calculate $\E[Q]$.

\paragraph{Overview of the algorithm.} The algorithm \est estimates $E[Q]$ and uses it as a proxy for the first central absolute moment of $P$. It takes as input $\x \leftarrow P^n$ and extracts samples $\q \leftarrow Q^{n/2}$. It then creates a histogram over the domain of $Q$, consisting of contiguous bins whose sizes are increasing powers of $2$. Each bin maintains a count of the number of samples $q \in \q$ that land in the bin, and truncated Laplace noise is added to each count, to maintain privacy. The algorithm then eliminates all bins with small counts. Finally, the algorithm finds the largest of the remaining bins and outputs a fence post of this bin. Critically, the correctness of this algorithm will again rely heavily on the fact that $P$ is $C$-bounded.

To understand why \est returns a good estimate of the first absolute moment of $P$, it helps to focus on the distribution of values for $Q$. We show that the larger values of $Q$ appear with low probability. In particular, values of $Q$ more than $tC$ times larger than $\E[Q]$ have probability that drops as a function of $1/t^2$, as evidenced by Lemma~\ref{lem:d}. Thus, the really large bins (which are simultaneously the bins very far away from $\E[Q]$) will not have much probability mass in expectation and will be eliminated. At the same time, the bins that live very close to $E[Q]$ will receive a large fraction of the mass in expectation, as evidenced by Lemma~\ref{lem:f}. Since \est returns the fence post of the largest bin of those remaining after elimination, then the algorithm is likely to return a point in the domain of $Q$ close to $\E[Q]$. We give the exact details of the performance of \est in Proposition~\ref{prop:estimate}.

\paragraph{Notation.} Assume the data distribution $P$ is $C$-bounded, and let $\mu = \E_{X\leftarrow P}[X]$ denote the mean of samples from $P$. Our privacy parameters are $\epsilon$ and $\delta$. Our confidence parameter is $\beta$: that is, the algorithm outputs the appropriate answer with probability at least $1-\beta$. 

\begin{prop}
\label{prop:estimate}
Let $\eps > 0$, $\delta \in (0,1)$, and $\beta \in (0,1)$. Let $P$ be a $C$-bounded distribution for  $C > 2$. Let $\x \leftarrow P^{2n}$ be a dataset of $2\,n$ data points from $P$ where $n$ satisfies
$$n \ge k\,C\log (C)\left(\ln \left(2/\beta\right) + \frac{16\ln(16/\delta)}{\eps}\right),$$
for a sufficiently  large constant $k$. Then, there exists a constant $k'$ such that {\tt estimate-first-moment($\x$; $\eps$,$\delta$,$C$)} returns an estimate $\hat{m}$ for which the following guarantee holds with probability at least $1-\beta$:
\begin{equation}
    \E_{X\leftarrow P}[|X-\mu|] \le \hat{m} \le \left(2k'C\sqrt{\log C}\right)\E_{X\leftarrow P}[|X-\mu|]\,.
    \label{eq:prop2}
\end{equation}
\end{prop}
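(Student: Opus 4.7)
The plan is to reduce the problem to controlling the distribution of $Q = |X-X'|$ (for independent $X, X' \sim P$), since the algorithm builds $\q$ as a vector of $n$ i.i.d.\ samples of $Q$. Let $m^* = \E_{X\leftarrow P}[|X-\mu|]$ and $m = \E[Q]$. I would first record three quantitative facts about $Q$, corresponding to Lemmas~b, d, and f alluded to in the algorithm overview: (i) $m^* \leq m \leq 2m^*$, by Jensen applied to the inner expectation and the triangle inequality; (ii) $\E[Q^2] = 2\Var(X) \leq 2C(m^*)^2 \leq 2Cm^2$, transferring $C$-boundedness from $X$ to $Q$; and (iii) the Chebyshev-type tail $\Pr[Q \geq s] \leq 2Cm^2/s^2$ together with the Paley--Zygmund mass estimate $\Pr[Q \geq m/2] \geq 1/(8C)$, both immediate from (ii).

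Privacy is immediate from Lemma~\ref{lem:private-histogram}: modifying one coordinate of $\x$ alters at most one entry $q_i$ and hence at most one bin count. For utility I would condition on a good event $\mathcal{E}$ consisting of (a) the deterministic bound $|Z_\ell| \leq Z_{\max} := 16\ln(16/\delta)/\eps$ for every $\ell$, automatic from truncation, and (b) a multiplicative Chernoff bound controlling $|c_\ell(\q) - np_\ell|$ uniformly over the at most $n$ bins with $c_\ell > 0$, where $p_\ell := \Pr[Q \in (2^\ell,2^{\ell+1}]]$. The sample-complexity hypothesis $n \gtrsim k C \log C \cdot (\ln(1/\beta) + \ln(1/\delta)/\eps)$ is calibrated precisely so that both $Z_{\max}$ and the Chernoff slack for each relevant bin sit comfortably below the threshold $T = 3n/(4k'C\log C)$ (in the proposition's convention where $|\q| = n$); a finite union bound over the $\leq n$ nonempty bins then gives $\Pr[\mathcal{E}] \geq 1-\beta$.

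Under $\mathcal{E}$, the upper bound follows from the Chebyshev tail: for any $\ell$ with $2^{\ell+1} > 2k'C\sqrt{\log C}\,m^*$ one has $p_\ell = O(1/(k'^2 C\log C))$, so $np_\ell$ is a factor $\Theta(k')$ below $T$ and such a bin cannot survive the threshold cut, giving $\hat m = \max_{\ell \in S} 2^{\ell+1} \leq 2k'C\sqrt{\log C}\,m^*$. For the lower bound, the Paley--Zygmund mass $\Pr[Q \geq m/2] \geq 1/(8C)$ minus the Chebyshev tail above shows that $\Omega(1/C)$ mass of $Q$ lies in the band of bins whose upper boundary $2^{\ell+1}$ lies between $\Theta(m^*)$ and $O(k'C\sqrt{\log C}\,m^*)$. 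Since this band spans only $O(\log C)$ distinct indices, pigeonhole yields some $\ell^*$ in it with $p_{\ell^*} = \Omega(1/(C\log C))$; then $n p_{\ell^*} \gg T$, so $\ell^* \in S$ under $\mathcal{E}$, whence $\hat m \geq 2^{\ell^*+1} \geq m^*$.

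The main obstacle is the infinite-bin union bound: since the truncated Laplace noise is added to every $\ell \in \mathbb{Z}$, one must rule out an empty bin being pushed above the threshold by noise alone. This is resolved by the truncation itself: because $Z_{\max} < T$ under the sample-complexity hypothesis, any bin with $c_\ell = 0$ has $\hat c_\ell \leq Z_{\max} < T$ and cannot enter $S$, reducing the union bound to the $\leq n$ bins actually hit by samples. A secondary subtlety is obtaining the sharp lower-bound constant $\hat m \geq m^*$ rather than $m^*/2$, which requires tracking the geometric rounding (the upper boundary $2^{\ell+1}$ of the bin containing $Q$ is the next power of $2$ at or above $Q$) together with the observation that the edge case $m = m^*$ forces $P$ to be a two-point distribution, whose $Q$-atom automatically sits at a value $\geq 2m^*$.
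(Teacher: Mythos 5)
Your overall route is essentially the paper's: pass to $Q=|X-X'|$, use the factor-two comparison $\E_{X\leftarrow P}[|X-\mu|]\le\E[Q]\le 2\E_{X\leftarrow P}[|X-\mu|]$ (Lemma~\ref{lem:b}), transfer $C$-boundedness to $Q$ via $\E[Q^2]=2\Var(X)$, rule out far bins with a Chebyshev-type tail, locate one heavy bin near $\E[Q]$ by pigeonholing $\Omega(1/C)$ of the mass over $O(\log C)$ dyadic bins, and convert everything to statements about noisy counts using the deterministic truncation bound together with Chernoff and the hypothesis on $n$. Your one real substitution---Paley--Zygmund in place of the paper's Claim~\ref{clm:e}/Lemma~\ref{lem:f} computation---is a legitimate and arguably cleaner way to obtain the $\Omega(1/(C\log C))$ heavy-bin mass.

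Two steps need repair. First, for the upper bound, a ``union bound over the $\le n$ bins actually hit by samples'' is not a valid probabilistic move, since which bins are nonempty is itself random; you need either a summable union bound over \emph{all} far bins (using the geometric decay of $p_\ell$), or, more simply, the paper's device in Lemma~\ref{lem:e2}: bound once the total number of samples falling in $[k'C\sqrt{\log C}\,\E[Q],\infty)$, which dominates the count of every far bin simultaneously. Your truncation observation ($Z_{\max}$ below the threshold, so empty bins never enter $S$) is correct and needed, but it does not by itself finish this step. Second, and more substantively, your argument for the endpoint $\hat m\ge\E_{X\leftarrow P}[|X-\mu|]$ does not work as stated: the problematic scenario is not only the exact-equality case $\E[Q]=\E[|X-\mu|]$ (a characterization you would in any case have to prove), but any distribution for which the heavy bin produced by your pigeonhole is the boundary bin straddling $\E[Q]/2$; its upper endpoint can lie strictly between $\E[Q]/2$ and $\E[|X-\mu|]$ whenever $\E[|X-\mu|]$ exceeds that dyadic point, which can happen for a whole range of distributions with $\E[Q]<2\,\E[|X-\mu|]$. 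The paper sidesteps this by insisting (Lemma~\ref{lem:f}) on a heavy bin \emph{contained} in $\bigl[\E[Q]/2,\;k'C\sqrt{\log C}\,\E[Q]\bigr]$, so its lower endpoint is at least $\E[Q]/2$ and hence its upper endpoint is automatically at least $\E[Q]\ge\E[|X-\mu|]$. You should restrict your pigeonhole to such contained bins (adjusting constants accordingly) rather than treating this as an edge case about two-point distributions.
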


To prove Proposition~\ref{prop:estimate}, we use the following lemma, which states that $\E[Q]$ is within a multiplicative factor of $2$ of the first central absolute moment. The proof of this lemma is deferred to the appendix.

\begin{restatable}{lem}{lemb}
\label{lem:b} Let $X$ and $X'$ be two random variables independently drawn from $P$ with mean $\mu$, and let $Q$ be a random variable that indicates the difference of $X$ and $X'$: $Q \coloneqq |X - 
X'|$. Then, we have
$$\E_{X\leftarrow P}[|X-\mu|] \le \E[Q] \le 2\E_{X\leftarrow P}[|X-\mu|]\,.$$
\end{restatable}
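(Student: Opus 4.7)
\textbf{Proof plan for Lemma~\ref{lem:b}.} The statement bounds $\E[Q] = \E[|X - X'|]$ above and below by constant multiples of the first central absolute moment $m^* \coloneqq \E_{X \leftarrow P}[|X - \mu|]$. Both directions follow from standard inequalities, with no use of $C$-boundedness needed.

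For the \emph{upper bound} $\E[Q] \le 2 m^*$, I would invoke the triangle inequality pointwise: for any realizations $x, x'$ of $X, X'$, we have $|x - x'| \le |x - \mu| + |x' - \mu|$. Taking expectations (using linearity and the fact that $X'$ is identically distributed to $X$) gives
\[
\E[|X - X'|] \le \E[|X - \mu|] + \E[|X' - \mu|] = 2 \E[|X - \mu|] = 2 m^*.
\]
No independence is required here, just linearity of expectation and identical distribution.

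For the \emph{lower bound} $m^* \le \E[Q]$, the key is Jensen's inequality applied to the convex function $t \mapsto |t|$, together with the identity $\mu = \E[X']$. Condition on $X$ and pull the absolute value outside the inner expectation:
\[
|X - \mu| \;=\; \bigl|X - \E[X' \mid X]\bigr| \;=\; \bigl|\E[X - X' \mid X]\bigr| \;\le\; \E\bigl[|X - X'| \,\big|\, X\bigr],
\]
where the middle equality uses independence of $X$ and $X'$ (so that $\E[X' \mid X] = \E[X'] = \mu$). Taking an outer expectation over $X$ and using the tower property yields
\[
\E[|X - \mu|] \;\le\; \E\!\left[\E\bigl[|X - X'| \,\big|\, X\bigr]\right] \;=\; \E[|X - X'|] \;=\; \E[Q].
\]

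Neither step is technically demanding; the only subtle point is being careful that the Jensen step uses independence of $X$ and $X'$ to replace $\mu$ by the conditional expectation of $X'$, so I would flag this explicitly when writing the proof.
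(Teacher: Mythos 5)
Your proof is correct. The upper bound is exactly the paper's triangle-inequality argument, and your lower bound via conditional Jensen ($|X-\mu| = |\E[X - X' \mid X]| \le \E[|X-X'| \mid X]$, then the tower property) is just a more compact packaging of the paper's argument, which instead splits into the cases $X > \mu$ and $X \le \mu$, drops the absolute value with the appropriate sign in each case, and uses independence to replace $X'$ by $\mu$ in exactly the same way.
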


As previously mentioned, the bins in the private histogram that are very far away from $\E[Q]$ are highly likely to be eliminated by \est. We demonstrate this with Lemma~\ref{lem:d}, whose proof is deferred to the appendix.
\begin{restatable}{lem}{lemd}
Let $X$ and $X'$ be two random variables independently drawn form a distribution, namely $P$, and let $Q$ be a random variable that indicates the difference of $X$ and $X'$. Suppose $P$ is $C$-bounded for some $C > 1$. For any $t > 0$, we have
$$\Pr\left[Q - \E[Q] \ge tC\E[Q]\right] \le \frac{4}{t^2C}.$$
\label{lem:d}
\end{restatable}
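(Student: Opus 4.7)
The plan is to apply Chebyshev's inequality to $Q$, which reduces the task to bounding $\Var(Q)$ in terms of $C$ and $(\E[Q])^2$. Concretely, observe that
\[
\Pr[Q - \E[Q] \geq tC\E[Q]] \;\leq\; \Pr[|Q - \E[Q]| \geq tC\E[Q]] \;\leq\; \frac{\Var(Q)}{t^2 C^2 (\E[Q])^2},
\]
so it suffices to show that $\Var(Q) \leq O(C)\cdot(\E[Q])^2$.

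For the variance bound, I would first relate $Q^2$ to the second central moment of $P$. Since $X,X'$ are i.i.d.\ with mean $\mu$,
\[
\E[Q^2] \;=\; \E[(X-X')^2] \;=\; 2\Var(X) \;=\; 2\E_{X\leftarrow P}[|X-\mu|^2],
\]
and trivially $\Var(Q) \leq \E[Q^2]$. Then the $C$-bounded normalized variance assumption gives $\E[|X-\mu|^2] \leq C\cdot(\E[|X-\mu|])^2$, while Lemma~\ref{lem:b} gives $\E[|X-\mu|] \leq \E[Q]$. Chaining these inequalities yields
\[
\Var(Q) \;\leq\; 2\,\E[|X-\mu|^2] \;\leq\; 2C\,(\E[|X-\mu|])^2 \;\leq\; 2C\,(\E[Q])^2.
\]
Plugging this into the Chebyshev bound produces $\Pr[Q - \E[Q] \geq tC\E[Q]] \leq 2/(t^2 C)$, which is comfortably within the claimed $4/(t^2 C)$.

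There is no real obstacle here: the proof is essentially a one-line application of Chebyshev, and the only mildly delicate step is correctly converting the $C$-bounded normalized variance assumption from $|X-\mu|$ to $Q$ using Lemma~\ref{lem:b}. The factor $4$ (rather than $2$) in the stated bound is just slack that keeps the constant convenient when the lemma is later invoked.
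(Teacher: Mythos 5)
Your proof is correct and follows essentially the same route as the paper: the identity $\E[Q^2]=2\Var(X)$, the $C$-bounded assumption, and Lemma~\ref{lem:b} to pass from $\E[|X-\mu|]$ to $\E[Q]$, followed by Chebyshev. The only difference is that by dropping the $-\E[Q]^2$ term you get the slightly tighter bound $2/(t^2C)$, whereas the paper keeps more slack and states $4/(t^2C)$.
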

Consider the interval $\mathcal{I} = \left[\frac{1}{2}\E[Q], k'C\sqrt{\log C}\E[Q]\right]$ which surrounds $\E[Q]$ ($k'$ is a constant). We show that there exists a bin in this range that receives a high count in expectation. The proof of Lemma~\ref{lem:f} is deferred to the appendix.

\begin{restatable}{lem}{lemf}
Let $k' = 3000$, $C > 2$. If $P$ is $C$-bounded, then there is some $\ell$ satisfying $$(2^{\ell},2^{\ell+1}] \subseteq\mathcal{I} \text{ and } \Pr\left[Q \in (2^{\ell},2^{\ell+1}]\right] \ge \frac{1}{k'C\log C}.$$
\label{lem:f}
\end{restatable}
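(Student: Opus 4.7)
The plan is to combine a Paley–Zygmund lower bound at the lower endpoint of $\mathcal{I}$ with Lemma~\ref{lem:d} for the upper tail, isolate enough mass inside $\mathcal{I}$, and then pigeonhole across the $O(\log C)$ dyadic bins that fit inside it.

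First, I will translate $C$-boundedness of $P$ into a second-moment bound on $Q$. Since $\E[Q^2] = \E[(X-X')^2] = 2\,\mathrm{Var}(X) = 2\,\E[|X-\mu|^2] \le 2C\,\E[|X-\mu|]^2$ and Lemma~\ref{lem:b} gives $\E[|X-\mu|]\le \E[Q]$, we obtain $\E[Q^2]/\E[Q]^2 \le 2C$. Paley–Zygmund at $\theta = 1/2$ then yields $\Pr[Q > \E[Q]/2] \ge (1/2)^2\cdot \E[Q]^2/\E[Q^2] \ge 1/(8C)$. On the other side, Lemma~\ref{lem:d} applied with $t$ chosen so that $(1+tC)\E[Q] = k'C\sqrt{\log C}\,\E[Q]$, i.e.\ $t \ge k'\sqrt{\log C}/2$ (which is valid since $k'=3000$ and $C>2$), gives $\Pr[Q > k'C\sqrt{\log C}\,\E[Q]] \le 4/(t^2C) \le 16/(k'^2 C \log C)$. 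Subtracting, $\Pr[Q\in \mathcal{I}] \ge 1/(8C) - 16/(k'^2 C \log C) \ge 1/(16C)$.

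Next, I count the dyadic bins $(2^\ell, 2^{\ell+1}]$ contained in $\mathcal{I}$: $\ell$ must satisfy $\E[Q]/2 \le 2^\ell$ and $2^{\ell+1}\le k'C\sqrt{\log C}\,\E[Q]$, which leaves at most $\log_2(k'C\sqrt{\log C}) + 1 = O(\log C)$ indices for $C>2$. Splitting the mass $\ge 1/(16C)$ in $\mathcal{I}$ across these bins (plus at most two boundary bins) by pigeonhole gives some bin of mass $\Omega(1/(C\log C))$. The generous constant $k'=3000$ provides enough slack that, even after discounting the loss to boundary bins, the guaranteed mass exceeds the target $1/(k'C\log C)$.

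The main obstacle is ensuring the high-mass bin is actually a subset of $\mathcal{I}$ rather than straddling one of its endpoints. The upper boundary bin is easy: its total mass is bounded by $16/(k'^2 C \log C)$, dwarfed by the target. The lower boundary bin $B_{\ell^*}$ (with $2^{\ell^*}<\E[Q]/2 \le 2^{\ell^*+1}$) is the delicate case, since its intersection with $\mathcal{I}$ can absorb substantial mass. I would handle this by a two-case split: if the mass of $B_{\ell^*}\cap \mathcal{I}$ is at most half of $\Pr[Q\in\mathcal{I}]$, pigeonhole over the fully-contained bins immediately succeeds; otherwise I condition $Q$ on $\{Q\notin B_{\ell^*}\}$ — the conditional distribution still satisfies a variance-type bound of the form $\E[Q^2\mid Q\notin B_{\ell^*}]/\E[Q\mid Q\notin B_{\ell^*}]^2 \le 2C/(1-\Pr[Q\in B_{\ell^*}])$ — and re-apply Paley–Zygmund to extract enough mass strictly above $2^{\ell^*+1}$, a region in which every dyadic bin is fully contained in $\mathcal{I}$, making the final pigeonhole step valid.
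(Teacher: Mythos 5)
Your first two steps are fine and in fact take a different (and slightly sharper) route than the paper: the paper lower-bounds the truncated first moment $\E\left[Q\cdot\mathbb{I}_{Q\in\mathcal{I}'}\right]\ge\E[Q]/4$ for $\mathcal{I}'=[\tfrac12\E[Q],128C\E[Q]]$ (Claim~\ref{clm:e}) and converts this to $\Pr[Q\in\mathcal{I}']\ge 1/(512C)$, whereas you get $\Pr[Q\in\mathcal{I}]\ge 1/(16C)$ directly from Paley--Zygmund plus the Lemma~\ref{lem:d} tail, which is perfectly valid (the bound $\E[Q^2]\le 2C\E[Q]^2$ follows from Claim~\ref{clm:a} and Lemma~\ref{lem:b} as you say). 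You are also right that the delicate point is the dyadic bin straddling the lower endpoint $\E[Q]/2$; your case~1 (that bin carries at most half the mass of $\mathcal{I}$) closes with room to spare, and the upper boundary bin is indeed negligible, though the correct justification is to apply Lemma~\ref{lem:d} at its left endpoint $2^{\ell_u}\ge\tfrac12 k'C\sqrt{\log C}\,\E[Q]$, giving roughly $64/(k'^2C\log C)$ rather than $16/(k'^2C\log C)$ --- immaterial.

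The genuine gap is in your case~2. Conditioning on $Q\notin B_{\ell^*}$ and applying Paley--Zygmund with your stated conditional moment bound yields $\Pr[Q>\E[Q\mid Q\notin B_{\ell^*}]/2\mid Q\notin B_{\ell^*}]\ge(1-\Pr[B_{\ell^*}])/(8C)$, hence only $(1-\Pr[B_{\ell^*}])^2/(8C)$ of \emph{unconditional} mass above $2^{\ell^*+1}$, and nothing in your case-2 hypothesis bounds $1-\Pr[B_{\ell^*}]$ away from $0$; on the contrary, that case is exactly the one where $\Pr[B_{\ell^*}]$ may be close to $1$. Concretely, consider mass $1-\epsilon^2/(2C)$ at a point just below $\E[Q]$ lying inside the straddling bin and mass $\epsilon^2/(2C)$ near $2C\E[Q]/\epsilon$ (to preserve the mean): this satisfies $\E[Q^2]\le 2C\E[Q]^2$ and every tail bound you invoke, yet the total mass strictly above $2^{\ell^*+1}$ is only $\epsilon^2/(2C)$, which for small $\epsilon$ is far below the target $1/(k'C\log C)$, so the final pigeonhole over the fully-contained bins cannot succeed. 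This shows that no repair of case~2 can work using only the second-moment bound on $Q$ and Lemma~\ref{lem:d}; one would need either additional structure of $Q=|X-X'|$ beyond what these lemmas record, or a weakening of the requirement that the selected bin lie inside $\mathcal{I}$. (For comparison, the paper's own pigeonhole simply divides $\Pr[Q\in\mathcal{I}']$ by the number of bins contained in $\mathcal{I}'$, i.e.\ it does not treat the straddling bin either; your instinct that this needs an argument is sound, but the proposed conditioning step does not supply one.)
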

We now give the proof of Proposition~\ref{prop:estimate}

\begin{proof}[Proof (Proposition~\ref{prop:estimate})]
Recall that the algorithm \texttt{estimate-first-moment} first creates points $q_1,\ldots,q_n$ and then uses these points to realize a noisy histogram over intervals $(2^{\ell},2^{\ell+1}]$. It then identifies all intervals with $\hat{c}_{\ell}(\q)$ larger than the threshold $\test$. Of these intervals, it chooses the largest $\ell$ and outputs $\hat{m} = 2^{\ell + 1}$ for this $\ell$. By Lemma~\ref{lem:b}, if this largest interval $(2^{\ell},2^{\ell+1}]$ satisfies $(2^{\ell},2^{\ell+1}] \subseteq \mathcal{I}$, then $\hat{m} = 2^{\ell+1}$ satisfies \eqref{eq:prop2}.

We now turn our attention towards the two ways in which \texttt{estimate-first-moment} can fail to output an estimate $\hat{m}$ satisfying \eqref{eq:prop2}. The first mode of failure occurs if there is no such $\ell \in S$ such that $(2^{\ell},2^{\ell + 1}] \subseteq \mathcal{I}$. In particular, we can define $E_1$ to be the event that for all $\ell$ such that $(2^{\ell},2^{\ell+1}] \subseteq I$, $\hat{c}_{\ell}(\q) < \test$. The second mode of failure occurs if the output $2^{\ell+1}$ is too large; in particular, we can define $E_2$ to be the event that there exists an $\ell$ such that $(2^{\ell},2^{\ell+1}] \subseteq [k'C\sqrt{\log C}\E[Q],\infty)$ and $\hat{c}_{\ell}(\q) > \test$. The following two lemmas bound the probability of these bad events occurring:
\begin{lem}
Let $\eps > 0$, $\delta \in (0,1)$, $\beta \in (0,1)$, and let $C > 2$ be parameters. Let $k' = 3000$ and $k = 8k'$. Let $\x \leftarrow P^{2n}$ be the samples fed into the algorithm \est. If we have both that $P$ is $C$-bounded and that $n \ge kC\log C\left(\ln (2/\beta) + 16\ln(16/\delta) / \eps\right),$ then $\Pr[E_1] \le \beta/2$, where the probability is taken over both the randomness of the samples $\x$ and the truncated Laplace mechanism.
\label{lem:e1}
\end{lem}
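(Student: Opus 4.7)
The plan is to leverage Lemma~\ref{lem:f} to pick out a single ``good'' index $\ell^*$ with $(2^{\ell^*},2^{\ell^*+1}]\subseteq\mathcal{I}$ and $p^*:=\Pr[Q\in(2^{\ell^*},2^{\ell^*+1}]]\ge 1/(k'C\log C)$. Since $\neg E_1$ only requires \emph{some} $\ell$ with $(2^{\ell},2^{\ell+1}]\subseteq\mathcal{I}$ to clear the threshold $\test$, it suffices to show $\hat{c}_{\ell^*}(\q)\ge\test$ with probability at least $1-\beta/2$.

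I would then decompose $\hat{c}_{\ell^*}(\q)=c_{\ell^*}(\q)+Z_{\ell^*}$ and handle the two summands separately. Since the $q_i$ are i.i.d.\ samples of $Q$, the true count $c_{\ell^*}(\q)$ is $\mathrm{Binomial}(n,p^*)$ with mean $np^*\ge n/(k'C\log C)$; a multiplicative Chernoff bound gives
\[
\Pr\!\left[c_{\ell^*}(\q)\le \tfrac{np^*}{2}\right]\le \exp(-np^*/8)\le \exp\!\left(-\tfrac{n}{8k'C\log C}\right),
\]
which is at most $\beta/2$ as long as $n\ge 8k'C\log C\cdot\ln(2/\beta)$---exactly what the hypothesis provides after taking $k=8k'$. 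For the noise, the key observation is that the truncated Laplace $Z_{\ell^*}$ is \emph{deterministically} supported on $[-\Zmax,\Zmax]$ with $\Zmax=16\ln(16/\delta)/\eps$, so no tail bound is needed; we only need $\test$ to dominate $\Zmax$. The hypothesis $n\ge kC\log C\cdot 16\ln(16/\delta)/\eps$ with $k=8k'$ yields $n/(8k'C\log C)\ge 16\ln(16/\delta)/\eps=\Zmax$, and hence $Z_{\ell^*}\ge -\Zmax\ge -n/(8k'C\log C)$ with probability $1$.

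Putting the two pieces together, with probability at least $1-\beta/2$ I obtain
\[
\hat{c}_{\ell^*}(\q)\ge \tfrac{n}{2k'C\log C}-\tfrac{n}{8k'C\log C}=\tfrac{3n}{8k'C\log C}=\test,
\]
so $\ell^*\in S$ and $E_1$ fails to occur. The main subtlety is recognizing that truncating the Laplace at $\pm\Zmax$ replaces what would otherwise be a high-probability tail argument (which would introduce an extra $\ln(1/\beta)/\eps$ factor into the sample complexity) with a deterministic inequality; this is precisely why the two pieces of the hypothesis can appear additively as $\ln(2/\beta)+16\ln(16/\delta)/\eps$ rather than as a product.
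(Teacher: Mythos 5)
Your proof is correct and follows essentially the same route as the paper: fix the high-mass bin guaranteed by Lemma~\ref{lem:f}, use the deterministic bound $|Z_{\ell}|\le 16\ln(16/\delta)/\eps \le n/(8k'C\log C)$ on the truncated Laplace noise, and apply a multiplicative Chernoff bound to the binomial count, with the two parts of the hypothesis on $n$ absorbing the noise and the tail respectively. Your choice to fix $\ell^*$ directly as the bin from Lemma~\ref{lem:f} (rather than an argmax over noisy counts, as the paper writes) is if anything a slightly cleaner bookkeeping of the same argument.
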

\begin{proof}
Let $$\ell^* = \argmax_{\ell \text{ s.t. } (2^{\ell},2^{\ell+1}] \subseteq\mathcal{I}} \hat{c}_{\ell}(\q).$$ Lemma~\ref{lem:f} implies that
\begin{equation}
    \label{eq:en}
    \E_{\q}[\csq] = n \Pr\left[Q \in (2^{\ell^*},2^{\ell^*+1}]\right] \ge \frac{n}{k'C\log C}.
\end{equation}
Thus, 
\begin{align*}
    \Pr_{\q,\Zl}[E_1] &= \Pr\left[\cshatq < \Test\right]\\
    &=\Pr\left[\csq + \Zl < \Test\right] \\
    &\le \Pr\left[\csq < \Test + \frac{16\ln(16/\delta)}{\eps} \right]\\
    &\le \Pr\left[\csq < \Test + \frac{n}{8k'C\log C}\right] \tag{by assumption on $n$} \\
    &= \Pr\left[\csq < \frac{4n}{8k'C\log C}\right]\\
    &\le \Pr\left[\csq < \frac{\E[\csq]}{2}\right] \tag{by \eqref{eq:en}}.
\end{align*}
Let $Q_j$ be the indicator random variable for whether $q_j \in (2^{\ell^*},2^{\ell^*+1}]$ and note that $\csq = \sum_{j = 1}^{n} Q_j$. Thus, by a Chernoff bound,
\begin{align*}
    \Pr\left[\csq < \frac{\E[\csq]}{2}\right] &\le \exp\left(-\frac{\E[\csq]}{8}\right)\\
    &\le \exp\left(-\frac{n}{8k'C\log C}\right)\tag{by \eqref{eq:en}}\\
    &\le \exp\left(-\ln (2/\beta)\right)\tag{by assumption on $n$}\\
    &= \beta/2.
\end{align*}
This completes the proof.
\end{proof}

\begin{lem}
Let $\eps > 0$, $\delta \in (0,1)$, and $\beta \in (0,1)$ be parameters. If $P$ is $C$-bounded for some parameter $C > 2$, and $n \ge kC\log C\left(\ln (2/\beta) + 16\eps^{-1}\ln(16/\delta)\right)$, then $\Pr[E_2] \le \beta/2$, where the probability is taken over both the randomness of the samples $\x$ and the truncated Laplace mechanism.
\label{lem:e2}
\end{lem}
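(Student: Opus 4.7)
The plan is to argue that $E_2$ forces an abnormally large number of samples $q_j$ to fall into the extreme tail $[M,\infty)$, where $M := k'C\sqrt{\log C}\,\E[Q]$, and then use Lemma~\ref{lem:d} together with a Chernoff bound to rule this out. The truncated Laplace noise is capped at $\Zmax = 16\ln(16/\delta)/\eps$, which by the assumption on $n$ is much smaller than the threshold $\test$, so noise alone cannot push any empty tail bin over the threshold; the only way $E_2$ can occur is if genuinely too many of the $q_j$ land in the extreme tail.

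First I would observe that for every index $\ell$ with $(2^{\ell}, 2^{\ell+1}] \subseteq [M,\infty)$, the true bin count $c_{\ell}(\q)$ is at most $K := \#\{j : q_j > M\}$, so $\hat{c}_{\ell}(\q) \leq K + \Zmax$. Hence $E_2 \subseteq \{K + \Zmax > \test\}$. The assumption $n \geq 8k'C\log C \cdot 16\ln(16/\delta)/\eps$ gives $\Zmax \leq n/(8k'C\log C)$, so $\test - \Zmax \geq n/(4k'C\log C)$. Setting $T := \lceil n/(4k'C\log C)\rceil$, it therefore suffices to show $\Pr[K \geq T] \leq \beta/2$, where the randomness is entirely over the samples $\q$.

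Next I would bound $\E[K]$ by applying Lemma~\ref{lem:d} with $t = k'\sqrt{\log C} - 1/C$, which, using $\E[Q] > 0$, rewrites $Q > M$ as $Q - \E[Q] > tC\,\E[Q]$. For $C > 2$ and $k' = 3000$ this yields $\Pr[Q > M] \leq 8/(k'^2 C\log C)$, so $\mu_K := \E[K] \leq 8n/(k'^2 C\log C)$. Since $K$ is a sum of $n$ independent indicators, the standard tail estimate $\Pr[K \geq T] \leq (e\mu_K/T)^T$ combined with $e\mu_K/T \leq 32e/k' < 1/2$ gives $\Pr[K \geq T] \leq 2^{-T}$; the lower bound $n \geq 8k'C\log C \cdot \ln(2/\beta)$ makes $T \geq \log_2(2/\beta)$, completing the bound $\Pr[E_2] \leq \beta/2$.

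The main obstacle is really just keeping the bookkeeping straight: Lemma~\ref{lem:d} is a Chebyshev-type bound giving only a polynomial tail, so the single bin width $1$ in log-scale is not enough, and the scaling $k'C\sqrt{\log C}$ (as opposed to $k'C$) is exactly what makes the expected number of tail samples $\mu_K$ small enough for the Chernoff step to succeed. The assumption on $n$ is chosen precisely so that both $\Zmax$ and $\mu_K$ are each at most a constant fraction of $\test$, leaving just enough slack to absorb the $\ln(2/\beta)$ factor in the binomial tail.
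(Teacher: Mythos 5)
Your proof is correct and follows essentially the same route as the paper's: both reduce $E_2$ to the event that the aggregated tail count $\#\{j : q_j \ge k'C\sqrt{\log C}\,\E[Q]\}$ exceeds the threshold minus the truncated-noise cap $16\ln(16/\delta)/\eps$, bound its expectation via Lemma~\ref{lem:d}, and finish with a concentration bound using the assumption on $n$. The only cosmetic differences are that you make the containment $E_2 \subseteq \{K + \Zmax > \Test\}$ explicit and use the binomial tail estimate $(e\mu_K/T)^T$ where the paper invokes a multiplicative Chernoff bound, which changes nothing substantive.
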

\begin{proof}
Define $\csq$ to be the number of points $q_1,\ldots,q_n$ that land in $[k'C\sqrt{\log C}\E[Q],\infty)$. First note that
\begin{align*}
    \E_{\q}[\csq] &= n\Pr\left[Q > k'C\sqrt{\log C}\E[Q]\right]\\
    &= n\Pr\left[Q - \E[Q]> \left(k'C\sqrt{\log C}\E[Q]-1\right)\E[Q]\right]\\
    &\le n\Pr\left[Q - \E[Q]> \frac{k'C\sqrt{\log C}\E[Q]}{2} \right]\\
    &\le \frac{16\,n}{(k')^2C\log C} \tag{by Lemma~\ref{lem:d}}
\end{align*}
and thus
\begin{equation}
    \label{eq:T}
    \E[\csq] \le \frac{n}{k C\log C}
\end{equation}
for sufficiently large $k$. Define $\cshatq \coloneqq \csq + \Zl$ and note that $\cshatq \le \csq + \frac{16\ln(16/\delta)}{\eps}$. This gives us
\begin{align*}
    \Pr[E_2] &\le \Pr\left[\cshatq > \Test\right] \\
    &\le \Pr\left[\csq > \Test - \frac{16\ln(16/\delta)}{\eps} \right] \\
    &\le \Pr\left[\csq > \Test - \frac{n}{8k'C\log C}\right]  \tag{by assumption on $n$}\\
    &= \Pr\left[\csq > \frac{2n}{8k'C\log C} \right].
\end{align*}
Let $Q_j$ be the indicator random variable for whether $q_j \in [k'C\sqrt{\log C}\E[Q],\infty)$, and note that $\csq = \sum_{j = 1}^{n} Q_j$ and $\E[\csq] \le \frac{n}{kC\log C}$ by \eqref{eq:T}. Thus, by a Chernoff bound,
\begin{align*}
\Pr\left[\csq > \frac{2n}{kC\log C} \right]
    &\le \exp\left(-\Omega(\ln (2/\beta)\right)
    \le \beta/2
\end{align*}
which completes the proof.
\end{proof}

The algorithm \texttt{estimate-first-moment} fails to output the desired estimate $\hat{m}$ if either $E_1$ and/or $E_2$ occur. Lemma~\ref{lem:e1} tells us that $\Pr[E_1] \le \beta / 2$, and Lemma~\ref{lem:e2} tells us that $\Pr[E_2] \le \beta / 2$. Thus, by a union bound,
\[
\Pr[E_1 \cup E_2] \le \Pr[E_1] + \Pr[E_2] \le \beta
\]
which implies the proposition.
\end{proof}

\subsection{Finding an Interior Point, Given a Fixed Bin Width}
In this section, we give a guarantee on the success probability of \ip. Recall that the algorithm instantiates a histogram, counting the number of samples in $\x$ that fall into sets of contiguous bins (where the width of each bin is slightly smaller than the output of \est). The algorithm adds truncated Laplace noise to the count of each bin, ensuring that the histogram is private. Then, the algorithm isolates all bins with large counts. Of all the isolated bins, the algorithm picks two and finally returns a value which falls between the domains of each of the two bins. 

\paragraph{Bounded normalized variance induces multiple full bins.} If the algorithm is able to identify multiple bins that each have samples from $\x$, then the algorithm is guaranteed to succeed. The $C$-boundedness assumption on the data distribution guarantees the existence of at least two such bins with high probability (at least $1 - \beta$) over the samples. 

The basic idea behind the analysis is as follows. If there is a large probability mass concentrated in a single bin (but not in any others), then we would be able to use $C$-boundedness in order to deduce that the true first central absolute moment of $P$ is actually much smaller than our bin size---this would contradict Lemma~\ref{lem:e2}. On the other hand, if $P$'s probability mass is so spread out that \emph{no bin} is expected to contain a large noisy count, then we could use $C$-boundedness in order to argue that $P$ violates Chebyshev's inequality, again leading to a contradiction. Thus we are able to conclude (in Lemma~\ref{lem:hardcodeonesided}) that at least two bins should have large noisy counts (with high probability). The full guarantees provided by \ip are laid out in Proposition~\ref{prop:findip}.

\begin{prop}
Let $\eps > 0$, $\delta \in (0,1)$, $\beta \in (0,1)$, $k' = 3000$, $k = 4096k'$. Let $P$ be a $C$-bounded distribution for $C > 2$ and mean $\mu$, and let $\x \sim P^n$ for
\[n \ge \lb.
\]
If $\,\E_{X\leftarrow P}[|X - \mu|] \le \hat{m} \le \up\E_{X\leftarrow P}[|X-\mu|]$, then \iparg returns an interior point of $\x$ with probability at least $1 - \beta$. 
\label{prop:findip}
\end{prop}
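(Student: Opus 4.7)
I want to prove that \iparg outputs an interior point of $\x$ with probability at least $1-\beta$. I would decompose the argument into three parts: a structural claim about the distribution $P$, a concentration-of-measure analysis for the sample counts and noise, and a final deterministic argument that any bin in $S$ must contain a real data point.

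The structural claim, which plays the role of Lemma~\ref{lem:hardcodeonesided}, is that at least two bins $B_{\ell_1}, B_{\ell_2}$ each carry probability at least $6/(kC^3\sqrt{\log C})$ under $P$. Set the bin width $w:=\hat m/\up$. The two-sided bound $\E[|X-\mu|]\le\hat m\le\up\cdot\E[|X-\mu|]$ gives $\E[|X-\mu|]/\up\le w\le\E[|X-\mu|]$. I would prove the structural claim by contradiction: if only a single bin $B^*$ of mass $p^*$ exceeds the threshold, then the mean-absolute-deviation bound $\E[|X-\mu_{B^*}|\mid X\in B^*]\le w/2$ (valid for any distribution on an interval of width $w$) together with the Cauchy--Schwarz tail estimate $\E[|X-\mu|\cdot\mathbb{I}_{X\notin B^*}]\le\sqrt{(1-p^*)C}\,\E[|X-\mu|]$ forces $w$ to be very close to $\E[|X-\mu|]$, which requires $1-p^*$ to be non-negligible. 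On the other hand, Chebyshev applied with $\E[(X-\mu)^2]\le C\,\E[|X-\mu|]^2$ localizes most of the mass inside a window of width $O(\sqrt{C})\,\E[|X-\mu|]$ around $\mu$, which spans only $O(k'C^{3/2}\sqrt{\log C})$ bins of width $w$; if every non-$B^*$ bin has mass below $6/(kC^3\sqrt{\log C})$, the choice $k=4096k'$ makes this insufficient to carry the required $1-p^*$ mass, contradicting the previous step.

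Given two such heavy bins, a standard Chernoff bound shows that with probability at least $1-\beta/2$, both counts satisfy $c_{\ell_j}(\x)\ge 4n/(kC^3\sqrt{\log C})$; the required deviation matches the $\log(2/\beta)$ term in the sample-complexity hypothesis. The truncated Laplace noise is deterministically bounded by $16\ln(16/\delta)/\eps$, and the sample-complexity hypothesis ensures $16\ln(16/\delta)/\eps\le n/(kC^3\sqrt{\log C})$. Hence $\hat c_{\ell_j}(\x)\ge\Tip$ and both bins end up in $S$. Finally, for any $\ell\in S$ one has $c_\ell(\x)\ge\Tip-16\ln(16/\delta)/\eps\ge 1$, so $B_\ell$ contains a real sample; the returned midpoint therefore lies strictly between samples in the leftmost and rightmost bins of $S$, and is an interior point of $\x$.

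The main obstacle is the structural claim. The combination of $C$-bounded normalized variance with the sandwich on $\hat m$ to forbid essentially-single-bin concentration is delicate, and the constant $k=4096k'$ is chosen precisely so that the Chebyshev window count, the MAD--Cauchy--Schwarz tail estimate, and the Chernoff plus noise slack all fit simultaneously inside the given sample-complexity hypothesis. The remaining concentration and interior-point steps are routine once the structural claim is in hand.
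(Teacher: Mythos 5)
Your high-level decomposition mirrors the paper's (a structural ``two heavy bins'' claim, then Chernoff concentration plus the deterministic bound on the truncated Laplace noise, then the observation that any $\ell\in S$ has $\cx \ge \Tip - 16\ln(16/\delta)/\eps \ge 1$), but your proof of the structural claim is genuinely different. The paper proves Lemma~\ref{lem:hardcodeonesided} directly: for a $C$-bounded $P$, each of the intervals $\left(\mu + \frac{\E[|X-\mu|]}{2k_1}, \mu + 16C\E[|X-\mu|]\right)$ and its mirror image carries mass at least $1/(128C)$, and then pigeonholes each interval over at most $16C\up$ bins to obtain one bin on each side of $\mu$ with expected count at least $n/(kC^3\sqrt{\log C})$. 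Your route---assume at most one bin exceeds the mass cap, use the MAD-on-an-interval bound plus Cauchy--Schwarz to force the mass outside that bin to be at least $\Omega(1/C)$ (one can get, e.g., $\E[|X-\mu|]\le w/2 + 2\sqrt{C(1-p^*)}\,\E[|X-\mu|]$, hence $1-p^*\ge 1/(16C)$, using $w\le \E[|X-\mu|]$), then contradict this with a Chebyshev-window bin count times the per-bin cap---is a viable alternative that avoids the one-sided truncation trick ($X'=\max(X,\mu)$) used in the appendix; it only guarantees two heavy bins somewhere (possibly adjacent), which is still enough for the returned midpoint to lie between samples, whereas the paper's lemma places them on opposite sides of $\mu$. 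Note also that your contradiction hypothesis ``at most one heavy bin'' must cover the case of \emph{no} heavy bin; the same counting handles it, but say so.

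The one step that fails as stated is your window size. Since the first step only yields an outside-$B^*$ mass of order $1/C$, the Chebyshev tail you discard must be $o(1/C)$, not a constant: a window of width $O(\sqrt{C})\,\E[|X-\mu|]$ leaves tail mass $\Theta(1)$ and gives no contradiction. You need half-width $cC\,\E[|X-\mu|]$ (tail $\le 1/(c^2C)$), which spans $O(k'C^2\sqrt{\log C})$ bins of width $w\ge \E[|X-\mu|]/\up$, not $O(k'C^{3/2}\sqrt{\log C})$. The repair is compatible with your constants because $k=4096k'$: the bin count times your cap $6/(kC^3\sqrt{\log C})$ is roughly $(3c/512)/C$, and with $c\approx 7$ one gets $(3c/512)+1/c^2 < 1/16$, so the contradiction closes---but only barely, so you must carry the constants explicitly rather than assert they fit. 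The remaining bookkeeping (Chernoff down to $4n/(kC^3\sqrt{\log C})$, noise slack $16\ln(16/\delta)/\eps \le n/(kC^3\sqrt{\log C})$, union bound) follows the paper and is fine at the paper's level of rigor, though, like the paper's own Chernoff step, the $\log(2/\beta)$ term in the hypothesis gives you $\exp(-\Theta(\log(2/\beta)))$ rather than exactly $\beta/2$ unless you track the constant in the exponent.
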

\begin{proof} 
To simplify notation, we let $Z = |X-\mu|$ for $X\leftarrow P$. We start by turning our attention to the set $S$, introduced in \ip, which stores the indices to bins with counts above a desirable threshold. Observe that $S$ branches \ip into two cases: either $|S| \ge 2$ or $|S| < 2$. To analyze these two cases, we must begin by making the following claim about $S$ and consequently $\cx$, the non-noisy count of each bin:
\begin{clm}
    For all $\ell \in S$, we have that $\cx > 0$.
    \label{clm:s}
\end{clm}
\begin{proof}[Proof (Claim~\ref{clm:s})]
    By construction, $\ell \in S$ if $\chat > \Tip$. Since $\chat = \cx + \Zl$, this implies that
    \begin{align*}
        \cx &> \Tip - \Zl \\
        &\ge \Tip - \frac{16\ln(16/\delta)}{\eps} \\
        &> 0. \tag{by assumption on $n$}
    \end{align*}
This completes the proof of the claim.
\end{proof}
We show that, in the case where $|S| \ge 2$, the algorithm will always return an interior point. In this case, the algorithm picks two $\ell_1, \ell_2 \in S$ and outputs a point $p$ in the domain that lies between $B_{\ell_1}$ and $B_{\ell_2}$. By Claim~\ref{clm:s}, we know that $B_{\ell_1}$ and $B_{\ell_2}$ each receive at least one sample each from $\x$, and so $p$ must be an interior point of $\x$.

In the case where $|S| < 2$, the algorithm will always fail to output an interior point (since the algorithm defaults to $\bot$ in this case). Thus, we prove the proposition by showing that $|S| < 2$ with probability at most $\beta$. 

To analyze the probability that $|S| < 2$, we need to look at the distribution $P$. It turns out that, if $P$ is $C$-bounded for some known $C > 1$, we are guaranteed that there exists two disjoint regions, at most a distance $\E[Z]/2$ apart, that each contain support in $P$. In particular, we have the following lemma, whose proof is deferred to the appendix:

\begin{restatable}{lem}{lemoneside}
\label{lem:hardcodeonesided}
Suppose $P$ is $C$-bounded for some known $C > 1$. Let $k_1 \ge 2$. Then
\begin{equation}
    \Pr\left[X \in \left(\mu + \frac{\E[Z]}{2k_1}, \mu + 16C\E[Z]\right)\right] \ge \frac{1}{128C}
    \label{eq:lem1}
\end{equation}
and 
\begin{equation}
    \Pr\left[X \in \left(\mu - 16C\E[Z], \mu - \frac{\E[Z]}{2k_1}\right)\right] \ge \frac{1}{128C}.
    \label{eq:lem2}
\end{equation}
\end{restatable}
This implies that, there exists two disjoint intervals $B_{\ell_1}$ and $B_{\ell_2}$ with support in $P$. If $|S| < 2$, then at least one of these two intervals did not receive any samples from $\x$, and either $\ell_1 \notin S$ or $\ell_2 \notin S$. We begin by lower bounding the expected number of samples in $B_{\ell_1}$, i.e. $\E[\cxo]$. Lemma~\ref{lem:hardcodeonesided} tells us that $$\Pr_{X \leftarrow P}\left[X \in \left(\mu + \frac{\E[Z]}{2k_1}, \mu + 16C\E[Z]\right)\right] \ge \frac{1}{128C}.$$
The size of each interval $\Bl$ is $\hat{m}/\up$, and the size of the interval $\left(\mu + \frac{\E[Z]}{2k_1}, \mu + 16C\E[Z]\right)$ is at most $16C\E[Z]$. Thus, the number of intervals $\Bl$ within $\left(\mu + \frac{\E[Z]}{2k_1}, \mu + 16C\E[Z]\right)$ is at most
\[\frac{16C\E[Z]}{\hat{m}/\up} \le \frac{16C\E[Z]\up}{\E[Z]} \le 16C\up.\] This implies that there exists an $\ell_1$ such that $B_{\ell_1} \subseteq \left(\mu + \frac{\E[Z]}{2k_1}, \mu + 16C\E[Z]\right)$ and $$\E[\cxo] \ge \frac{n}{128C}\cdot\frac{1}{16C\up} = \frac{n}{kC^3\sqrt{\log C}}.$$ Thus, it follows that

\begin{align*}
    \Pr_{\x \leftarrow P^n, \Zl}\left[\hat{c}_{\ell_1}(\x) < \Tip\right] &= \Pr\left[\cxo + \Zl < \Tip\right] \\
    &\le \Pr\left[\cxo < \Tip + \frac{16\ln(16/\delta)}{\eps} \right]\\
    &\le \Pr\left[\cxo < \Tip + \frac{n}{kC^3\sqrt{\log C}}\right] \tag{by the assumption on $n$}\\
    &\le \Pr\left[\cxo < \frac{\E[\cxo]}{2}\right] \\
    &\le \exp\left(-\frac{\E[\cxo]}{8}\right) \tag{by a Chernoff bound}\\
    &\le \exp\left(-\frac{\ln (2/\beta)}{8}\right) \tag{by the assumption on $n$}\\
    &\le \frac{\beta}{2}.
\end{align*}

By symmetry, we can also show that there exists an $\ell_2$ satisfying $B_{\ell_2}\subseteq \left(\mu - 16C\E[Z], \mu - \frac{\E[Z]}{2k_1}\right)$ and \\ $\Pr\left[\cxt < \Tip\right] \le \beta/2$. Putting the pieces together, we have that

\begin{align*}
\Pr[|S| < 2] &= \Pr[B_{\ell_1} \notin S \cup B_{\ell_2} \notin S]\\ 
    &\le \Pr\left[\cxo < \Tip\right] + \Pr\left[\cxt < \Tip\right] \\
    &\le \beta/2 + \beta/2 \\
    &= \beta
\end{align*}
which completes the proof.
\end{proof}

\section{Medians}\label{sec:med}
In this section, we introduce a private algorithm (Algorithm~\ref{alg:med-alg}) for finding an $\alpha$-approximate median of a distribution. We show that, if the middle $2\alpha$-percentile of the data distribution is $C$-bounded, then the algorithm returns an $\alpha$-approximation of the median with probability at least $1 - \beta$.

As a convention in this section, we shall use $P$ to refer to the data distribution from which $\x$ is sampled. We will then use $P_{\alpha}$ to refer to the middle $2\alpha$-percentile of $P$, that is,  $P_{\alpha} = P \mid P \in (Q_P(0.5 - \alpha), Q_P(0.5 + \alpha))$. Note that, rather than requiring that $P$ is $C$-bounded, we require that $P_{\alpha}$ is $C$-bounded.


\paragraph{Overview of Algorithm~\ref{alg:med-alg}.} Suppose we had direct sample access to the data distribution $P_{\alpha}$. An interior point of $P_{\alpha}$ is trivially an $\alpha$-approximation to the median of $P$. If $P_{\alpha}$ is $C$-bounded, then by Theorem~\ref{thm:ip}, we could obtain an $\alpha$-approximation to the median. Unfortunately, we cannot assume direct sample access to $P_{\alpha}$ without infinitely-many samples. Thus, Algorithm~\ref{alg:med-alg} instead takes as input the dataset $\x \leftarrow P^n$, isolates samples $\xp \subseteq \x$ which make up \emph{almost} the middle $2\alpha$ fraction of $\x$, and runs Algorithm~\ref{alg:ip-alg} on this smaller dataset $\xp$. While $\xp$ is not sampled i.i.d. from $P_{\alpha}$, we prove that, with high probability, $\xp$ comes from a family of distributions similar to $P_{\alpha}$ that are $C'$-bounded for some $C' = O(C)$. 

To construct $\xp$, we isolate the middle $(2\alpha - 1/k)$-percentile of $\x$, for some parameter $k$ that ends up being a function of $C$ and $\alpha$. The parameter $k$ plays a critical role here, as it guarantees that $\xp$ ends up coming from a distribution that is \emph{contained} in $P_{\alpha}$, rather than from a distribution that \emph{contains} $P_{\alpha}$. As we shall see in the analysis, this distinction allows for us to establish that the distribution $P'$ from which $\xp$ is sampled is $O(C)$-bounded. 

We now introduce Theorem~\ref{thm:med} which gives the formal guarantees of Algorithm~\ref{alg:med-alg}:

\begin{algorithm}[H]
  \caption{Median Algorithm}
  \label{alg:med-alg}
  \DontPrintSemicolon
  \SetKwFunction{FMain}{main}
  \SetKwFunction{Fg}{estimate-first-moment}
  \SetKwFunction{Ff}{find-interior-point}
  \SetKwProg{Fn}{Function}{:}{}
  
  \SetKwProg{Pn}{Function}{:}{\KwRet}
  \Fn{\FMain{$\x$; $\eps$,$\delta$,$\alpha$, $C$}}{
        Let $k = 1024C\alpha^{-1}$.\;
        Let $\xp = \left\{x_i \mid x_i \in \left(Q_{\x}\left(0.5 - \alpha + \frac{1}{2k}\right),Q_{\x}\left(0.5+\alpha - \frac{1}{2k}\right)\right)\right\}$.\;
        \KwRet $\text{\main}(\xp; \eps, \delta, 64C)$. \;
  }
\end{algorithm}

\begin{thm}
Let $\beta, \eps, \delta \in (0,1)$, and $\alpha \in (0,0.25)$. Suppose $P$ is a data distribution such that the conditional distribution on the middle $2\alpha$-percentile of $P$ has  $C$-bounded normalized variance for some $C > 2$. If $\x$ contains $n$ datapoints where $$n \ge k_0\max\left(\frac{C^3\sqrt{\log C}(\eps^{-1}\ln \delta^{-1} + \ln \beta^{-1})}{\alpha},\frac{C^2\ln \beta^{-1}}{\alpha^2}\right)$$ for a sufficiently large positive constant $k_0$, then Algorithm~\ref{alg:med-alg} returns an $\alpha$-approximation of the median with probability at least $1 - \beta$. In addition, for any $\epsilon, \delta \in (0, 1)$ and $C > 0$, we have that Algorithm~\ref{alg:med-alg} is $(\epsilon, \delta)$-differentially private.

\label{thm:med}
\end{thm}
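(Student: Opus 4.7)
The plan is to reduce Theorem~\ref{thm:med} to Theorem~\ref{thm:ip} applied to the truncated subsample $\xp$. Three things need verification: (i) the preprocessing map $\x \mapsto \xp$ has Hamming sensitivity $1$, so Algorithm~\ref{alg:med-alg} inherits the $(\eps,\delta)$-DP guarantee of \main; (ii) with high probability over $\x$, the set $\xp$ is contained in the interval $(Q_P(0.5-\alpha),\, Q_P(0.5+\alpha))$, so any interior point of $\xp$ is automatically an $\alpha$-approximate median of $P$; and (iii) conditional on its boundary order statistics $A < B$, the remaining entries of $\xp$ are i.i.d.\ from a distribution $P^* \coloneqq P \mid P \in (A,B)$ that is $64C$-bounded, so Theorem~\ref{thm:ip} applied with parameter $64C$ succeeds.

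Step (i) is a short case analysis: if $\x'$ differs from $\x$ by one entry, then in sorted order one element is removed and one inserted, and the ``middle block'' of order statistics at indices in $[\lceil(0.5-\alpha+1/(2k))n\rceil, \lfloor(0.5+\alpha-1/(2k))n\rfloor]$ changes by at most one multiset element. Step (ii) is controlled by the DKW inequality: for $n = \Omega(k^2 \log(1/\beta)) = \Omega(C^2 \log(1/\beta)/\alpha^2)$---matching the second term of the sample-complexity bound---every empirical quantile $Q_{\x}(p)$ lies within $1/(4k)$ of $Q_P(p)$ in quantile space, so the empirical boundaries $A = Q_\x(0.5-\alpha+1/(2k))$ and $B = Q_\x(0.5+\alpha-1/(2k))$ satisfy $Q_P(0.5-\alpha) < A$ and $B < Q_P(0.5+\alpha)$ with probability at least $1-\beta/3$, and a Chernoff bound gives $|\xp| = \Omega(\alpha n)$.

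For step (iii) I use the classical order-statistics fact that, conditional on two inner order statistics equaling $A < B$, the ones strictly between them form an i.i.d.\ sample from $P^*$. Since $(A,B) \subseteq (Q_P(0.5-\alpha), Q_P(0.5+\alpha))$ by step (ii), $P^*$ is $P_\alpha$ further conditioned on $(A,B)$, and the excluded mass is at most an $\eta \coloneqq 1/(2\alpha k) = 1/(2048C)$ fraction of $P_\alpha$. Writing $M_1, V$ for the first absolute central moment and variance of $P_\alpha$ (so $V \leq CM_1^2$) and $M_1^*, V^*$ for those of $P^*$, the bound $V^* \leq \E_{P^*}[(X-\mu_\alpha)^2] \leq V/(1-\eta) \leq 2CM_1^2$ is immediate from minimality of variance at the mean and monotonicity of the unconditioned integral. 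For $M_1^*$, I combine the triangle-inequality bound $2M_1^* \geq \E_{P^*}[|X_1 - X_2|]$, the lower bound $\E_{P_\alpha}[|X_1 - X_2|] \geq M_1$ from Lemma~\ref{lem:b}, and the Cauchy--Schwarz estimate $\E_{P_\alpha}[|X - \mu_\alpha|\,\mathbb{I}_{X \notin (A,B)}] \leq M_1\sqrt{C\eta}$ to get $M_1^* = \Omega(M_1)$. Therefore $V^*/(M_1^*)^2 = O(C) \leq 64C$. Theorem~\ref{thm:ip} on a sample of size $\Theta(\alpha n)$ from a $64C$-bounded distribution requires $\alpha n = \Omega(C^3 \sqrt{\log C}(\eps^{-1}\ln\delta^{-1} + \ln\beta^{-1}))$, matching the first term in the sample-complexity bound; a final union bound over the $O(1)$ failure events each of probability $\beta/3$ completes the proof.

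The main obstacle is the distributional calculation in step (iii): one must argue that conditioning $P_\alpha$ on a high-probability sub-interval can shrink the first absolute moment by at most a constant factor, even while the variance is trivially controlled by monotonicity. Bounded normalized variance together with Cauchy--Schwarz makes this work, but only because $k = 1024C/\alpha$ is chosen so that $\eta = \Theta(1/C)$, which makes the tail contribution $M_1\sqrt{C\eta}$ a small constant fraction of $M_1$. A smaller $k$ would allow $M_1^*$ to shrink arbitrarily relative to $M_1$, and $P^*$ would no longer be $O(C)$-bounded; conversely, a larger $k$ would violate the containment $\xp \subseteq (Q_P(0.5-\alpha), Q_P(0.5+\alpha))$ of step (ii), so this scaling of $k$ is essentially forced.
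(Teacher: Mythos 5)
Your proposal is correct, and its outer structure coincides with the paper's proof: establish privacy via the sensitivity-one preprocessing $\x \mapsto \xp$ plus Theorem~\ref{thm:ip}, use a Chernoff/DKW-type bound (the paper's Lemma~\ref{lem:sampledistribution}) to trap the empirical quantiles so that $\xp$ is an i.i.d.\ sample from a conditioned distribution sitting strictly inside the middle $2\alpha$-percentile, and then invoke Theorem~\ref{thm:ip} with parameter $64C$ on $|\xp| = \Theta(\alpha n)$ points. Where you genuinely diverge is in the key step that the conditioned distribution remains $O(C)$-bounded. The paper proves this via Lemma~\ref{lem:bignice} (applied twice, one tail at a time, in Lemma~\ref{lem:bignice2}), which tracks the shift of the mean under truncation: it uses $|\mu-\mu'| = \frac{1}{k-1}(\mu''-\mu)$, the quantile bound $\mu''-\mu \le 3\sqrt{Ck}\,\E[|X-\mu|]$ (Claim~\ref{clm:212}), and the one-sided mass lemma (Lemma~\ref{lem:hardcodeonesided}) to show the first absolute moment shrinks by at most a factor of $2$. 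You instead avoid mean-shift bookkeeping entirely: you lower-bound the first absolute moment of $P^*$ through the pairwise-difference proxy $\E|X_1-X_2|$ of Lemma~\ref{lem:b}, and control the contribution of the excluded tail by Cauchy--Schwarz, $\E_{P_\alpha}\bigl[|X-\mu_\alpha|\,\mathbb{I}_{X\notin(A,B)}\bigr] \le M_1\sqrt{C\eta}$, which is small precisely because $k=\Theta(C/\alpha)$ makes the excluded fraction $\Theta(1/C)$. Your route is symmetric in the two tails, shorter, and dispenses with the auxiliary claims (Claims~\ref{clm:qgemu}, \ref{clm:211}, \ref{clm:212}); the paper's route yields a clean, reusable standalone statement (one-sided truncation of a $C$-bounded law is $8C$-bounded) with explicit constants and no reference to the pairwise-difference device. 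Both arguments require the same forced scaling of $k$, as you correctly note, and your constant bookkeeping (an $O(C) \le 64C$ normalized variance for $P^*$) checks out.
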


\begin{proof}
    Suppose $\x$ and $\x'$ differ only in one data point. Note that to obtain $\xp$, we sort the elements in $\x$ and take all the elements that have ranks between $\lfloor n \cdot (0.5 - \alpha + \frac{1}{2k})\rfloor$ and $\lfloor n \cdot (0.5+\alpha - \frac{1}{2k})\rfloor$. It is straightforward to show that if we change one data point in $\x$, at most one data point in $\xp$ will be changed. Moreover, previously in Theorem~\ref{thm:ip}, we have shown that the procedure for finding the interior point is $(\epsilon, \delta)$-differentially private. Hence, Algorithm~\ref{alg:med-alg} is $(\epsilon, \delta)$-differentially private.

    By Lemma~\ref{lem:sampledistribution}, we have that with probability at least $1 - \beta/2$ that 
    \begin{equation}
        Q_\x\left(0.5-\alpha+\frac{1}{2k}\right) \in \left(Q_P(0.5-\alpha),Q_P\left(0.5-\alpha+\frac{1}{k}\right)\right)
        \label{eq:Q1}
    \end{equation}
    and 
     \begin{equation}
        Q_\x\left(0.5+\alpha-\frac{1}{2k}\right) \in \left(Q_P\left(0.5+\alpha-\frac{1}{k}\right),Q_P(0.5+\alpha)\right)
        \label{eq:Q2}
    \end{equation} 
    For the rest of the proof, we condition on some arbitrary fixed outcome for the values of $Q_{\x}(0.5-\alpha+\frac{1}{2k})$ and $Q_{\x}(0.5+\alpha-\frac{1}{2k})$ such that \eqref{eq:Q1} and \eqref{eq:Q2} are satisfied.
    
    Note that, once we condition on the outcomes of $Q_\x(0.5-\alpha+\frac{1}{2k})$ and $Q_\x(0.5+\alpha-\frac{1}{2k})$, then $\xp$ consists of i.i.d. samples from the distribution $P_m \coloneqq P \mid P \in (Q_x(0.5-\alpha+\frac{1}{2k}),Q_x(0.5+\alpha-\frac{1}{2k}))$. By \eqref{eq:Q1} and \eqref{eq:Q2}, this distribution $P_m$ can be expressed as
    $$P_m = P|P\in \left(Q_P\left(0.5-\alpha+\frac{1}{k_1}\right),Q_P\left(0.5+\alpha-\frac{1}{k_2}\right)\right)$$
    for some $k_1, k_2 \ge k$. 
    
    Finally, since $k = 2048C/(2\alpha)$, we can apply Lemma~\ref{lem:bignice2}, which says that $P_m$ is $64C$-bounded. Also note that, by assumption on $n$, $$|\xp| = \left(2\alpha - \frac{1}{k}\right)n >k_0 C^3\sqrt{\log C} \cdot \left( \eps^{-1}\ln \delta^{-1} + \ln \beta^{-1} \right).$$ Thus, we can apply Theorem~\ref{thm:ip}, which says that, with probability at least $1-\beta/2$, $\main(\xp,\eps,\delta,64C)$ will return an interior point to $\xp$. This, in turn, is an $\alpha$-approximation of the median of $P$ with probability at least $1-\beta$. 
\end{proof}

We remark that, in the following lemmas, as well as in the statement of Algorithm~\ref{alg:med-alg}, we have not made any effort to optimize the constants involved. We have opted to give explicit constants for concreteness, but with more careful bookkeeping, these constants could almost certainly be made much smaller.

Our next lemma establishes that the endpoints of $\xp$ are guaranteed to (1) be contained within the middle $2\alpha$-percentile of $P$; and (2) be very close to the endpoints of that middle $2\alpha$-percentile.
\begin{lem}
Let $\beta \in (0,1)$, $\alpha \in (0,0.25)$, and $k \ge 1$. Let $P$ be a distribution, and let $\x \leftarrow P^n$ for $n \ge 108k^2\log(4/\beta)$. With probability at least $1-\beta$, $$Q_{\x}\left(0.5 - \alpha + \frac{1}{2k}\right) \in \left(Q_P(0.5 - \alpha),Q_P\left(0.5 - \alpha + \frac{1}{k}\right)\right)$$ and $$Q_{\x}\left(0.5 + \alpha - \frac{1}{2k}\right) \in \left(Q_P\left(0.5 + \alpha - \frac{1}{k}\right),Q_P(0.5 + \alpha)\right).$$
\label{lem:sampledistribution}
\end{lem}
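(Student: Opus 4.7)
The plan is a standard quantile concentration argument: for each of the four endpoint inequalities (two for the lower sample quantile, two for the upper one), I will rewrite the event about the sample order statistic as a statement about the count of samples falling on one side of a fixed population quantile, then apply a Chernoff bound on that count, and finish with a union bound.

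Concretely, I set $p = 0.5 - \alpha + \frac{1}{2k}$ and $m = \lfloor pn \rfloor$, so that $Q_{\x}(p) = x_{(m)}$ by the paper's convention. The first failure event, $Q_{\x}(p) \le Q_P(0.5 - \alpha)$, is equivalent to $N_L \ge m$, where $N_L = |\{i : x_i \le Q_P(0.5 - \alpha)\}|$; using the definition $Q_P(q) = \inf\{y : F_P(y) \ge q\}$, each indicator $\mathbb{I}[x_i \le Q_P(0.5-\alpha)]$ has probability at most $0.5-\alpha+$ (the right-continuous jump, which I handle by switching to $<$ where necessary), and the deviation we must forbid is at least $\frac{n}{2k} - 1$ above the mean. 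The second event, $Q_{\x}(p) \ge Q_P(0.5 - \alpha + \frac{1}{k})$, is equivalent to $N_U < m$ where $N_U = |\{i : x_i < Q_P(0.5 - \alpha + \frac{1}{k})\}|$; by the quantile definition, $\E[N_U] \ge (0.5 - \alpha + \frac{1}{k})n$, and we need the downward deviation to be at least about $\frac{n}{2k}$. A completely symmetric analysis handles the two failure modes of $Q_{\x}(0.5 + \alpha - \frac{1}{2k})$.

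For each of the four events I will invoke a multiplicative Chernoff bound for a sum of Bernoullis. Because $\alpha \in (0, 0.25)$, the relevant expectations are all $\Theta(n)$ (specifically between $n/4$ and $3n/4$), so the Chernoff exponent works out to order $n/k^2$. The constant $108$ in the hypothesis $n \ge 108 k^2 \log(4/\beta)$ is chosen to absorb the constant in the Chernoff bound, the $1/n$ slack coming from the floor in $m = \lfloor pn \rfloor$, and the factor of $4$ from the union bound, so that each of the four bad events individually has probability at most $\beta/4$. Union bounding over them yields failure probability at most $\beta$, as required.

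I don't anticipate a serious obstacle: the argument is textbook quantile concentration. The only mild care needed is keeping track of strict versus non-strict inequalities in the quantile definition so that $\E[N_L]$ is bounded above and $\E[N_U]$ is bounded below in the directions favorable to Chernoff; this is where the choice of $\le$ for $N_L$ and $<$ for $N_U$ matters, and it is robust to whether $F_P$ has point masses at the boundary quantiles.
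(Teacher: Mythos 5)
Your proposal is correct and follows essentially the same route as the paper: rewrite each order-statistic event as a count of indicators relative to a fixed population quantile, apply a multiplicative Chernoff bound (with the $\alpha<0.25$ assumption keeping the expectations of order $n$), and union bound the four bad events at $\beta/4$ each, with the constant $108$ absorbing the Chernoff exponents. Your explicit care with strict versus non-strict inequalities at possible point masses is in fact slightly more careful than the paper's proof, which implicitly takes $\Pr[X < Q_P(p)] = p$.
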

\begin{proof}
Let $X_1,\ldots, X_n$ be $n$ i.i.d. samples from $P$, and define $Y_i$ for all $i \in \{1,\ldots,n\}$ as
\[
Y_i = 
\begin{cases}
1 & \text{if } X_i < Q_P(0.5 - \alpha)\\
0 & \text{otherwise}.
\end{cases}
\]
Define $Y \coloneqq \sum_{i = 1}^n Y_i$. Note that 
$\E[Y] = \sum_{i = 1}^n \Pr[X_i < Q_P(0.5 - \alpha)] = (0.5 - \alpha)n$. Thus,

\begin{align*}
    \Pr\left[Q_{\x}\left(0.5 - \alpha + \frac{1}{2k}\right) \le Q_P(0.5 - \alpha)\right]
    &=\Pr\left[Y \ge \left(0.5 - \alpha + \frac{1}{2k}\right)n\right]\\
    &=\Pr\left[Y \ge (0.5 - \alpha)n\left(1 + \frac{1}{2k(0.5 - \alpha)}\right)\right]\\
    &=\Pr\left[Y \ge \E[Y]\left(1 + \frac{1}{2k(0.5 - \alpha)}\right)\right]\\
    & \le \Pr\left[Y \ge \E[Y]\left(1 + \frac{1}{k}\right)\right]\\
    & \le \exp\left(\frac{-\E[Y]}{3k^2}\right) \tag{by a Chernoff bound}\\
    & = \exp\left(\frac{-(0.5 - \alpha) n}{3k^2}\right) \\
    & \le \exp\left(\frac{-n}{12 k^2}\right) \\
    &\le \beta / 4 \tag{by the assumption on $n$}.
\end{align*}

Likewise, for all $i \in \{1,\ldots,n\}$, let
\[
Z_i = 
\begin{cases}
1 & \text{if } X_i > Q_P\left(0.5 - \alpha + \frac{1}{k}\right)\\
0 & \text{otherwise}.
\end{cases}
\]
Define $Z \coloneqq \sum_{i = 1}^n Z_i$. Note that 
$\E[Z] = \sum_{i = 1}^n \Pr\left[X_i > Q_P\left(0.5 - \alpha + \frac{1}{k}\right)\right] = \left(0.5 - \alpha + \frac{1}{k}\right)n$. Thus,

\begin{align*}
    \Pr\left[Q_{\x}\left(0.5 - \alpha + \frac{1}{2k}\right) \ge Q_P\left(0.5 - \alpha + \frac{1}{k}\right)\right]
    &=\Pr\left[Z \le \left(0.5 - \alpha + \frac{1}{2k}\right)n\right]\\
    &=\Pr\left[Z \le \left(0.5 - \alpha + \frac{1}{k}\right)n\left(1 - \frac{1}{2k(0.5 - \alpha + \frac{1}{k})}\right)\right]\\
    &=\Pr\left[Z \le \E[Z]\left(1 - \frac{1}{2k(0.5 - \alpha + \frac{1}{k})}\right)\right]\\
    &\le \Pr\left[Z \le \E[Z]\left(1 - \frac{1}{3k}\right)\right]\\
    &\le \exp\left(\frac{-\E[Z]}{27k^2}\right) \tag{by a Chernoff bound}\\
    &= \exp\left(\frac{-\left(0.5 - \alpha + \frac{1}{k}\right)n}{27k^2}\right)\\
    &\le \exp\left(\frac{n}{108k^2}\right) \\
    &\le \beta / 4 \tag{by the assumption on $n$}.
\end{align*}
Thus, by a union bound, we have that, with probability at most $\beta/2$, $Q_{\x}(0.5 - \alpha + \frac{1}{2k}) \notin (Q_P(0.5 - \alpha),Q_P(0.5 - \alpha + 1/k))$. By symmetry, it follows that $Q_{\x}(0.5 + \alpha - \frac{1}{2k}) \notin (Q_P(0.5 + \alpha - 1/k),Q_P(0.5 + \alpha))$ with probability at most $\beta / 2$. Thus by another union bound, the lemma holds.
\end{proof}

Our next lemma shows that, if we take a $C$-bounded distribution $P_{\alpha}$ and condition on being in its first $1 - 1/k$ percentile for large enough $k$, then the resulting conditional distribution $P_{\alpha}'$ will be $O(C)$-bounded. Note that the complement of this is not true: if the conditional distribution $P_{\alpha}'$ is $C$-bounded, then the larger distribution $P_{\alpha}$ need not be $O(C)$-bounded. This is why it is critical that $\xp$ is constructed in such a way that it is contained within the middle $2\alpha$-percentile of $P$ (rather than containing the middle $2\alpha$-percentile). 
\begin{lem}
    Let $C > 1$, and let $P_{\alpha}$ be a $C$-bounded distribution. Fix any number $k \ge 128C$. Define $P_{\alpha}' \coloneqq P_{\alpha} \mid P_{\alpha} \in [Q_{P_{\alpha}}(0),Q_{P_{\alpha}}(1-1/k)]$ to be $P_{\alpha}$ conditioned on being in the first $1 - 1/k$ percentile. It follows that $P_{\alpha}'$ is $8C$-bounded.
    \label{lem:bignice}
\end{lem}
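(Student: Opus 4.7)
The plan is to show that conditioning on the high-probability event $E = \{X \le T\}$, where $T = Q_{P_\alpha}(1-1/k)$, perturbs the first absolute moment, the mean, and the second central moment only by $O(\sqrt{C/k})$ relative factors, so the normalized-variance ratio of $P_\alpha'$ remains within a constant factor of $C$. Let $\mu = \E_{X \sim P_\alpha}[X]$ and $M = \E[|X-\mu|]$, so by hypothesis $\E[(X-\mu)^2] \le CM^2$. Let $\mu'$ be the mean of $P_\alpha'$, and write $M' = \E[|X-\mu'| \mid E]$ and $V' = \E[(X-\mu')^2 \mid E]$ for the quantities appearing in the bounded-normalized-variance ratio of $P_\alpha'$. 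My goal is to prove $V' \le 8C (M')^2$.

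First, I would upper-bound $V'$. Since $\mu'$ minimizes $\E[(X-c)^2 \mid E]$ over $c$, I can replace $\mu'$ by $\mu$ and then drop the conditioning at the cost of a factor $1/\Pr[E]$, yielding $V' \le \E[(X-\mu)^2]/\Pr[E] \le CM^2/(1-1/k)$. Next I would control the mean shift using the identity
\[
\mu - \mu' \;=\; \E[(X-\mu)\mathbb{I}_{E^c}]/\Pr[E],
\]
which follows from $\E[X] = \Pr[E]\,\E[X \mid E] + \Pr[E^c]\,\E[X \mid E^c]$. Cauchy--Schwarz then gives $\E[|X-\mu|\mathbb{I}_{E^c}] \le \sqrt{\Pr[E^c]\,\E[(X-\mu)^2]} \le M\sqrt{C/k}$, so $|\mu - \mu'| \le M\sqrt{C/k}/(1-1/k)$.

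For the lower bound on $M'$, the triangle inequality gives $M' \ge \E[|X-\mu| \mid E] - |\mu - \mu'|$, and writing $M = \Pr[E]\,\E[|X-\mu| \mid E] + \E[|X-\mu|\mathbb{I}_{E^c}]$ together with the same Cauchy--Schwarz tail bound yields $\E[|X-\mu| \mid E] \ge M(1-\sqrt{C/k})/(1-1/k)$. Combining,
\[
M' \;\ge\; M\bigl(1 - 2\sqrt{C/k}\bigr)/(1-1/k) \;\ge\; (3/4)\,M,
\]
since $k \ge 128C$ forces $2\sqrt{C/k} \le 2/\sqrt{128} < 1/4$ and $1 - 1/k \ge 127/128$. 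Plugging these into the ratio gives $V'/(M')^2 \le (CM^2 \cdot 128/127)/((9/16)M^2) < 2C \le 8C$, which is the desired conclusion.

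The main obstacle is just the bookkeeping: one must track several small error terms of sizes $\sqrt{C/k}$ and $1/k$ and check that the slack in $k \ge 128C$ is enough to drive $V'/(M')^2$ well below $8C$. The conceptual core is nothing more than the observation that cutting off a probability-$1/k$ right tail of a $C$-bounded distribution is a mild perturbation: the mean moves by $O(M\sqrt{C/k})$ via Cauchy--Schwarz on the truncated portion, $V'$ is at most an $O(1)$-factor inflation of $\E[(X-\mu)^2]$, and $M'$ shrinks by at most the same $O(\sqrt{C/k})$ amount, so the normalized ratio only picks up an $O(1)$ factor.
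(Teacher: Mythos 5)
Your proof is correct, and it reaches the conclusion by a genuinely simpler route than the paper's. The overall skeleton is the same as the paper's proof of Lemma~\ref{lem:bignice}: bound the truncated second moment above by an $O(1)$ inflation of $\E_X[|X-\mu|^2]$ (both you and the paper use that $\mu'$ minimizes the conditional second moment and then unwind the conditioning at cost $1/\Pr[E]$), and bound the truncated first absolute moment below by a constant times $\E_X[|X-\mu|]$. The difference is in how that lower bound is obtained. The paper writes $|\mu-\mu'|$ in terms of the tail mean via the exact splitting identity of Claim~\ref{clm:211} (which implicitly treats the conditioning event as having probability exactly $1-1/k$, and needs Claim~\ref{clm:qgemu} and hence Lemma~\ref{lem:hardcodeonesided} to fix signs) and then controls $\mu''-\mu$ through the dyadic Chebyshev layer-cake argument of Claim~\ref{clm:212}, giving $\mu''-\mu \le 3\sqrt{Ck}\,\E_X[|X-\mu|]$. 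You replace all of this with a single Cauchy--Schwarz step, $\E\bigl[|X-\mu|\,\mathbb{I}_{E^c}\bigr] \le \sqrt{\Pr[E^c]\,\E[(X-\mu)^2]} \le \sqrt{C/k}\,\E_X[|X-\mu|]$, which simultaneously controls the mean shift (via $\mu-\mu' = \E[(X-\mu)\mathbb{I}_{E^c}]/\Pr[E]$, which you verify correctly) and the loss in the first moment; this is both shorter and more robust, since it only needs $\Pr[E^c]\le 1/k$ rather than equality, and it even yields a better constant than $8C$. One small algebraic nit: in your lower bound on $\E[|X-\mu|\mid E]$ you divide by $1-1/k$, but since $\Pr[E]$ may exceed $1-1/k$ (an atom at the quantile), the correct move is to divide by $\Pr[E]\le 1$, giving the slightly weaker $\E[|X-\mu|\mid E]\ge (1-\sqrt{C/k})\,\E_X[|X-\mu|]$; your final estimate $M'\ge \tfrac34 M$ and the conclusion $V'/(M')^2 < 2C \le 8C$ still go through with ample slack.
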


\begin{proof}
    We introduce notation which will be used throughout the proof. Let $P_{\alpha}''$ be the distribution $P_{\alpha}$ conditioned on being in the last $1/k$ percentile, i.e. $P_{\alpha}'' \coloneqq P_{\alpha} \mid P_{\alpha} \in [Q_{P_{\alpha}}(1-1/k),Q_{P_{\alpha}}(1)]$. Finally, let $X\leftarrow P_{\alpha}$, $X'\leftarrow P_{\alpha}'$, and $X''\leftarrow P_{\alpha}''$; let $\mu \coloneqq \E_X[X]$, $\mu' \coloneqq \E_{X'}[X']$, and $\mu'' \coloneqq \E_{X''}[X'']$.
    
    To show that $P_{\alpha}$ is $8C$-bounded, we must show that $$\frac{\E_{X'}[|X'-\mu'|^2]}{\E_{X'}[|X'-\mu'|]^2} \le 8C.$$ By the $C$-boundedness of $P_{\alpha}$, it suffices to show
    \begin{equation}
    \frac{\E_{X'}[|X'-\mu'|^2]}{\E_{X'}[|X'-\mu|]^2} \le 8\cdot \frac{\E_X[|X-\mu|^2]}{\E_X[|X-\mu|]^2}.
    \label{eq:something}
    \end{equation}
    We break the proof into two pieces by showing
    \begin{equation}
        \E_{X'}[|X'-\mu'|^2] \le 2\E_{X}[|X-\mu|^2]
        \label{eq:something1}
    \end{equation}
    and 
    \begin{equation}
        \E_{X'}[|X'-\mu'|]^2 \ge \frac{1}{4}\E_{X}[|X-\mu|]^2.
        \label{eq:something2}
    \end{equation}
    To prove \eqref{eq:something1}, note that $\E[|X'-\mu'|^2] \le \E[|X'-\mu|^2]$ since $\mu'$ minimizes the expectation. It follows that
    \begin{align*}
        \E_X\left[|X-\mu|^2\cdot\mathbb{I}_{X < Q_{P_{\alpha}}(1-1/k)}\right] &= \E_X\left[|X-\mu|^2 \mid X < Q_{P_{\alpha}}(1-1/k)\right]\cdot\Pr_X\left[X < Q_{P_{\alpha}}(1-1/k)\right]\\
        &=\E_{X'}\left[|X'-\mu|^2\right](1-1/k)
    \end{align*}
    which rearranges to
    \begin{align*}
        \E_{X'}[|X'-\mu|^2] &= \frac{\E_X\left[|X-\mu|^2\cdot\mathbb{I}_{X < Q_{P_{\alpha}}(1-1/k)}\right]}{1-1/k}\\
        &\le 2\E_X\left[|X-\mu|^2\cdot\mathbb{I}_{X < Q_{P_{\alpha}}(1-1/k)}\right] \tag{by assumption on $k$}\\
        &\le 2\E_X[|X-\mu|^2],
    \end{align*}
    and so indeed \eqref{eq:something1} is true. To prove \eqref{eq:something2}, we begin by expanding $\E_X[|X-\mu|]$ in the following way:
    \begin{align*}
        \E_X[|X-\mu|] &= \Pr_X\left[X < Q_{P_{\alpha}}(1-1/k)\right]\cdot\E_X\left[|X-\mu|\mid X < Q_{P_{\alpha}}(1-1/k) \right] \\
        &\phantom{ooooooo}+ \Pr_X\left[X \ge Q_{P_{\alpha}}(1-1/k)\right]\cdot\E_X\left[|X-\mu|\mid X \ge Q_{P_{\alpha}}(1-1/k)\right]\\
        &= \frac{k-1}{k}\cdot\E_{X'}[|X'-\mu|] + \frac{1}{k}\cdot\E_{X''}[|X''-\mu|]\\
        &\le \frac{k-1}{k}\cdot\E_{X'}[|X'-\mu'|] + \frac{k-1}{k}\cdot|\mu'-\mu| + \frac{1}{k}\cdot\E_{X''}[|X''-\mu|]
    \end{align*}
    which rearranges to
    \begin{equation}
    \E_{X'}[|X'-\mu'|] \ge \frac{k}{k-1}\left(\E_X[|X-\mu|] - \frac{k-1}{k}|\mu'-\mu| - \frac{1}{k}\E_{X''}[|X''-\mu|]\right). 
    \label{eq:P6}
    \end{equation}
    To lowerbound $\E_{X'}\left[|X'-\mu'|\right]$ as in \eqref{eq:something2}, we seek to upperbound $|\mu'-\mu|$ and $\E[|X''-\mu|]$ in terms of $\E_X[|X-\mu|]$. As an intermediate step, we can express $|\mu'-\mu|$ and $\E[|X''-\mu|]$ in terms of $(\mu''-\mu)$ and then upperbound $(\mu''-\mu)$ in terms of $\E_X[|X-\mu|]$. It is not difficult to show that $|\mu - \mu'| = \frac{1}{k-1}(\mu'' - \mu)$ and $\E[|X''-\mu|] = \mu''-\mu$ using both Chebyshev's and the $C$-boundedness of $P_{\alpha}$ (see Claim~\ref{clm:211} and Claim~\ref{clm:qgemu} in Appendix~\ref{sec:app-med} for the full details). 

    We upperbound $\mu''-\mu$ by the following claim, which we prove in Appendix~\ref{sec:app-med}:
    \begin{restatable}{clm}{clmc}
        If $k \ge 128C$, then $\mu'' - \mu \le 3\sqrt{Ck}\E_X[|X-\mu|]$.
        \label{clm:212}
    \end{restatable}
    Putting the pieces together, we see that
    \begin{align*}
        \E_{X'}[|X'-\mu'|] &\ge \frac{k}{k-1}\left(\E_X[|X-\mu|] - \frac{k-1}{k}|\mu'-\mu| - \frac{1}{k}\E_{X''}[|X''-\mu|]\right)\\
        &= \frac{k}{k-1}\left(\E_X[|X-\mu|] - \frac{1}{k}(\mu''-\mu) - \frac{1}{k}(\mu''-\mu)\right)\\
        &\ge \frac{k}{k-1}\left(1- \frac{2}{k}\cdot 3\sqrt{Ck}\right)\E_X[|X-\mu|] \tag{by Claim~\ref{clm:212}}\\
        &\ge \frac{1}{2}\E_X[|X-\mu|] \tag{by assumption on $k$}.
    \end{align*}
    This implies $\E_{X'}[|X'-\mu'|]^2 \ge \frac{1}{4}\E_X[|X-\mu|]^2$, proving \eqref{eq:something2} and thus completing the proof of the lemma.
\end{proof}

Applying Lemma~\ref{lem:bignice} twice, one arrives at the two-sided version of it that we use in the proof of the theorem:

\begin{lem}
    Let $C > 1$, and let $P_{\alpha}$ be a $C$-bounded distribution. Define $P_m \coloneqq P_{\alpha} \mid P_{\alpha} \in [Q_{P_{\alpha}}(1/k_1),Q_{P_{\alpha}}(1-1/k_2)]$ for some $k_1, k_2$. If $k_1, k_2 \ge 2048C$, then $P_m$ is $64C$-bounded.
    \label{lem:bignice2}
\end{lem}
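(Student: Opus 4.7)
The plan is to iterate Lemma~\ref{lem:bignice}: apply it once to chop the top tail off $P_\alpha$, and then apply a reflected version to chop the bottom tail off the resulting truncated distribution. Lemma~\ref{lem:bignice} as stated only handles truncation of the upper tail, but its proof is completely insensitive to the reflection $X \mapsto -X$, so the same $8\,C$-factor bound applies to conditioning on the last $1-1/k$ percentile (under the same threshold $k \ge 128 C$). I will invoke this symmetric version without re-proving it.

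For the first step, I would apply Lemma~\ref{lem:bignice} directly to $P_\alpha$ with parameter $k_2$. Since $k_2 \ge 2048 C \ge 128 C$, the hypothesis is satisfied, and we obtain that $P_1 \coloneqq P_\alpha \mid P_\alpha \in [Q_{P_\alpha}(0), Q_{P_\alpha}(1-1/k_2)]$ is $8\,C$-bounded. For the second step, I would apply the reflected version of Lemma~\ref{lem:bignice} to $P_1$, truncating its bottom $1/\hat k$ percentile, where $\hat k$ is chosen so that the cutoff point $Q_{P_1}(1/\hat k)$ coincides with $Q_{P_\alpha}(1/k_1)$. A short calculation shows $\hat k = k_1 (1 - 1/k_2)$. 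Because $k_1 \ge 2048 C$ and $k_2 \ge 2$, we get $\hat k \ge k_1/2 \ge 1024 C = 128 \cdot (8 C)$, which is exactly the threshold required to apply Lemma~\ref{lem:bignice} to the $8\,C$-bounded distribution $P_1$. The resulting distribution $P_2$ is therefore $8 \cdot 8 C = 64 C$-bounded.

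The final step is just to verify $P_2 = P_m$, which is immediate from the construction: truncating $P_\alpha$ above at the $(1 - 1/k_2)$-quantile and then truncating the result below at the rescaled quantile that matches $Q_{P_\alpha}(1/k_1)$ yields exactly the distribution of $P_\alpha$ conditioned on $[Q_{P_\alpha}(1/k_1), Q_{P_\alpha}(1-1/k_2)]$, i.e.\ $P_m$. The main (mild) obstacle is the parameter bookkeeping: one must ensure that each application of Lemma~\ref{lem:bignice} satisfies $k \ge 128 \cdot (\text{current boundedness constant})$ and track how the bottom-percentile cutoff rescales when measured inside $P_1$ rather than $P_\alpha$. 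The constant $2048 C$ in the hypothesis is chosen with enough slack to absorb both the doubling of the boundedness constant from $C$ to $8C$ and the factor-of-two loss from the rescaling $k_1 \mapsto k_1(1 - 1/k_2)$.
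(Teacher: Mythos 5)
Your proof is correct and follows essentially the same route as the paper: apply Lemma~\ref{lem:bignice} once with parameter $k_2$ to truncate the upper tail and get an $8C$-bounded distribution, then apply it again (for the lower tail) with the rescaled parameter $k_1(1-1/k_2) \ge k_1/2 \ge 128\cdot 8C$ to conclude $64C$-boundedness, noting that the second conditioning is exactly $P_m$. The only cosmetic difference is that you explicitly invoke the reflected ($X \mapsto -X$) version of Lemma~\ref{lem:bignice} for the bottom-tail truncation, a symmetry the paper uses implicitly.
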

\begin{proof}
   Define $P_{\alpha}' = P_{\alpha} \mid P_{\alpha} \in [Q_{P_{\alpha}}(0),Q_{P_{\alpha}}(1-1/k_2)]$. As $P_{\alpha}$ is $C$-bounded and $k_2 \ge 128C$, we have by Lemma~\ref{lem:bignice} that $P_{\alpha}'$ is $8C$-bounded. 
   Next note that $P_m = P_{\alpha} \mid P_{\alpha} \in  [Q_{P_{\alpha}}(1/k_1),Q_{P_{\alpha}}(1-1/k_2)] = P_{\alpha}' \mid P_{\alpha}' \in \left[ Q_{P_{\alpha}}\left( \frac{1}{k_1 (1 - 1/k_2)}\right), Q_{P_{\alpha}}(1) \right]$. 
   As $P_{\alpha}'$ is $C'$-bounded and $k_1 (1 - 1/k_2) \ge k_1/2 \ge 128C'$, we have by Lemma~\ref{lem:bignice} that $P_m$ is $8C'$-bounded and thus $P_m$ is $64C$-bounded.
\end{proof}

\bibliographystyle{alpha}
\bibliography{bib}

\appendix
\section{Appendix}

\subsection{Technical proofs for \texttt{estimate-first-moment}}
In this section, let $P$ be a data distribution. Let $\mu = \E[P]$. Let $Z = |P - \mu|$. Let $Q = |X - X'|$ for independent $X, X' \leftarrow P$.

\lemb*
\begin{proof}
The upper bound follows from the triangle inequality: 

\begin{align*}
    \E[Q] &= \E[|X - X'|]\\
    &= \E[|X - \mu - X' + \mu|]\\
    &= \E[|(X - \mu) - (X' - \mu)|] \\
    &\le \E[|X - \mu|] + \E[|X' - \mu|]\\
    &= 2\E[|X - \mu|]
    \,.
\end{align*}

For the lower bound, we begin by expanding $\E[Q]$ and the total law of probability:

\begin{align*}
    \E[Q] &=\E_{X,X'}\left[|X - X'|\right] 
    \\&= 
    \Pr\left[X > \mu \right] \cdot \E_{X,X'}\left[|X - X'| \mid X > \mu\right] + \Pr\left[X \leq \mu \right] \cdot \E_{X,X'}\left[|X - X'| \mid X \leq \mu\right]
    \\&\geq 
    \Pr\left[X > \mu \right] \cdot \E_{X,X'}\left[X - X' \mid X > \mu\right] + \Pr\left[X \leq \mu \right] \cdot \E_{X,X'}\left[X' - X \mid X \leq \mu\right]
    \\&= 
    \Pr\left[X > \mu \right] \cdot \E_{X}\left[X - \mu \mid X > \mu\right] + \Pr\left[X \leq \mu \right] \cdot \E_{X}\left[\mu - X \mid X \leq \mu\right]
    \\&= 
    \Pr\left[X > \mu \right] \cdot \E_{X}\left[|X - \mu| \mid X > \mu\right] + \Pr\left[X \leq \mu \right] \cdot \E_{X}\left[|\mu - X| \mid X \leq \mu\right]
    \\& = 
    \E\left[|X - \mu|\right]
    \,.
\end{align*}
\end{proof}

We briefly introduce a claim, useful for the proof of Lemma~\ref{lem:d}.
\begin{clm} Let $X$ and $X'$ be two random variables  independently drawn from $P$, and let $Q$ be the absolute difference of $X$ and $X'$. Then, we have:
$\E[Q^2] = 2\Var(X)$.
\label{clm:a}
\end{clm}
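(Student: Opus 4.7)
The plan is to reduce $\E[Q^2]$ to a variance computation by exploiting the fact that squaring removes the absolute value. Specifically, since $Q = |X - X'|$, we have $Q^2 = (X - X')^2$ exactly, so there is no need to handle the absolute value through case analysis.

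First I would center both $X$ and $X'$ around $\mu = \E[X] = \E[X']$ by writing $X - X' = (X - \mu) - (X' - \mu)$. Expanding the square gives three terms: $(X-\mu)^2$, $(X'-\mu)^2$, and a cross term $-2(X-\mu)(X'-\mu)$. Taking expectations, the first two each contribute $\Var(X)$ (since $X$ and $X'$ are identically distributed), which accounts for the factor of $2$ in the claim.

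The key step is to show the cross term vanishes. Since $X$ and $X'$ are independent, $\E[(X-\mu)(X'-\mu)] = \E[X-\mu]\cdot\E[X'-\mu] = 0\cdot 0 = 0$. Combining the three contributions yields $\E[Q^2] = 2\Var(X)$, as desired.

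There is no real obstacle here; the argument is a short direct calculation relying only on linearity of expectation, independence, and the definition of variance. The only thing to be careful about is not mishandling the absolute value, which is resolved immediately by noting $|X-X'|^2 = (X-X')^2$.
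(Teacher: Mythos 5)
Your proof is correct and follows essentially the same route as the paper: center $X$ and $X'$ at $\mu$, expand $(X-X')^2$, and use independence to kill the cross term, giving $2\Var(X)$. Nothing further is needed.
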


\begin{proof}
\begin{align*}
    \E[Q^2] &= \E\left[|X - X'|^2\right]\\
    &= \E\left[((X - \mu) - (X' - \mu))^2\right]\\
    &= \E\left[(X - \mu)^2 + (X' - \mu)^2 - 2(X - \mu)(X' - \mu) \right]\\
    &= \E\left[(X - \mu)^2\right] + \E\left[(X' - \mu)^2\right] - 2\E\left[(X - \mu)(X' - \mu)\right]\\
    &= 2\E[(X - \mu)^2] - 2\E[X - \mu]\E[X' - \mu] \tag{by independence of $X$ and $X'$}\\
    &= 2\E[(X - \mu)^2]  \\
    &= 2\Var(X)\,.
\end{align*}
\end{proof}

\lemd*
\begin{proof}
We first focus on the variance of $Q$.
\begin{align*}
    \Var(Q) &= \E[Q^2] - \E[Q]^2 \\
    &= 2\,\Var(X) - \E[Q]^2 \tag{by Claim~\ref{clm:a}}\\
    &= 2\,\E[|X-\mu|^2] - \E[Q]^2\\
    &\le 2\,C\,\E[|X-\mu|]^2 - \E[Q]^2 \tag{by assumption}\\
    &\le 4\,C\,\E[Q]^2 - \E[Q]^2 \tag{by Lemma~\ref{lem:b}}\\
    &\le (4\,C-1)\E[Q]^2 \\
    &\le 4\,C\,\E[Q]^2\,.
\end{align*}
Using the above inequality and Chebyshev's inequality, we obtain
\begin{align*}
    \Pr\left[Q - \E[Q] \ge t\,C\,\E[Q]\right]
  & \leq \frac{\Var(Q)}{t^2\, C^2\,\E[Q]^2} \leq \frac{4\,C\,\E[Q]^2}{t^2\, C^2\,\E[Q]^2} = \frac{4}{t^2\,C}\,.
\end{align*}
\end{proof}

We now introduce the following claim, useful for the proof of Lemma~\ref{lem:f}.
\begin{clm}
Suppose $P$ is $C$-bounded for some $C > 1$. Let $\mathcal{I'}$ be the following interval $\left[\frac{1}{2}\E[Q], 128\,C\,\E[Q]\right]$. Then, we have the following bound for the contribution of the domains element in $\mathcal{I'}$ in the expected value of $Q$: $$\E\left[Q \cdot \mathbb{I}_{Q \in \mathcal{I'}}\right] \ge \frac{\E[Q]}{4}\,.$$
\label{clm:e}
\end{clm}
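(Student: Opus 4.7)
The plan is to split $\E[Q]$ into its contributions over three disjoint regions determined by $\mathcal{I'}=[\tfrac{1}{2}\E[Q],\,128C\E[Q]]$:
\[
\E[Q] \;=\; \E[Q\cdot\mathbb{I}_{Q<\frac{1}{2}\E[Q]}] \;+\; \E[Q\cdot\mathbb{I}_{Q\in\mathcal{I'}}] \;+\; \E[Q\cdot\mathbb{I}_{Q>128C\E[Q]}].
\]
The goal is to bound the first term by $\tfrac{1}{2}\E[Q]$ and the third by $\tfrac{1}{4}\E[Q]$, which together force the middle term to be at least $\tfrac{1}{4}\E[Q]$, as required.

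The first (small-values) term is bounded trivially: on the event $\{Q<\tfrac{1}{2}\E[Q]\}$ the integrand is pointwise at most $\tfrac{1}{2}\E[Q]$, so $\E[Q\cdot\mathbb{I}_{Q<\frac{1}{2}\E[Q]}]\le\tfrac{1}{2}\E[Q]$. For the third (tail) term, I would apply Cauchy--Schwarz to get
\[
\E[Q\cdot\mathbb{I}_{Q>128C\E[Q]}] \;\le\; \sqrt{\E[Q^2]\cdot\Pr[Q>128C\E[Q]]}.
\]
To control the two factors I chain Claim~\ref{clm:a}, the $C$-bounded hypothesis, and Lemma~\ref{lem:b}, which gives $\E[Q^2]=2\Var(X)\le 2C\E[|X-\mu|]^2\le 2C\E[Q]^2$. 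Then I invoke Lemma~\ref{lem:d} with $t=128-1/C$, chosen so that $tC\E[Q]=(128C-1)\E[Q]$; since $C>1$ we have $t\ge 127$, and Lemma~\ref{lem:d} yields $\Pr[Q>128C\E[Q]]\le 4/(127^2 C)$. When I multiply these two bounds the $C$ factors cancel, and the square root is a small absolute constant comfortably below $\tfrac{1}{4}$.

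The only real delicacy is calibrating the slack constants in $\mathcal{I'}$ so that both side-contributions fit inside their budgets. The upper cutoff must scale linearly with $C$: since $\E[Q^2]$ is of order $C\E[Q]^2$, any cutoff of the form $o(C)\E[Q]$ would cause the Cauchy--Schwarz estimate to carry residual $C$-dependence rather than collapsing to a constant, so the factor $128C$ (versus a more modest multiple of $C$) is what gives the slack needed to beat $\tfrac{1}{4}\E[Q]$. Once both side-terms are bounded as above, a one-line rearrangement of the decomposition gives the claim.
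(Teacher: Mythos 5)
Your proof is correct, and it follows the paper's overall skeleton --- the same three-way decomposition of $\E[Q]$ over $\{Q<\tfrac12\E[Q]\}$, $\mathcal{I'}$, and $\{Q>128C\E[Q]\}$, with the trivial pointwise bound $\E[Q\cdot\mathbb{I}_{Q<\E[Q]/2}]\le\E[Q]/2$ --- but you handle the tail term differently. The paper bounds $\E[Q\cdot\mathbb{I}_{Q>128C\E[Q]}]$ via the layer-cake (tail-integral) identity, paying $128C\E[Q]\cdot\Pr[Q>128C\E[Q]]$ plus an integral of the Lemma~\ref{lem:d} tail bound over $[128C\E[Q],\infty)$, arriving at $\E[Q]/16$; you instead apply Cauchy--Schwarz, $\E[Q\cdot\mathbb{I}_{Q>128C\E[Q]}]\le\sqrt{\E[Q^2]\Pr[Q>128C\E[Q]]}$, control $\E[Q^2]\le 2C\E[Q]^2$ via Claim~\ref{clm:a}, $C$-boundedness, and Lemma~\ref{lem:b}, and invoke Lemma~\ref{lem:d} once at the cutoff (your choice $t=128-1/C\ge 127$ is calibrated correctly, and the $C$ factors indeed cancel, giving roughly $\tfrac{2\sqrt{2}}{127}\E[Q]$, well under the $\E[Q]/4$ budget). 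Your route is slightly shorter and avoids the integral computation, and it even yields a marginally smaller tail contribution; the paper's route only uses the tail probability bound and so does not need the second-moment chain explicitly, though the information content is essentially the same since Lemma~\ref{lem:d} is itself a Chebyshev consequence of the variance bound. Your remark about why the cutoff must scale linearly in $C$ is also sound for the Cauchy--Schwarz argument: a sublinear cutoff would leave residual $C$-dependence after the square root.
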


\begin{proof}
One can write a random variable $Q$ via three indicator variables depending where in the domain it belongs:
$$Q = Q \cdot \left(\mathbb{I}_{Q < \E[Q]/2} + \mathbb{I}_{Q \in \mathcal{I'}} + \mathbb{I}_{Q > 128\,C\,\E[Q]} \right)\,.$$
Given this identity, we expand $\E[Q]$ as follows:
\begin{align*}
    \E[Q] &= \underbrace{\E\left[Q \cdot \mathbb{I}_{Q < \E[Q]/2}\right]}_{T_1 \coloneqq} 
    + \E\left[Q \cdot \mathbb{I}_{Q \in \mathcal{I'}}\right]  
    + \underbrace{\E\left[Q \cdot \mathbb{I}_{Q > 128\,C\,\E[Q]}\right]}_{T_2 \coloneqq}
    \,.
\end{align*}
We call the first and the third terms above $T_1$ and $T_2$ respectively. Now, if we rearranges the above identity, we get:
\begin{equation}
    \E\left[Q \cdot \mathbb{I}_{Q \in \mathcal{I'}}\right] = \E[Q] - T_1 - T_2.
    \label{eq:beta}
\end{equation}
Note that our goal is to find a lower bound for $\E\left[Q \cdot \mathbb{I}_{Q \in \mathcal{I'}}\right] $. Hence, we proceed by upper-bounding $T_1$ and $T_2$. It is not hard to see that $T_1$ is bounded by $\E[Q]/2$, since $Q\cdot \mathbb{I}_{Q < \E[Q]/2}$ is never larger than $\E[Q]/2$:
\begin{equation}
    T_1 = \E\left[Q \cdot \mathbb{I}_{Q < \E[Q]/2}\right] 
    \le \frac{\E[Q]}{2}.
    \label{eq:t1}
\end{equation}
Next, we focus on bounding $T_2$ from above. We expand this term via the integral identity for non-negative random variables to get:
\begin{align*}
    T_2 
    & = \E\left[Q \cdot \mathbb{I}_{Q > 128\,C\,\E[Q]}\right] = \int_{0}^{\infty} \Pr[ Q \cdot \mathbb{I}_{Q > 128\,C\,\E[Q]} > q]\,\mathrm{d}q
    \\ & = 
    \int_{q=0}^{128\,C\,\E[Q]} \Pr[Q \cdot \mathbb{I}_{Q > 128\,C\,\E[Q]} > q]\,\mathrm{d}q
    + \int_{q=128\,C\,\E[Q]}^{\infty} \Pr[Q > q]\,\mathrm{d}q
    \,.
\end{align*}
Observe that  $Q \cdot \mathbb{I}_{Q > 128\,C\,\E[Q]}$ is zero whenever $Q \le 128\,C\,\E[Q]$, and it is larger than $128\,C\,\E[Q]$ otherwise. Hence, for any $q \leq 128\,C\,\E[Q]$, the probability of $Q \cdot \mathbb{I}_{Q > 128\,C\,\E[Q]}$ being larger than $q$ is exactly the probability of $Q$ being larger than $128\,C\,\E[Q]$. Thus, we obtain:
\begin{align*}
    T_2 
    & = 128\,C\,\E[Q]\cdot\Pr\left[Q > 128\,C\,\E[Q]\right]
    + \int_{q=128\,C\,\E[Q]}^{\infty} \Pr[Q \ge q]\,\mathrm{d}q
    \\ & =
    128\,C\,\E[Q]\cdot\Pr\left[Q > 128\,C\,\E[Q]\right]
    + C\E[Q] \cdot \int_{t = 128}^{\infty} \Pr\left[Q \ge tC\E[Q]\right]\,\mathrm{d}t
    \,,
\end{align*}
where the last line is due to the change of variable in the integral using $q = t\,C\,\E[Q]$.
Now, we use Lemma~\ref{lem:d} to bound the probability terms above:
\begin{equation}
   \label{eq:t2}
    \begin{split}
    T_2 & \leq
    128\,C\,\E[Q]\cdot \frac{4}{(128)^2C} 
    + C\E[Q]\cdot \int_{t = 128}^{\infty} \frac{4}{t^2\,C} \,\mathrm{d}t 
    \\
    &= \frac{4\E[Q]}{128} + \frac{4\E[Q]}{128} = \frac{\E[Q]}{16}
    \,.
    \end{split}
\end{equation}

By \eqref{eq:beta}, \eqref{eq:t1}, and \eqref{eq:t2}, we have that:
\begin{align*}
    \E\left[Q \cdot \mathbb{I}_{Q \in \mathcal{I'}}\right] &= \E[Q] - T_1 - T_2
    \ge \E[Q] - \frac{\E[Q]}{2} - \frac{\E[Q]}{16} 
    \ge \frac{\E[Q]}{4}\,.
\end{align*}
Hence, the proof of the lemma is complete. 
\end{proof}

We can now prove Lemma~\ref{lem:f}.
\lemf*
\begin{proof}
Let $k' = 3000$, $C > 2$, and $\mathcal{I'} = \left[\frac{1}{2}\E[Q], 128\,C\,\E[Q]\right]$. By Claim~\ref{clm:e}, we have that 
\begin{equation}
\label{eq:p1}
\E[Q \cdot \mathbb{I}_{Q \in \mathcal{I'}}] \ge \E[Q]/4. 
\end{equation}
On the other hand,
\begin{equation}
\label{eq:p2}
\E[Q \cdot\mathbb{I}_{Q \in \mathcal{I'}}] = \int_{q\in \mathcal{I'}} q \cdot\Pr[Q=q]\,\mathrm{d}q \le 128\,C\,\E[Q]\cdot \Pr[Q \in \mathcal{I'}].
\end{equation}
Together \eqref{eq:p1} and \eqref{eq:p2} give
$$\frac{\E[Q]}{4} \le 128\,C\,\E[Q]\cdot \Pr[Q \in \mathcal{I}'],$$
which implies that
$$\Pr[Q \in \mathcal{I}'] \ge \frac{1}{4\cdot 128C}.$$
Let $\ell^* \in \mathbb{Z}$ be such that $(2^{\ell^*},2^{\ell^* + 1}] \subseteq \mathcal{I}'$ is the set in $\mathcal{I}'$ with the most probability mass. It follows that
\begin{align*}
\Pr[Q \in (2^{\ell^*},2^{\ell^* + 1}]] &\ge \frac{1}{4\cdot 128C} \cdot \frac{1}{|\{(2^{\ell},2^{\ell+1}] \mid (2^{\ell},2^{\ell+1}] \subseteq \mathcal{I'}\}|} \\
&\ge \frac{1}{4\cdot 128C} \cdot \frac{1}{\log (128\,C\,\E[Q]) - \log (\E[Q]/2) + 1}\\
&= \frac{1}{4\cdot 128C} \cdot \frac{1}{\log (2\cdot 128C)+1}\\
&\ge \frac{1}{k'C\log C} \tag{by assumptions on $k'$, $C$}.
\end{align*}

Finally, let $\ell' \in \mathbb{Z}$ be such that $(2^{\ell'},2^{\ell'+1}] \subseteq \mathcal{I}$ is the set in $\mathcal{I}$ with the most probability mass. Note that $\mathcal{I'} \subseteq \mathcal{I}$ by the assumption on $k$, and thus $$\Pr[Q \in (2^{\ell'},2^{\ell'+1}]] \ge \Pr[Q \in (2^{\ell^*},2^{\ell^*+1}]] \ge \frac{1}{k'C\log C}$$ which completes the proof.
\end{proof}

\subsection{Technical proofs for \texttt{find-interior-point}}

\begin{lem}
For a random variable $X$ and constant $\mu$, let $Z = |X - \mu|$, and suppose that $\E[Z]^2 \le \E[Z^2] \le C \E[Z]^2$ for some known $C > 1$. Let $b = \E[Z]/k_1$ for some $k_1 \ge 1$. It follows that $$\Pr[X \in (\mu - b/2, \mu + b/2)] \le 1 - \frac{1}{16C}.$$
\label{lem:middlebin}
\end{lem}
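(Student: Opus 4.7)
The plan is to reduce the statement to a lower bound on $\Pr[Z \ge b/2]$ and then apply the Paley--Zygmund (second moment) inequality. Note first that the complement of the event $\{X \in (\mu - b/2, \mu + b/2)\}$ contains $\{Z \ge b/2\}$, so it suffices to prove
\[
\Pr\!\left[Z \ge \tfrac{\E[Z]}{2k_1}\right] \;\ge\; \tfrac{1}{16C}.
\]
The hypothesis $\E[Z^2] \le C\,\E[Z]^2$ is exactly a statement that the second moment of $Z$ is comparable to its first moment squared, which is the natural setup for a Paley--Zygmund style anti-concentration argument.

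Next I would invoke Paley--Zygmund: for a nonnegative random variable $Z$ with $\E[Z^2] < \infty$ and for any $\theta \in [0,1]$,
\[
\Pr[Z \ge \theta\,\E[Z]] \;\ge\; (1-\theta)^2 \cdot \frac{\E[Z]^2}{\E[Z^2]}.
\]
(The quick derivation is $\E[Z] = \E[Z\,\mathbb{I}_{Z<\theta\E[Z]}] + \E[Z\,\mathbb{I}_{Z\ge \theta \E[Z]}] \le \theta\E[Z] + \sqrt{\E[Z^2]\,\Pr[Z\ge\theta\E[Z]]}$ by Cauchy--Schwarz, then rearrange and square.) Setting $\theta = 1/(2k_1)$, the condition $k_1 \ge 1$ guarantees $\theta \le 1/2$ and hence $(1-\theta)^2 \ge 1/4$. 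Combined with the assumption $\E[Z]^2/\E[Z^2] \ge 1/C$, this yields
\[
\Pr\!\left[Z \ge \tfrac{\E[Z]}{2k_1}\right] \;\ge\; \tfrac{1}{4}\cdot\tfrac{1}{C} \;=\; \tfrac{1}{4C} \;\ge\; \tfrac{1}{16C}.
\]

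Finally, since $Z = |X-\mu|$, the event $\{Z \ge b/2\}$ is precisely $\{X \notin (\mu - b/2, \mu + b/2)\}$, so taking complements gives the desired bound $\Pr[X \in (\mu - b/2, \mu + b/2)] \le 1 - 1/(16C)$. There is really no substantive obstacle here: the only thing to watch is that $\theta = 1/(2k_1) \le 1/2$, which is exactly where the hypothesis $k_1 \ge 1$ is used, and the slack between $1/(4C)$ and $1/(16C)$ shows the constants in the statement are not tight.
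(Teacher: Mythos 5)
Your proof is correct, and it takes a genuinely different route from the paper. You reduce the claim to the anti-concentration bound $\Pr[Z \ge \E[Z]/(2k_1)] \ge 1/(16C)$ and obtain it from the Paley--Zygmund inequality, i.e.\ the split $\E[Z] \le \theta\E[Z] + \sqrt{\E[Z^2]\Pr[Z \ge \theta\E[Z]]}$ via Cauchy--Schwarz with $\theta = 1/(2k_1) \le 1/2$; the identification of the complement of $\{X \in (\mu-b/2,\mu+b/2)\}$ with $\{Z \ge b/2\}$ is exact, so the conclusion follows. The paper instead sets $p = \Pr[X \in (\mu - b/2,\mu+b/2)]$ and writes $\E[Z] = \int_0^\infty \Pr[Z\ge z]\,dz$ as three pieces (up to $b/2$, from $b/2$ to $\sigma/\sqrt{1-p}$, and the tail), bounding the middle piece by the mass $1-p$ outside the interval and the tail by Chebyshev, which yields $1 \le \tfrac{1}{2} + 2\sqrt{C}\sqrt{1-p}$ and hence $p \le 1 - 1/(16C)$. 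Your argument is shorter, uses only the second-moment condition in a single application of Cauchy--Schwarz, and in fact proves the stronger bound $p \le 1 - 1/(4C)$, confirming your remark that the paper's constant is not tight; the paper's layer-cake computation buys nothing extra here beyond being self-contained in the style of its other tail estimates. The only degenerate point (shared by the paper, which divides by $\E[|X-\mu|]$) is $\E[Z]=0$, in which case the interval is empty and the claim is trivial, so no real gap arises.
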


\begin{proof}
Let $p = \Pr[X \in (\mu - b/2, \mu + b/2)]$, and let $\sigma = \sqrt{\E[Z^2]}$. First note that
\[
    b = \frac{\E[Z]}{k_1} < \E[Z] \le \sigma \le \frac{\sigma}{\sqrt{1-p}}
\]
and thus $$\frac{b}{2} < \frac{\sigma}{\sqrt{1-p}}.$$

Expanding $\E[|X-\mu|]$ gives
\begin{align*}
    \E[|X-\mu|] &= \E[Z] \\
    &= \int_{z=0}^{\infty} \Pr[Z \ge z] dz \\
    &= \int_{z=0}^{b/2} \Pr[Z \ge z]dz + \int_{z=b/2}^{\sigma / \sqrt{1-p}} \Pr[Z \ge z]dz + \int_{z=\sigma/\sqrt{1-p}}^{\infty} \Pr[Z \ge z]dz\\
    &= T_1 + T_2 + T_3.
\end{align*}
We will now bound each of $T_1$, $T_2$, and $T_3$ from above.

We first have the following bound on $T_1$,
\begin{align*}
    T_1 &= \int_{z=0}^{b/2} \Pr[Z \ge z]dz \\
    &\le \int_{z=0}^{b/2} 1 \cdot dz \\
    &= b/2.
\end{align*}

Note that a $1 - p$ fraction of mass is outside of $(\mu - b/2, \mu + b/2)$. We can bound $T_2$ by assuming that this $1 - p$ fraction is as far away from $\mu$ as possible, i.e. this $1 - p$ fraction sits at least $\frac{\sigma}{\sqrt{1 - p}}$ away from $\mu$. Thus,
\begin{align*}
    T_2 &= \int_{z=b/2}^{\sigma / \sqrt{1-p}} \Pr[Z \ge z]dz \\
    &\le \int_{z=b/2}^{\sigma / \sqrt{1-p}} (1 - p)dz \\
    &= (1-p)z \bigg |_{b/2}^{\sigma/ \sqrt{1-p}} \\
    &= (1-p)\left(\frac{\sigma}{\sqrt{1-p}} - \frac{b}{2}\right) \\
    &= \sigma(\sqrt{1-p}) - \frac{b}{2} + \frac{bp}{2}.
\end{align*}

We can bound $T_3$ using Chebyshev's inequality:
\begin{align*}
    T_3 &= \int_{z=\sigma / \sqrt{1-p}}^{\infty} \Pr[Z \ge z] dz\\
    &= \int_{k = 1 / \sqrt{1-p}}^{\infty} \Pr[Z \ge k\sigma] \sigma dk \tag{By a change of variables}\\
    &\le \int_{k = 1 / \sqrt{1-p}}^{\infty} \frac{\sigma}{k^2} dk \tag{By Chebyshev's Inequality}\\
    &= \frac{-\sigma}{k} \bigg |_{1 / \sqrt{1-p}}^{\infty} \\
    &= \sigma \sqrt{1-p}.
\end{align*}

Thus,

\begin{align*}
    \E[|X-\mu|] &\le T_1 + T_2 + T_3 \\
    &\le \frac{b}{2} + \sigma(\sqrt{1-p}) - \frac{b}{2} + \frac{bp}{2} + \sigma(\sqrt{1-p}) \\
    &= \frac{bp}{2} + 2\sigma(\sqrt{1-p}) \\
    &\le \frac{\E[|X-\mu|]}{2k_1}\cdot p + 2\sqrt{C}\cdot \E[|X-\mu|]\sqrt{1-p}.
\end{align*}

Dividing both sides by $\E[|X-\mu|]$, we have 
$$1 \le \frac{1}{2k_1}p + 2\sqrt{C}\sqrt{1-p}.$$
Since $k_1 > 1$ and $p \le 1$, it follows that
$$1 \le \frac{1}{2} + 2\sqrt{C}\sqrt{1-p},$$
which rearranges to 
$$p \le 1 - \frac{1}{16C}.$$
\end{proof}

\begin{lem}
For a random variable $X$ and constant $\mu$, let $Z = |X - \mu|$, and suppose that $\E[Z^2] \le C \E[Z]^2$ for some known $C > 1$. Let $t > 0$. It follows that $$\Pr\left[X \in (\mu - k\sqrt{C}\E[Z], \mu + t\sqrt{C}\E[Z])\right] \ge 1 - \frac{1}{t^2}.$$
\label{lem:cheb}
\end{lem}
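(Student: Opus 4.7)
The statement is essentially Chebyshev's inequality, reformulated so that the scale parameter is $\E[Z]$ rather than $\sigma := \sqrt{\E[Z^2]}$; the hypothesis $\E[Z^2] \leq C\E[Z]^2$ is precisely the conversion factor that lets one translate a bound in terms of $\sigma$ into a bound in terms of $\E[Z]$. My plan is therefore just to apply Markov's inequality to the nonnegative random variable $Z^2 = (X-\mu)^2$ and then substitute in the normalized-variance hypothesis.

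Concretely, I would proceed in three short steps. First, by Markov applied to $Z^2$ at threshold $s^2$, for every $s > 0$ we have
\[
\Pr[\,|X - \mu| \ge s\,] \;=\; \Pr[\,Z^2 \ge s^2\,] \;\le\; \frac{\E[Z^2]}{s^2}.
\]
Second, specialize to $s = t\sqrt{C}\,\E[Z]$ and apply the hypothesis $\E[Z^2] \le C\E[Z]^2$:
\[
\Pr\!\left[\,|X - \mu| \ge t\sqrt{C}\,\E[Z]\,\right] \;\le\; \frac{\E[Z^2]}{t^2\,C\,\E[Z]^2} \;\le\; \frac{1}{t^2}.
\]
Third, take the complementary event. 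The open interval $(\mu - t\sqrt{C}\E[Z],\, \mu + t\sqrt{C}\E[Z])$ is exactly $\{\,|X - \mu| < t\sqrt{C}\E[Z]\,\}$, and its complement is $\{\,|X-\mu| \ge t\sqrt{C}\E[Z]\,\}$, so the bound just derived yields
\[
\Pr\!\left[\,X \in (\mu - t\sqrt{C}\E[Z],\; \mu + t\sqrt{C}\E[Z])\,\right] \;\ge\; 1 - \frac{1}{t^2},
\]
which is the claim.

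There is effectively no obstacle here: the whole argument is a one-line Markov/Chebyshev application, combined with the single algebraic step $\sqrt{\E[Z^2]} \le \sqrt{C}\,\E[Z]$. The only minor items to note are (i) the statement as written has $k\sqrt{C}\E[Z]$ on one endpoint and $t\sqrt{C}\E[Z]$ on the other, which appears to be a typo since the conclusion $1 - 1/t^2$ forces both radii to be $t\sqrt{C}\E[Z]$, and (ii) one should confirm the open-interval convention so that the complement is the closed tail event $|X-\mu| \ge t\sqrt{C}\E[Z]$ for which Markov directly applies, as above.
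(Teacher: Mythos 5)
Your proof is correct and follows essentially the same route as the paper: both arguments amount to Markov's inequality applied to $Z^2$ (the paper phrases it as Chebyshev with $\sigma = \sqrt{\E[Z^2]}$, noting $t\sqrt{C}\,\E[Z] \ge t\sigma$), followed by taking the complementary event. Your observation that the ``$k$'' in the left endpoint of the stated interval is a typo for ``$t$'' is also correct.
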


\begin{proof}
Let $\sigma = \sqrt{\E[Z^2]} = \textrm{Var}[Z]$. (Note that $\sigma \le \sqrt{C}\E[Z]$.) The lemma immediately follows from Chebyshev's inequality:
\begin{align*}
    \Pr\left[X \in (\mu - t\sqrt{C}\E[Z], \mu + t\sqrt{C}\E[Z])\right] &= \Pr[Z \le t\sqrt{C}\E[Z]] \\
    &\ge \Pr[Z \le t\sigma] \\
    &= 1 - \Pr[Z \ge t\sigma] \\
    &\ge 1 - \frac{1}{t^2}.
\end{align*}
\end{proof}

\begin{lem}
For a random variable $X$ and constant $\mu$, let $Z = |X - \mu|$, and suppose that $\E[Z^2] \le C \E[Z]^2$ for some known $C > 1$. Let $b = \E[Z]/k_1$ for some $k_1 \ge 1$. Let $S = (\mu - 8C\E[Z], \mu + 8C\E[Z]) \setminus (\mu - b/2, \mu + b/2)$. It follows that $$\Pr[X \in S] \ge \frac{1}{32C}.$$
\label{lem:outsidemiddlebin2}
\end{lem}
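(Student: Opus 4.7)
The plan is to decompose $S$ as a set difference and apply the two preceding lemmas---Lemma~\ref{lem:middlebin} to upper-bound the mass in the inner interval, and Lemma~\ref{lem:cheb} to lower-bound the mass in the outer interval. Concretely, let $A = (\mu - 8C\E[Z], \mu + 8C\E[Z])$ and $B = (\mu - b/2, \mu + b/2)$, so that by definition $S = A \setminus B$. The first thing I would verify is that $B \subseteq A$, which follows immediately from $b/2 = \E[Z]/(2k_1) \le \E[Z]/2 \le 8C\E[Z]$ using $k_1 \ge 1$ and $C > 1$. This containment justifies the identity $\Pr[X \in S] = \Pr[X \in A] - \Pr[X \in B]$.

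Next, I would apply Lemma~\ref{lem:middlebin} directly to $B$, which hands us $\Pr[X \in B] \le 1 - \tfrac{1}{16C}$. For the outer interval $A$, I would invoke Lemma~\ref{lem:cheb} with the tail parameter chosen as $t = 8\sqrt{C}$, since then $t\sqrt{C}\E[Z] = 8C\E[Z]$ so that Lemma~\ref{lem:cheb}'s interval coincides exactly with $A$. This gives $\Pr[X \in A] \ge 1 - \tfrac{1}{t^2} = 1 - \tfrac{1}{64C}$. Combining the two bounds yields
\begin{align*}
\Pr[X \in S] \;=\; \Pr[X \in A] - \Pr[X \in B]
\;\ge\; \left(1 - \tfrac{1}{64C}\right) - \left(1 - \tfrac{1}{16C}\right)
\;=\; \tfrac{4}{64C} - \tfrac{1}{64C}
\;=\; \tfrac{3}{64C}
\;\ge\; \tfrac{1}{32C},
\end{align*}
which is exactly the claimed bound.

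I do not expect any substantive obstacle: the structural observation that $S = A \setminus B$ with $B \subseteq A$ reduces the problem entirely to the two preceding lemmas, and the only real choice is the Chebyshev parameter $t$. The motivation for $t = 8\sqrt{C}$ is that it simultaneously makes the outer interval match $A$ and makes the tail probability $1/t^2 = 1/(64C)$ a small fraction of the gap $1/(16C)$ produced by Lemma~\ref{lem:middlebin}, leaving a comfortable margin so that the final bound of $1/(32C)$ comes out cleanly.
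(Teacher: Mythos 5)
Your proposal is correct and follows essentially the same route as the paper: apply Lemma~\ref{lem:cheb} with $t=8\sqrt{C}$ to get $\Pr[X\in(\mu-8C\E[Z],\mu+8C\E[Z])]\ge 1-\tfrac{1}{64C}$, apply Lemma~\ref{lem:middlebin} to bound the inner interval by $1-\tfrac{1}{16C}$, and subtract. The only cosmetic difference is that you explicitly verify the containment $B\subseteq A$, which the paper leaves implicit.
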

\begin{proof}
By Lemma~\ref{lem:cheb}, it follows that $$\Pr\left[X \in (\mu - 8C\E[Z], \mu + 8C\E[Z])\right] \ge 1 - \frac{1}{64C}.$$
Likewise, by Lemma~\ref{lem:middlebin}, it follows that
$$\Pr[X \in (\mu - b/2, \mu + b/2)] \le 1 - \frac{1}{16C}$$

Thus, we have
\begin{align*}
    \Pr[X \in S] &\ge \Pr\left[X \in (\mu - 8C\E[Z], \mu + 8C\E[Z])\right] - \Pr[X \in (\mu - b/2, \mu + b/2)] \\
    &\ge 1 - \frac{1}{64C} - \left(1 - \frac{1}{16C}\right)\\
    &\ge \frac{1}{32C}.
\end{align*}

\end{proof}

We introduce the following claim, which is used in an ensuing lemma.
\begin{clm}
For any random variable $X$ satisfying $\E[X] = \mu$, 
$$\E[(X - \mu) \cdot \mathbb{I}_{X \ge \mu}] = \frac{\E[|X - \mu|]}{2}.$$
\label{clm:onesided}
\end{clm}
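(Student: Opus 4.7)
The plan is a short symmetry argument based on splitting the integrand according to the sign of $X-\mu$ and then using the hypothesis that $\E[X-\mu]=0$.

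First, I would write the trivial decomposition
\[
X - \mu = (X-\mu)\,\mathbb{I}_{X \ge \mu} + (X-\mu)\,\mathbb{I}_{X < \mu}
\]
and take expectations. Since $\E[X-\mu]=0$ by assumption, this yields
\[
\E[(X-\mu)\,\mathbb{I}_{X \ge \mu}] \;=\; -\E[(X-\mu)\,\mathbb{I}_{X < \mu}] \;=\; \E[(\mu-X)\,\mathbb{I}_{X < \mu}].
\]

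Next, I would do the analogous decomposition for the absolute value. Because $|X-\mu| = X-\mu$ on $\{X \ge \mu\}$ and $|X-\mu| = \mu - X$ on $\{X < \mu\}$ (and both sides vanish at $X=\mu$, so the placement of the boundary is immaterial), I can write
\[
|X-\mu| \;=\; (X-\mu)\,\mathbb{I}_{X \ge \mu} + (\mu - X)\,\mathbb{I}_{X < \mu}.
\]
Taking expectations and substituting the identity from the previous display gives
\[
\E[|X-\mu|] \;=\; \E[(X-\mu)\,\mathbb{I}_{X \ge \mu}] + \E[(\mu - X)\,\mathbb{I}_{X < \mu}] \;=\; 2\,\E[(X-\mu)\,\mathbb{I}_{X \ge \mu}].
\]
Dividing by two yields the claim. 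There is no real obstacle here: the result is a direct consequence of the fact that a mean-zero random variable has equal positive and negative parts in expectation, so the positive part alone captures exactly half of the total absolute deviation.
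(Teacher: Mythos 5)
Your argument is correct and is essentially the same symmetry argument the paper uses: split $\E[X-\mu]=0$ by the sign of $X-\mu$, then combine with the sign decomposition of $|X-\mu|$. The only difference is cosmetic — the paper writes the steps as integrals against a density $\Pr[X=x]$, whereas your indicator/expectation formulation works verbatim without assuming $X$ has a density.
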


\begin{proof}
\begin{align*}
    0 &= \E[X-\mu] \\
    &= \int_{x=-\infty}^{\infty} (x-\mu)\Pr[X=x]\,\mathrm{d}x \\
    &= \int_{x=-\infty}^{\mu}(x-\mu)\Pr[X=x]\,\mathrm{d}x + \int_{x=\mu}^{\infty}(x-\mu)\Pr[X=x]\,\mathrm{d}x.
\end{align*}
This implies

\begin{equation}
    -\int_{x=-\infty}^{\mu}(x-\mu)\Pr[X=x]\,\mathrm{d}x = \int_{x=\mu}^{\infty}(x-\mu)\Pr[X=x]\,\mathrm{d}x.
\end{equation}
Likewise,

\begin{align*}
    \E[|X-\mu|] &= \int_{x=-\infty}^{\infty} |x-\mu|\Pr[X=x]\,\mathrm{d}x \\
    &= -\int_{x=-\infty}^{\mu}(x-\mu)\Pr[X=x]\,\mathrm{d}x + \int_{x=\mu}^{\infty}(x-\mu)\Pr[X=x]\,\mathrm{d}x.
\end{align*}
This implies that $$\E[(X - \mu) \cdot \mathbb{I}_{X \ge \mu}] = \int_{x = \mu}^{\infty} (x - \mu) \Pr[X = x]\,\mathrm{d}x = \frac{\E[|X-\mu|]}{2}.$$

\end{proof}
Finally, we hard-code $\mu$ as $\E[X]$, and show that there is a non-negligable fraction of probability mass in both $(\mu - 16C\E[Z], \mu - b/2)$ and $(\mu + b / 2, \mu + 16C\E[Z])$.

\lemoneside*

\begin{proof}
Set $X' = \max(X,  \mu)$. By Claim~\ref{clm:onesided}, if we define $Z' = X' - \mu$, we have $$\E[Z'] = \E[X' - \mu] = \frac{\E[|X - \mu|]}{2} = \frac{\E[Z]}{2}.$$ Furthermore, we note that $\E[(Z')^2] = \E[(X' - \mu)^2] \le \E[(X - \mu)^2] = \E[Z^2].$ Combining these two facts, it follows that

\begin{align*}
    \E[(Z')^2] &\le \E[Z^2] \\
    &\le C\E[Z]^2 \\
    &= 4C \E[Z']^2.
\end{align*}
Setting $C' = 4C$ and $b' = \E[Z'] / (k_1/2)$, we define $S'$ as follows: 
\begin{align*}
    S' &\coloneqq (\mu - 8C'\E[Z'], \mu + 8C'\E[Z']) \setminus (\mu - b'/2, \mu + b'/2)\\
    &= \left(\mu - 8\cdot 4C \cdot \frac{\E[Z]}{2},\mu + 8\cdot 4C \cdot \frac{\E[Z]}{2}\right) \setminus \left(\mu - \frac{\E[Z']}{2(k_1/2)}, \mu + \frac{\E[Z']}{2(k_1/2)}\right) \\
    &= (\mu - 16C \E[Z],\mu + 16C\E[Z] \setminus \left(\mu - \frac{\E[Z']}{k_1}, \mu + \frac{\E[Z']}{k_1}\right) \\
    &= (\mu - 16C \E[Z],\mu + 16C\E[Z] \setminus \left(\mu - \frac{\E[Z]}{2k_1}, \mu + \frac{\E[Z]}{2k_1}\right) \\
    &= (\mu - 16C \E[Z],\mu + 16C\E[Z] \setminus (\mu - b/2, \mu + b/2).
\end{align*}
We can apply Lemma~\ref{lem:outsidemiddlebin2} to $X'$ to deduce that $$\Pr[X' \in S'] \ge \frac{1}{32C'}= \frac{1}{128C}.$$ Since $X' = \max(X,\mu)$, this implies \eqref{eq:lem1}. One can similarly show \eqref{eq:lem2} by symmetry.
\end{proof}

\subsection{Technical proofs for \texttt{median}}\label{sec:app-med}

\begin{restatable}{clm}{clmb}
    Let $k \ge 128C$. It follows that $Q_{P_{\alpha}}(1-1/k) > \mu$.
    \label{clm:qgemu}
\end{restatable}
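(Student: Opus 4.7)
The plan is to show that $\Pr_{X \leftarrow P_\alpha}[X > \mu] > 1/k$, which immediately implies $F_{P_\alpha}(\mu) < 1 - 1/k$ and hence, by definition of $Q_{P_\alpha}$, that $Q_{P_\alpha}(1-1/k) > \mu$. So the whole task reduces to a one-sided mass lower bound at the mean for a $C$-bounded distribution.

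To get such a bound, I would start from Claim~\ref{clm:onesided}, applied with $X \leftarrow P_\alpha$ and $Z = |X - \mu|$, which gives
\[
\E\bigl[(X-\mu)\cdot \mathbb{I}_{X \ge \mu}\bigr] \;=\; \tfrac{1}{2}\E[Z].
\]
Since the indicator $\mathbb{I}_{X \ge \mu}$ contributes zero when $X = \mu$, this expectation equals $\E[(X-\mu)\cdot \mathbb{I}_{X > \mu}]$. Now I would apply Cauchy--Schwarz to this one-sided expectation:
\[
\tfrac{1}{2}\E[Z] \;=\; \E\bigl[(X-\mu)\cdot \mathbb{I}_{X > \mu}\bigr] \;\le\; \sqrt{\E[(X-\mu)^2]}\cdot\sqrt{\Pr[X > \mu]} \;=\; \sqrt{\E[Z^2]}\cdot\sqrt{\Pr[X > \mu]}.
\]

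Next, I would invoke the $C$-boundedness of $P_\alpha$, which gives $\sqrt{\E[Z^2]} \le \sqrt{C}\,\E[Z]$. Plugging this into the previous inequality and squaring yields
\[
\Pr[X > \mu] \;\ge\; \frac{1}{4C}.
\]
Since the hypothesis gives $k \ge 128 C$, we have $1/k \le 1/(128C) < 1/(4C) \le \Pr[X > \mu]$, so $\Pr[X > \mu] > 1/k$, which is exactly what we needed to conclude $Q_{P_\alpha}(1-1/k) > \mu$.

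The only mild subtlety to watch is the degenerate case $\E[Z] = 0$ (in which $P_\alpha$ is a point mass and the $C$-bounded ratio is $0/0$); this case is either excluded by convention in the definition or can be handled separately since then $P_\alpha$ is entirely $\{\mu\}$ and the enclosing Lemma~\ref{lem:bignice} is trivial. Otherwise the Cauchy--Schwarz step is the only nontrivial move, and it is clean, so I do not expect any real obstacle.
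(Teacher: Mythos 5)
Your proof is correct, but it takes a genuinely different route from the paper. The paper proves this claim by invoking Lemma~\ref{lem:hardcodeonesided} with $k_1=2$, which gives $\Pr\left[X \in \left(\mu + \tfrac{\E[Z]}{4}, \mu + 16C\E[Z]\right)\right] \ge \tfrac{1}{128C} \ge \tfrac{1}{k}$ and hence that the $(1-1/k)$-quantile lies strictly above $\mu$; that lemma in turn rests on the chain Lemma~\ref{lem:middlebin}, Lemma~\ref{lem:cheb}, Lemma~\ref{lem:outsidemiddlebin2}, and Claim~\ref{clm:onesided}. You instead argue directly from Claim~\ref{clm:onesided} plus Cauchy--Schwarz that $\tfrac12\E[Z] \le \sqrt{\E[Z^2]}\sqrt{\Pr[X>\mu]} \le \sqrt{C}\,\E[Z]\sqrt{\Pr[X>\mu]}$, yielding $\Pr[X>\mu]\ge \tfrac{1}{4C}$, which is all the claim needs and in fact gives a better constant (the conclusion already holds once $k > 4C$, not just $k \ge 128C$). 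The trade-off: the paper's route reuses machinery it has anyway and delivers the stronger fact that a constant fraction of mass sits a constant multiple of $\E[Z]$ \emph{above} $\mu$ (needed elsewhere), while your Paley--Zygmund-style argument is more self-contained and elementary. Two small points to tidy: your final step ``$F_{P_\alpha}(\mu) < 1-1/k$ implies $Q_{P_\alpha}(1-1/k)>\mu$'' silently uses right-continuity of the CDF to get strictness (fine, and the paper is no more careful there), and your handling of the degenerate point-mass case is reasonable since the $C$-bounded ratio is undefined there and the paper's own argument implicitly excludes it as well.
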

\begin{proof}
    First, note that, when $k \ge 128C$,
    \begin{align*}
        \Pr\left[X \ge \mu + \frac{\E[|X-\mu|]}{4}\right] &\ge \frac{1}{128C}\tag{by Lemma~\ref{lem:hardcodeonesided}}\\ 
        &\ge \frac{1}{k} \\
        &= \Pr\left[X \ge Q_P\left(1-\frac{1}{k}\right)\right]
    \end{align*}
    which implies that $Q_{P_{\alpha}}(1-1/k) > \mu$. 
\end{proof}

\begin{restatable}{clm}{clma}
    $|\mu - \mu'| = \frac{1}{k-1}(\mu'' - \mu)$.
    \label{clm:211}
\end{restatable}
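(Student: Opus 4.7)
The plan is to use the law of total expectation. Since $P_\alpha'$ corresponds to conditioning on $X < Q_{P_\alpha}(1-1/k)$ (an event of probability $\tfrac{k-1}{k}$) and $P_\alpha''$ corresponds to conditioning on the complementary event of probability $\tfrac{1}{k}$, we can decompose $\mu$ as
\[
\mu \;=\; \frac{k-1}{k}\,\mu' \;+\; \frac{1}{k}\,\mu''.
\]
This is the key identity and essentially the whole proof.

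Next, I would simply rearrange this identity. Multiplying through and isolating $\mu - \mu'$ gives
\[
\mu - \mu' \;=\; \frac{1}{k-1}\bigl(\mu'' - \mu\bigr).
\]
To convert the left-hand side into $|\mu - \mu'|$, I need to know the sign. Here I invoke Claim~\ref{clm:qgemu}, which states $Q_{P_\alpha}(1-1/k) > \mu$ for $k \ge 128C$. Since $\mu''$ is the conditional expectation of $X$ given $X \ge Q_{P_\alpha}(1-1/k)$, we have $\mu'' \ge Q_{P_\alpha}(1-1/k) > \mu$, so the right-hand side is strictly positive and thus so is $\mu - \mu'$. Taking absolute values then yields the claimed identity.

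I do not anticipate any real obstacle here: the argument is a one-line application of the total-expectation decomposition together with the previously established sign information from Claim~\ref{clm:qgemu}. The only thing to be careful about is making sure the boundary point $Q_{P_\alpha}(1-1/k)$ is handled consistently between the definitions of $P_\alpha'$ and $P_\alpha''$ (which share this boundary on a measure-zero set for continuous $P_\alpha$, and can be assigned either side in the discrete case without affecting the identity), so that the weights $(k-1)/k$ and $1/k$ are exactly correct.
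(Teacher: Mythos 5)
Your proposal is correct and matches the paper's own proof: both decompose $\mu = \frac{k-1}{k}\mu' + \frac{1}{k}\mu''$ via total expectation, rearrange to get $\mu - \mu' = \frac{1}{k-1}(\mu'' - \mu)$, and invoke Claim~\ref{clm:qgemu} to establish $\mu'' - \mu \ge 0$ so that the absolute value can be taken. Your additional remark that $\mu'' \ge Q_{P_\alpha}(1-1/k) > \mu$ makes the sign step slightly more explicit than the paper, but it is the same argument.
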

\begin{proof}
    By the definitions of $\mu$, $\mu'$, and $\mu''$ we have that $$\mu = \left(1-\frac{1}{k}\right)\mu'+\frac{1}{k}\mu''.$$ Rearranging gives
    \begin{align*}
        \mu - \frac{1}{k}\mu &= \left(1-\frac{1}{k}\right)\mu'+\frac{1}{k}\mu'' - \frac{1}{k}\mu\\
        \left(1-\frac{1}{k}\right)(\mu-\mu') &= \frac{1}{k}(\mu''-\mu)\\
        \mu-\mu' &= \frac{1}{k-1}(\mu''-\mu).\\
    \end{align*}
    By Claim~\ref{clm:qgemu} and assumption on $k$, we have that $Q_{P_{\alpha}}(1-1/k) > \mu$ which implies that $\mu''-\mu \ge 0$. Thus, $$|\mu-\mu'| = \frac{1}{k-1}(\mu''-\mu).$$
\end{proof}

\clmc*
\begin{proof}
    By Claim~\ref{clm:qgemu}, $Q_P(1-1/k) > \mu$. Also,
    \begin{align*}
        \Pr\left[X -\mu \ge \sqrt{C2^ik\E[|X-\mu|]^2}\right]&\le \Pr\left[|X -\mu| \ge \sqrt{C2^ik\E[|X-\mu|]^2}\right]\\
        &\le \frac{\E[|X-\mu|^2]}{C2^ik\E[|X-\mu|]^2} \tag{by Chebyshev's inequality}\\
        &\le \frac{C\E[|X-\mu|]^2}{C2^ik\E[|X-\mu|]^2} \tag{by the $C$-bounded property}\\
        &\le \frac{1}{2^ik}.
    \end{align*}
    These equations imply that $$\Pr\left[X-\mu \ge \sqrt{C2^ik\E[|X-\mu|]^2}\right] \le \Pr\left[X - \mu \ge Q_P\left(1-\frac{1}{2^ik}\right) - \mu\right]$$ and thus 
    \begin{equation}
        Q_P\left(1-\frac{1}{2^ik}\right) -\mu \le \sqrt{C2^ik\E[|X-\mu|]^2}.
    \end{equation}

    So note that
    \begin{align*}
        \mu'' - \mu &= \int_{x=Q_P(1-1/k)}^{Q_P(1)} (x-\mu) \Pr[X''=x]\,\mathrm{d}x\\
        &= \sum_{i=0}^{\infty}\int_{x=Q_P\left(1-\frac{1}{2^{i}k}\right)}^{Q_P\left(1-\frac{1}{2^{i+1}k}\right)} (x-\mu)\Pr[X''=x]\,\mathrm{d}x\\
        &\le \sum_{i=0}^{\infty}\int_{x=Q_P\left(1-\frac{1}{2^ik}\right)}^{Q_P\left(1-\frac{1}{2^{i+1}k}\right)} (Q_P\left(1-\frac{1}{2^{i+1}k}\right)-\mu)\Pr[X''=x]\,\mathrm{d}x\\
        &\le \sum_{i=0}^{\infty}\int_{x=Q_P\left(1-\frac{1}{2^ik}\right)}^{Q_P\left(1-\frac{1}{2^{i+1}k}\right)} (Q_P\left(1-\frac{1}{2^{i+1}k}\right)-\mu)k\Pr[X=x]\,\mathrm{d}x\\
        &\le \sum_{i=0}^{\infty}(Q_P\left(1-\frac{1}{2^{i+1}k}\right)-\mu)\cdot\frac{1}{2^{i+1}}\\
        &\le \sum_{i=0}^{\infty}\sqrt{C2^{i+1}k\E[|X-\mu|]^2}\cdot\frac{1}{2^{i+1}}\\
        &= \sqrt{Ck}\E[|X-\mu|]\sum_{i=0}^{\infty} \frac{1}{\sqrt{2^{i+1}}}\\
        &\le 3\sqrt{Ck}\E[|X-\mu|].
    \end{align*}
This completes the proof of the claim.
\end{proof}

\subsection{Truncated Laplace mechanism} \label{sec:TLap}

\lemTLap*

\begin{proof} 	
Our goal is to show that for every two neighboring datasets $\x$ and $\x$ in $\mathcal{X}^n$ and every measurable set $S$:

$$\Pr\left[M(\x) \in S\right] \leq e^\eps \cdot \Pr\left[M(\x') \in S\right] + \delta\,.$$

We use the following notation in this proof: Let $I \coloneqq [f(\x) - \Zmax, f(\x) + \Zmax]$ and $I' \coloneqq [f(\x') - \Zmax, f(\x') + \Zmax]$ indicate the domain of $M(\x)$ and $M(\x')$ respectively. Let $\Psi$ be the normalizing constant in the truncated Laplace distribution, which is: 
	\begin{align*}
		\Psi \coloneqq \int_{z=-\Zmax}^{\Zmax} e^{-\frac{\eps\,|z|}{\Delta}}dz = \frac{2\,\Delta}{\eps} \cdot \left(1-e^{-\frac{\eps\,\Zmax}{\Delta}}\right)\,.
	\end{align*}

Given the above notation, we focus on bounding the probability of $M(\x)$ being in a measurable set $S$. One can partition~$S$ into two sets, $S \cap I'$ or $S \setminus I'$, and write:

\begin{align*}
	\Pr\left[M(\x) \in S\right] 
    & = \Pr\left[M(\x) \in S \cap I'\right]
	+
	\Pr\left[M(\x) \in S \setminus I'\right]
	\\& = \int_{u \in S \cap I'} \Pr\left[M(\x) = u\right] \, du 
	+
	\Pr\left[M(\x) \in S \setminus I'\right]
	\\
	& \leq  
	\int_{u \in S \cap I' } \Pr\left[f(\x) + Z = u\right] \, du 
	+ 
	\Pr\left[M(\x) \not \in I'\right]
	\\ &\leq 
	\int_{u \in S \cap I' } \frac{e^{-\frac{\eps |u-f(\x)|}{\Delta}}}{\Psi} \,du
	+ 
	\Pr\left[M(\x) \not \in I'\right]\,,
\end{align*}
where in the last line we use the definition of the truncated Laplace distribution. Next, we leverage the fact that the sensitivity of $f$ is bounded by $\Delta$ (i.e., $|f(\x) - f(\x')| \leq \Delta$). Thus, we have:

\begin{align*}
	\Pr\left[M(\x) \in S\right]  & \leq 
	\int_{u \in S \cap I' } \frac{e^{-\frac{\eps \left(|u-f(\x')| - |f(\x) - f(\x')|\right)}{\Delta}}}{\Psi} \,du
	+ 
	\Pr\left[M(\x) \not \in I'\right]
	\\ & \leq 
	\int_{u \in S \cap I' } \frac{e^{-\frac{\eps \left(|u-f(\x')| - \Delta\right)}{\Delta}}}{\Psi} \,du
	+ 
	\Pr\left[M(\x) \not \in I'\right]
	\\ & \leq e^\eps \cdot 
	\int_{u \in S \cap I' } \frac{e^{-\frac{\eps \left(|u-f(\x')|\right)}{\Delta}}}{\Psi} \,du
	+ 
	\Pr\left[M(\x) \not \in I'\right]\,.
\end{align*}
In the first term above, $u$ is in $I'$. Hence, by the definition of the truncated Laplace mechanism, we get:
\begin{align*}
	\\ & \leq e^\eps \cdot 
	\int_{u \in S \cap I' } \Pr\left[f(\x') + Z = u\right] \,du
	+ 
	\Pr\left[M(\x) \not \in I'\right]
	\\ & = e^\eps \cdot \Pr\left[M(\x') \in S \cap I'\right]
	+ 
	\Pr\left[M(\x) \not \in I'\right]
	\\ & \leq e^\eps \cdot \Pr\left[M(\x') \in S\right]
	+ 
	\Pr\left[M(\x) \not \in I'\right]
	\,.
\end{align*}

Now, to prove our statement, it suffices to show that the probability of $M(\x)$ landing outside of $I'$ is at most $\delta$. Given the symmetry in the Laplace noise, without loss of generality, assume $f(\x) \geq f(\x')$; Otherwise, one can multiply $f(\x)$ and $f(\x')$ and $Z$'s by $-1$, and the proof remains unchanged. We know that  $M(\x)$ must be in $[f(\x) - \Zmax, f(\x) + \Zmax]$. Now, given that $\x$ and $\x'$ are two neighboring datasets, we know that $f(\x')$ is at least $f(\x) - \Delta$. Therefore, the probability of $M(\x)$ being ouside of $I'$ is bounded by the probability of $Z \in [\Zmax-\Delta, \Zmax]$ where $Z$ is a drawn from $\Lap(\lambda,\Zmax)$. Note that every $Z$ in this interval is positive, so we obtain:
	\begin{align*}
		\Pr\left[M(\x) \not \in I'\right] & \leq \Pr\left[Z \in [\Zmax-\Delta, \Zmax]\right] = \frac{1}{\Psi} \cdot \int_{z = \Zmax - \Delta}^{\Zmax}  e^{-\frac{\eps\,|z|}{\Delta}} dz
		\\
		&= \frac{1}{\Psi} \cdot \int_{z = \Zmax - \Delta}^{\Zmax}  e^{-\frac{\eps\,z}{\Delta}} dz = \frac{\Delta \left(e^{-\frac{\eps\, (\Zmax - \Delta)}{\Delta}} - e^{-\frac{\eps\, \Zmax}{\Delta}}\right)}{\Psi \cdot \eps}
		\\
		& = \frac{e^{-\frac{\eps \Zmax}{\Delta}} \cdot \left(e^\eps - 1\right)}{2 \cdot \left(1- e^{-\frac{\eps \Zmax}{\Delta}}\right)} 
		\leq \frac{e^{-\frac{\eps\Zmax}{\Delta}}}{1-e^{-\frac{\eps\Zmax}{\Delta}}} \leq \delta\,. 
	\end{align*}
Hence, the proof is complete. 
\end{proof}

\end{document}